\DeclareMathAlphabet{\mathpzc}{OT1}{pzc}{m}{it}
\algrenewcommand\alglinenumber[1]{\scriptsize #1:}
\definecolor{mygray}{gray}{0.6}
\newtheorem{definition}{Definition}
\newtheorem{lemma}{Lemma}
\newtheorem{theorem}{Theorem}
\newtheorem*{theorem*}{Theorem}
\newtheorem{remark}{Remark}
\newcommand{\pr}[1]{\mathrm{Pr}#1} 
\newcommand{\mc}[1]{\mathcal{#1}}
\newcommand{\ms}[1]{\mathsf{#1}}
\newcommand{\mb}[1]{\mathbf{#1}}
\newcommand{\ekem}{\mathsf{kem.Enc}}
\newcommand{\edem}{\mathsf{SE.Enc}}
\newcommand{\dkem}{\mathsf{kem.Dec}}
\newcommand{\gkem}{\mathsf{kem.Gen}}
\newcommand{\ck}{\mathsf{ckem}}
\newcommand{{\cksch}}{c\mathcal{KEM}} 
\newcommand{\ckg}{\mathsf{ckem.Gen}}
\newcommand{\cke}{\mathsf{ckem.Enc}}
\newcommand{\ckd}{\mathsf{ckem.Dec}}
\newcommand{\pk}{\mathsf{pkem}}
\newcommand{{\pksch}}{p\mathcal{KEM}} 
\newcommand{\pkg}{\mathsf{pkem.Gen}}
\newcommand{\pke}{\mathsf{pkem.Enc}}
\newcommand{\pkd}{\mathsf{pkem.Dec}}
\newcommand{\remove}[1]{}
\newcommand{\mg}{\color{magenta}}
\newcommand{\rd}{\color{red}}
\newcommand{\kem}{\mathsf{kem}}
\newcommand{{\kemsch}}{\mathcal{KEM}} 
\newcommand{\kemg}{\mathsf{kem.Gen}}
\newcommand{\keme}{\mathsf{kem.Enc}}
\newcommand{\kemd}{\mathsf{kem.Dec}}
\newcommand{\keml}{\mathsf{kem.Len}}
\newcommand{\ikeml}{\mathsf{ikem.Len}}
\newcommand{\pkeml}{\mathsf{pkem.Len}}
\newcommand{\ikem}{\mathsf{ikem}}
\newcommand{\ikemsch}{i\mathcal{KEM}}       
\newcommand{\ikemg}{\mathsf{ikem.Gen}}
\newcommand{\ikeme}{\mathsf{ikem.Enc}}
\newcommand{\ikemd}{\mathsf{ikem.Dec}}
\newcommand{\ike}{\mathsf{HE}}
\newcommand{\gike}{\mathsf{HE.Gen}}
\newcommand{\eike}{\mathsf{HE.Enc}}
\newcommand{\dike}{\mathsf{HE.Dec}}
\newcommand{\ddem}{\mathsf{SE.Dec}}
\newcommand{\dem}{\mathsf{SE}}
\newcommand{{\demsch}}{\mathcal{SE}}      
\newcommand{\ldem}{\mathsf{SE.Len}}
\newcommand{\psucc}{$P_{Succ}(k,c)$}
\newcommand{\fpsucc}{$P_{Succ}$}
\newcommand{\fdpsucc}{$P_{Succ}^{q_d}$}
\newcommand{\ksucc}{$P_{S}$}
\newcommand{\ckem}{\mathsf{Comb}}
\newcommand{\dckem}{\mathsf{Comb.Dec}}
\newcommand{\eckem}{\mathsf{Comb.Enc}}
\newcommand{\gckem}{\mathsf{Comb.Gen}}
\newcommand{\ind}{\mathrm{KIND}}
\newcommand{\pind}{\mathrm{pKIND}}
\newcommand{\eps}{\epsilon}
\newcommand{\x}{\mathbf{x}}
\newcommand{\X}{\mathbf{X}}
\newcommand{\y}{\mathbf{y}}
\newcommand{\Y}{\mathbf{Y}}
\newcommand{\z}{\mathbf{z}}
\newcommand{\Z}{\mathbf{Z}}
\newcommand{\e}{\mathbf{e}}
\providecommand{\eqref}[1]{(\ref{#1})}
\newtheorem{construction}{Construction}
\newtheorem{clam}{Claim}
\newtheorem{coro}{Corollary}
\newcommand{\msa}{\mathsf{D}}
\newcommand{\qres}{  quantum-resistant}
\begin{document}

\title{CCA-Secure Hybrid Encryption in  Correlated Randomness Model and  KEM  Combiners}

\author{Somnath~Panja,
        Setareh~Sharifian,
        Shaoquan~Jiang,
    and~Reihaneh~Safavi-Naini 
\thanks{Somnath Panja and Reihaneh~Safavi-Naini are with the University of Calgary, Canada.}
\thanks{Setareh Sharifian is with Intel Corporation.}
\thanks{Shaoquan~Jiang is with the University of Windsor, Canada.}}




\maketitle
\noindent
\begin{abstract}
A hybrid encryption (HE) system is an efficient public key encryption system for arbitrarily long messages.
An HE system consists of  a public key component called
key encapsulation mechanism (KEM), and a symmetric key component called data encapsulation mechanism (DEM). The HE encryption algorithm uses a KEM generated key k to encapsulate the message using DEM, and send the ciphertext together with the encapsulaton of k,  to the decryptor  who decapsulates k and uses it to decapsulate the message using the corresponding KEM and DEM components.  The KEM/DEM composition theorem proves that if KEM and DEM satisfy well-defined security notions, then HE will be secure with well defined security. 

We introduce HE in correlated randomness model  where the encryption and decryption algorithms have samples of correlated random variables that are partially leaked to the adversary. Security of the new KEM/DEM paradigm is  defined against computationally unbounded or polynomially bounded adversaries. We define iKEM and cKEM with respective information theoretic computational security,  and prove a composition theorem for them and a computationally secure DEM, resulting in secure HEs with proved computational  security (CPA and CCA) and without any computational assumption.
We construct two iKEMs that provably satisfy the required security notions of the composition theorem. 
The iKEMs are used to construct two efficient quantum-resistant HEs when used with an AES based DEM.
We also define and construct {\em combiners} with proved security that combine the new KEM/DEM paradigm of HE with the traditional public key based paradigm of HE.

\remove{A hybrid encryption (HE)  system is an efficient  public key encryption system  
for arbitrarily long messages that combines  advantages of 
public key  and  symmetric key encryption systems. 
An HE system consists of  a public key component that is called {\em  key encapsulation mechanism (KEM)},   
and a symmetric  key component that is called {\em data encapsulation mechanism (DEM)}. 
The encryption algorithm uses KEM's public key in the KEM encapsulation algorithm 
to generates a pair $(k,c_1)$  where $k$ is a random key,  and $c_1$ is an encryption of $k$.
The DEM encapsulation algorithm takes $k$ as the key and encrypts a message $m$, and produces the ciphertext $c_2$. The HE ciphertext is the pair $(c_1, c_2)$, and can be decrypted by the corresponding decapsulation components of KEM and DEM in the HE decryption algorithm, to recover $m$.
It has been proved 
that  if  KEM and DEM satisfy 
certain  notions of security,  
then the resulting HE will satisfy well-established notions of security for public key encryption systems.  
%
KEM/DEM paradigm  has been widely studied and used for  secure communication over the Internet.

We motivate and introduce HE in {\em correlated randomness} setting  
where the encryption and  decryption algorithms have access to correlated random variables that are partially leaked to the adversary. 
%
Security  of the new KEM/DEM paradigm can be defined against computationally unbounded, or  polynomially bounded, adversaries.
 We define iKEM  with information theoretic security and cKEM with computational security in correlated randomness setting, and prove a composition theorem for these KEMs with  a DEM  that  is computationally secure.  These compositions result in efficient and 
secure HEs with proved computational  CPA (Chosen Plaintext Attack) and CCA security (Chosen Ciphertext Attack) security but without any computational assumption. 
We construct two  iKEMs that   satisfy the required security notions 
for the  composition theorem. The constructions  start from an existing secure one-message key agreement  protocol with information theoretic security, and  modify it to satisfy the more demanding security notions of iKEMs. The iKEMs are  used to construct  two efficient \qres~HEs  when  used with  an AES based DEM.

 We also show how to combine the new KEM/DEM paradigm of HE with the traditional public key based paradigm of HE, by  defining 
{\em KEM combiners}
that combine 
a public-key KEM with an iKEM  
such that the resulting  KEM  is  as secure as any of the two component KEMs. 
We discuss our results and their applications,  and outline directions for future work.}

\end{abstract}
\begin{IEEEkeywords}
Post-quantum cryptography, Hybrid encryption, Correlated randomness model, 
Key Encapsulation Mechanism.
\end{IEEEkeywords}

\IEEEpeerreviewmaketitle

\section{Introduction}

A {\em hybrid Encryption (HE) system}   is a public-key encryption system with two components: a public-key  {\em  key encapsulation mechanism (KEM)}  that generates a pair $(k, c_1)$ where $k$ is a secret key and $c_1$  is the  encapsulation  of $k$  under the KEM's public-key,  and 
  an  efficient  symmetric key 
component  called  {\em  data encapsulation
mechanism (DEM)} that will use  $k$  to efficiently encrypt  an arbitrary long message $m$, and generate the ciphertext $c_2$.
  Decryption algorithm  has the private key of KEM   and takes $(c_1,c_2)$ as input. It   decapsulates   $c_1$ to find  $k$  and  uses it to decrypt  $c_2$, and recover $m$.
  This is an attractive construction that effectively provides a computationally efficient public key encryption system  for arbitrarily long messages, by using the  computationally expensive public key   KEM  once,  and 
  encrypt long messages by employing a computationally efficient  DEM that can  be constructed using efficient and standardised secure symmetric key ciphers such as AES (Advanced Encryption Standard)  in one of the known modes of operation such as  counter mode.
Cramer and Shoup     \cite{Cramer2003DesignAA} 
 defined KEM/DEM paradigm, formalized security of KEM  and DEM, 
 and 
proved a general composition theorem that   shows  that   if KEM is CCA (chosen ciphertext attack) secure, and DEM is  a one-time symmetric key encryption with CCA security, 
 then    the resulting hybrid 
 encryption system will be CCA secure (see section \ref{sec:pre} for definitions).  
This level of security is known as the gold standard of security for modern encryption systems. 
KEM/DEM paradigm has been widely studied and more refined 
 notions of security for KEM have been proposed and the  corresponding composition theorems for  HE have been proved \cite{HERRANZ20101243,kiltz2006chosen}.  
There is  a large body  of work on the construction of  KEM  \cite{kurosawa2004new,herranz2006kurosawa,Masayuki2005,kiltz2006chosen,Shacham2007} that are 
all  public key based and rely on
 computational assumptions. KEM has been  widely used   
for securing communication over the Internet including 
as part of TLS  (Transport Layer Security) \cite{SchwabePQTLS2020}. 

{\em Quantum-resistant security} of an HE system requires  \qres~ security of KEM and DEM. Shor's invention of efficient quantum algorithms for integer factorization and discrete logarithm problems \cite{Shor1994} has made KEM constructions that rely on these assumptions, and constitute all existing KEMs in practice, insecure. 
KEM has been one of the first cryptographic  primitives that has been standardized by NIST (National Institute of Standards and Technology) post-quantum cryptography standardization effort
\cite{bos2018crystals,nistpqccompet}.
%
 %
 DEM component of an HE system  uses 
symmetric  block cipher algorithms such as AES, for which the main known quantum attack is
the speed-up for secret key search
that is offered by the Grover's algorithm \cite{grover1996}.
This speed-up however can be compensated by doubling the length of the secret key and so the research on
\qres~security of
 KEM/DEM paradigm  has primarily  focused on the \qres~ security of KEM.
%
%
%

{\em  Information theoretic key agreement.}
 Our main observation is that KEM is  effectively a  one-way secret key agreement (OWSKA) algorithm, a widely studied topic in information theoretic cryptography,  
but  with a somewhat different
definition of security.

Information theoretic key agreement
was first introduced 
by Maurer~\cite{Maurer1993} and Ahlswede~\cite{Ahlswede1993} (independently) 
in what is known as the 
{\em source model}, where Alice and Bob have samples  of two correlated random variables $\X $ and $\Y $ that are distributed according to $P_{\X\Y\Z}$ and  are partially leaked to Eve through the variable $\Z $. 
The  probability distribution
$P_{\X\Y\Z}$ is public but the concrete samples $\x$, $\y$ and $\z$ are private to Alice, Bob and Eve, respectively. 
There is a long line of research on deriving fundamental results on the possibility  of secret key agreement, bounds on rate and capacity of information theoretic key agreement in this model and its variations, 
and providing constructions for    optimal (capacity achieving) systems~\cite{holenstein2005one,holenstein2006strengthening,renes2013efficient,Chou2015a}, together with  the finite  length analysis of the constructions \cite{holenstein2006strengthening,sharif2020}.


Information theoretic key agreement  has also been considered in fuzzy extractor (FE) setting \cite{eurocryptDodisRS04} where Alice and Bob, respectively, have samples $w$ and $w'$ of the same randomness source, satisfying $dist(w,w')\leq t$ where $dist(.,.)$ is a distance function.
FE setting can be seen as a special case of the source model 
where  $\x $  and  $\y $ are samples  of the same source with a guaranteed upper bound on the distance between the two samples, and there is no initial information leakage to the adversary ($\Z =0$). One of the main application areas of FE is key establishment using sources that employ biometric data as the source of randomness. 
Security model of 
 FE is in part influenced by capturing attacks on biometric systems in practice \cite{DodisORS08,boyen2004reusable,boyen2005secure,dodis2006robust,canetti2016reusable}.

A third important 
direction in the study of information theoretic key agreement  is quantum key distribution (QKD) protocols that use quantum theoretic assumptions as the basis of security.
Protocols such as BB84 QKD  \cite{BB84}, use communication  over a quantum channel  to generate correlated random variables  between  two parties, which is later {\em reconciled}  into a shared  secret string that is partially leaked to Eve, and is  used to extract a shared (close to) random key 
between the two parties. 

In all above settings, there is an initial correlated randomness between Alice and Bob that is leveraged to  establish an information theoretically secure  shared secret   key. 
Definitions of security in these settings range from security against a passive eavesdropping adversary \cite{Maurer1993,Ahlswede1993,holenstein2005one,holenstein2006strengthening,sharif2020,eurocryptDodisRS04,boyen2004reusable,canetti2016reusable}, to security  against an active attacker with different levels of access to the system and communication channels~\cite{Maurer1997authencation,maurer2003authen2,ska2023,dodis2006robust}.   
 In all cases, security is against a computationally unbounded adversary and so  the protocol remains secure against  an adversary with access to a quantum computer.
 
 Extending secure key agreement protocols with information theoretic security,
to the establishment of  {\em secure  message transmission 
channels} 
using KEM/DEM approach, 
will allow 
the wealth of research and development in information theoretic key agreement protocols  to be used in  \qres~ cryptographic systems.
\remove{ Our goal is  to  develop 
a 
KEM/DEM-like paradigm for correlated randomness setting, and establish 
a secure message transmission channel using KEMs that are inspired by information theoretic key agreement.}

\noindent
{\em Cryptographic combiners}  combine 
cryptographic schemes with the same functionality 
into a single scheme 
with the guarantee that  the combined scheme is secure 
 if at least one of the component schemes is secure.
 Combiners mitigate the risk of 
 possible design flaws, attacks and breaks of each of the component cryptographic schemes, and provide  robustness for security systems. 
\remove{
Combining the two KEM/DEM paradigm can reduce to combing KEMs  if suitable composition theorem for the new setting can be proved.

They can be also be used to deflect the need to choose among primitives that rely on less studied computational assumptions by combining them into a single primitive, hence providing higher trust in  security systems.
}
Combiners for public key KEMs have been introduced,   their security properties have been formalized,  and secure constructions for KEM combiners have been proposed    
\cite{giacon2018kem,bindel2019hybrid}.
Cryptographic combination of  public key KEM  with KEMs with information theoretic security will seamlessly 
 integrate the new KEMs into the existing applications of KEM and expand  the range of KEMs that are available in designing cryptographic systems.
 
%
\subsection{ Our Results}
 We propose  KEM/DEM paradigm in {\em correlated randomness model} (which in cryptography, is  also referred to as {\em preprocessing model}\footnote{This is because correlated randomness is generated in an initialization stage and before the actual algorithms start.}).
 We 
 define security and prove a composition theorem that relates security of the HE to the security of the  KEM and DEM components.  

  \vspace{0.5 mm}
 {\em  Notation: To make distinction between traditional public key KEMs and KEMs in the new setting, we use {\em pKEM} to denote a KEM scheme in preprocessing model, and reserve {\em iKEM} and {\em cKEM} to refer to the information theoretic and computationally secure versions of pKEM.}

  The new paradigm allows KEM and DEM components to be defined with security against a computationally unbounded, or computationally bounded adversary.  While one can define pKEM and associated DEM with security against information theoretic and computational adversaries,  
  our focus  is on the design of an efficient  \qres~ encryption system (HE) that can be used in practice, and so we consider composition of iKEMs (KEMs with information theoretic security)  and 
  DEMs with computational security.
 We
design two iKEMs with proved security in our proposed security models, one with security against passive adversaries, and one with security against  active adversaries that tamper with the communication channel.  The two iKEMs will have CEA (Chosen Encapsulation Attack) and CCA (Chosen Ciphertext Attack) security, respectively, and when used with a DEM with appropriate security  will result in an HE with CPA (Chosen Plaintext Attack) and CCA security, respectively.
We also define and construct cryptographic combiners that combine a  public key KEM and an iKEM.
%
More details below.

\vspace{1mm}
\noindent
{\bf KEM/DEM  in correlated randomness model.}
A  KEM  in correlated randomness  model   is a tuple of algorithms denoted by $\pksch=  (\pkg, \pke, \pkd)$, where  $\pkg$ is a correlation generation algorithm  that  takes a distribution $P_{\X\Y\Z}$,  generates   correlated random  samples $\x$, $\y$ and side information $\z$ for Alice,  Bob, and 
Eve, respectively, 
and privately delivers the samples to the corresponding parties; 
 $\pke$ is an  {\em encapsulation algorithm} that uses the private sample of Alice and generates a pair  $(k, c_1)$, where $k$ is a random session key for DEM, and $c_1$ is an (encapsulation) ciphertext; $\pkd$ is a {\em decapsulation algorithm }  that uses $c_1$ and the private sample of Bob   to recover $k$. 
 
 Security  of KEM is defined using  
 {\em key indistinguishability games}  between a challenger and  an adversary (Figure \ref{fig1:thm1}). The adversary's power is modelled by its 
 query access
 to the {\em encapsulation oracle } and  {\em decapsulation oracle}. An oracle implements its corresponding algorithm and has access to the private information of the party that legitimately uses the algorithm, and so the encapsulation and decapsulation oracles have the private random samples of Alice and Bob, respectively. The oracles correctly answer queries of the adversary as defined by the security game.
We define these security games  similar to the corresponding ones in public key KEMs \cite{kiltz2006chosen,Cramer2003DesignAA},  with the difference that in public key KEM, the encapsulation algorithm has a public key for  encapsulation 
and so the adversary can  freely access the encapsulation algorithm, 
while in pKEM, the  encapsulation algorithm uses the private sample of Alice, and the adversary can query 
the {\em encapsulation oracle}.  
%
 A ({\em chosen encapsulation attack (CEA) }) query to the encapsulation oracle 
results in an output $(k, c_1)$.
Decapsulation  queries, also referred to as {\em chosen ciphertext attack (CCA)} queries, are the same as in public-key KEMs and allow the adversary to  verify validity of a chosen pair $(k', c')$ against the decapsulation algorithm when using the  private sample of Bob, 
and the response is either 
a key or $\perp$.
The two security notions of IND-CEA (indistinguishability against CEA) and IND-CCA (indistinguishability against CCA) capture indistinguishability of the final key from a uniform random string of the same length, when the attacker has access to CEA, or both CEA and CCA, queries 
respectively.
 %
%
%
%
Adversary can be 
 computationally unbounded (information theoretic), or its computation be  bounded by a polynomial function of the system's security parameter (computational). The number of allowed queries in the two cases are different: for information theoretic adversary the number of allowed queries is a predefined constant (system parameter), while for computational adversary, it is a  polynomial function of the security parameter of the system.
We use iKEM  to denote {\em information theoretically secure pKEMs} 
where the adversary is computationally 
unbounded, 
and use cKEM to refer to  {\em computationally secure pKEM}, 
where the adversary is computationally 
bounded.  This latter is to distinguish computationally secure pKEMs from traditional public key KEMs,  both providing security against a polynomial time adversary but cKEM using an initial correlated randomness instead of a public key.  

We define DEM and  its security
 against a computationally bounded adversary,  
 the same as DEMs  
 in  public-key setting \cite{Cramer2003DesignAA}.
 DEM  security notions are 
  variations of 
IND-CPA (indistinguishability against CPA) security and IND-CCA (indistinguishability against CCA) security for encryption systems.
%
DEM security can also be defined  against a computationally unbounded adversary.  
Our definition of  computationally secure DEM however is motivated by  our goal of constructing 
 \qres~ HE schemes that 
 use a short (constant length) key to encrypt  arbitrary long messages. 

\vspace{1mm}
\noindent
 {\bf Composition Theorem.}
 The following composition  theorem (which is a restatement of Theorem~\ref{theo:composition}) proves (computational) security of an  HE system that is obtained by the composition of a pKEM (iKEM or cKEM) and a computationally secure DEM.
 
 \begin{theorem*} 
 Let $\cksch$ and $\ikemsch$ be a 
cKEM and an iKEM, 
respectively,
and $\demsch$ denote a one-time symmetric key encryption scheme
that is compatible with the corresponding $\cksch$ or  $\ikemsch$.
Then the following composition results hold for the hybrid encryption  in preprocessing model, against a {\em computationally bounded} adversary with access  to the following queries for HE:    $q_e$    
encapsulation and  $q_d$ decapsulation queries when $\ikemsch$ is used, and  polynomially bounded number of queries  for both types of queries, when $\cksch$ is used. 
{
\begin{eqnarray*}
&& \hspace{-3.5mm}\mbox{1. IND-CEA } \cksch + \mbox {IND-OT } \demsch \rightarrow  \mbox{IND-CPA } \ike_{\cksch,\demsch}\\
&&\hspace{-3.5mm}\mbox{2. IND-CCA } \cksch + \mbox{IND-OTCCA }  \demsch   \rightarrow \mbox{IND-CCA } \ike_{\cksch,\demsch}\\
&&\hspace{-3.5mm}\mbox{3. IND-}q_e\mbox{-CEA } \ikemsch + \mbox{IND-OT } \demsch \rightarrow \mbox{IND-}q_e\mbox{-CPA } \ike_{\ikemsch,\demsch} \\
&&\hspace{-3.5mm}\mbox{4. IND-}(q_e; q_d)\mbox{-CCA } \ikemsch + \mbox{IND-OTCCA } \demsch \rightarrow 
\mbox{IND-}(q_e; q_d)\mbox{-CCA } \ike_{\ikemsch,\demsch}.
\end{eqnarray*}
}
\end{theorem*}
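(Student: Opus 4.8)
The plan is to prove all four implications by the same two-step game-hopping (hybrid) argument that underlies the classical Cramer--Shoup composition theorem \cite{Cramer2003DesignAA}, adapted to the correlated-randomness model and to the information-theoretic/computational query budgets. Recall that the hybrid scheme $\ike$ encrypts $m$ by first running the encapsulation algorithm to obtain a pair $(k,c_1)$ and then setting $c_2 \eqdef \edem(k,m)$, so the challenge ciphertext has the form $(c_1^*, c_2^*)$ with $c_2^* = \edem(k^*, m_b)$ for the challenge key $k^*$ and challenge bit $b$. I would let Game~$0$ be the real HE indistinguishability game and Game~$1$ be identical except that the challenge key $k^*$ is replaced everywhere by a fresh key $u$ drawn uniformly from the key space, independently of $c_1^*$. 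Writing $p_i$ for the probability that the adversary guesses $b$ correctly in Game~$i$, the HE advantage is bounded by $\lvert p_0 - p_1\rvert + \lvert p_1 - \tfrac{1}{2}\rvert$, where the first term is controlled by the KEM's key-indistinguishability and the second by the DEM's one-time security.

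For the two CPA implications (items~1 and~3) there are no decapsulation queries to worry about, so the two hops are clean. To bound $\lvert p_0 - p_1\rvert$ I would build a KEM adversary $\cB_1$ that receives a KEM challenge $(k^\dagger, c_1^*)$ --- where $k^\dagger$ is either the real encapsulated key or a uniform key --- embeds $c_1^*$ into the HE challenge, computes $c_2^* = \edem(k^\dagger, m_b)$, and forwards every encapsulation query of the HE adversary to its own encapsulation oracle. Then $\cB_1$ perfectly simulates Game~$0$ when $k^\dagger$ is real and Game~$1$ when $k^\dagger$ is uniform, so $\lvert p_0 - p_1\rvert$ is at most the IND-CEA advantage (for $\cksch$) or the IND-$q_e$-CEA advantage (for $\ikemsch$). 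In Game~$1$ the key $u$ is independent of everything the adversary sees apart from $c_2^*$, so a second reduction $\cB_2$ that runs a fresh KEM instance itself and forwards $(m_0,m_1)$ to its own DEM challenger bounds $\lvert p_1 - \tfrac{1}{2}\rvert$ by the IND-OT advantage of $\demsch$. Collecting the two bounds gives the stated CPA security, with the iKEM query budget $q_e$ carried through unchanged because each HE encapsulation query costs $\cB_1$ exactly one encapsulation-oracle query.

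For the two CCA implications (items~2 and~4) the same two hops apply, but both reductions must now answer decapsulation/decryption queries $(c_1', c_2')$ consistently. I would answer them by the standard split: if $c_1' \neq c_1^*$, decapsulate $c_1'$ to a key $k'$ (in $\cB_1$ this is done by calling its own decapsulation oracle, which never touches $c_1^*$) and return $\ddem(k', c_2')$; if $c_1' = c_1^*$ but $c_2' \neq c_2^*$, decrypt $c_2'$ directly under the challenge key (in $\cB_2$ this is forwarded to the DEM decryption oracle). The game's prohibition against querying the whole challenge ciphertext $(c_1^*, c_2^*)$ maps exactly onto the DEM game's prohibition against querying $c_2^*$, so $\cB_2$ never makes a forbidden query. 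Each HE decapsulation query costs $\cB_1$ at most one decapsulation-oracle query, so the $(q_e;q_d)$ budget of $\ikemsch$ is preserved, and the resulting bounds are the IND-CCA (resp. IND-$(q_e;q_d)$-CCA) advantage of the KEM plus the IND-OTCCA advantage of $\demsch$.

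The main obstacle, and the step I would treat most carefully, is the consistency of the decapsulation oracle across the first hop in the CCA cases: queries that reuse the challenge encapsulation $c_1^*$ must be answered with the (real or uniform) challenge key rather than by re-decapsulating $c_1^*$, and one must verify that this rule is exactly what both the real scheme and the KEM reduction induce, so that no distinguishing information leaks and no forbidden DEM query is generated. A second, model-specific point needing attention is the hybrid nature of the iKEM implications: the first hop is bounded by an information-theoretic quantity that holds against the unbounded-but-query-limited iKEM adversary, whereas the second hop requires the DEM reduction $\cB_2$ to be polynomial-time; this is consistent because $\cB_2$ only has to sample the correlated randomness, answer queries via the honest (efficient) decapsulation algorithm, and relay the challenge, all of which are efficient, so the final HE adversary remains computationally bounded exactly as required by the theorem.
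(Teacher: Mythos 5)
Your overall strategy is the same as the paper's: hop from the real HE game to one where the challenge key is uniform, charge the hop to the KEM's key indistinguishability, and charge the residual advantage to the DEM's one-time (CCA) security, with the query budgets carried through the reductions. For the two CPA items this is complete and matches the paper's argument (there the paper's intermediate game is vacuous, since no decryption queries are issued).

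For the two CCA items, however, there is a concrete gap at exactly the spot you flag but do not resolve. Your reduction $\cB_1$ answers a decryption query $(c_1^*, c_2')$ with $c_2'\neq c_2^*$ by decrypting $c_2'$ under the challenge key $k^\dagger$. In the \emph{real} HE game, such a query is answered by running $\pkd(r_B,c_1^*)$ and decrypting under whatever key that returns. These two rules coincide only when decapsulation of $c_1^*$ returns the encapsulated key, and a pKEM in this model is only $\eps(\lambda)$-correct (the iKEM instantiations genuinely have nonzero failure probability), so your Game~$0$ is not perfectly simulated and $\lvert p_0-p_1\rvert$ is \emph{not} bounded by the KEM advantage alone. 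The paper closes this by inserting an intermediate game $\mathrm{G}^{1\text{-}b}$ whose decapsulation oracle uses the encapsulated key $k^*$ for queries with $c_1=c_1^*$, and bounds the distance to the real game by the probability of the bad-key event $BK_{sam}=\{k:\ckd(r_B,c)\neq k\}$, i.e.\ by $\eps$ in each direction, via Lemma~6.2 of \cite{Cramer2003DesignAA}; this is the source of the $2\eps(\lambda)$ term in cases (2) and (4) of Theorem~\ref{theo:composition}. Your two-game decomposition omits this hop, so your stated CCA bounds (KEM advantage plus DEM advantage) are too strong as written; adding the extra game and the $2\eps(\lambda)$ term makes the argument coincide with the paper's.
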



IND-OT and IND-OTCCA refer to indistinguishability security for one-time secure DEM with CPA and CCA security, respectively (see Definition~\ref{demseckind}).

 In all cases, security of the hybrid encryption system  is against  a computationally bounded adversary. 
 In  (1) and (2), $\cksch$ is secure against a computationally bounded adversary who has access to polynomially bounded number of encapsulation and decapsulation queries, and the final HE satisfies CPA and CCA definition of security of computationally secure encryption systems (see Definition in section~\ref{defn:hesecurity}).
In  (3) and (4) however, $\ikemsch$ is secure against a computationally unbounded  adversary with access to a constant number of  encapsulation ($q_e$) and  decapsulation ($q_d$) queries,  and the final HE is {\em bounded CPA and CCA secure}, respectively  \cite{Cramer2003DesignAA}.

\vspace{1mm}
\noindent
{\bf Constructions of iKEM.}
In section \ref{sec:ikem-inst}, we consider the case that the correlated randomness 
 is obtained by repeated sampling a public distribution, 
 \remove{
 $P_{\X\Y\Z}$, where the correlated random variables of Alice, Bob and Eve are $\X$, $\Y$ and $\Z$, respectively, and 
 } 
 and
$P_{\X\Y\Z}=\prod_{i=1}^nP_{X_i Y_i Z_i}$
where 
$P_{X_i Y_i Z_i}=P_{X Y Z}$ for $1 \le i \le n$.
We have 
$\X =(X_1,\cdots,X_n)$, $\Y =(Y_1,\cdots,Y_n)$, $\Z =(Z_1,\cdots,Z_n)$ respectively,
with the corresponding 
private samples, $\x = (x_1,\cdots,x_n)$, $\y = (y_1,\cdots,y_n)$ and $\z = (z_1,\cdots,z_n)$.

 We propose two constructions of iKEM  for Satellite scenario,
Construction~\ref{ikem:cea} and Construction~\ref{ikem:cca}, that provide IND-CEA and IND-CCA security, respectively. Both constructions are based on the OWSKA  in \cite{sharif2020}, 
 where Alice sends a single message to Bob  over a public  authenticated channel. The message includes  information  that will be used for {\em information reconciliation} that enables  Bob to recover Alice's sample with some leakage,
and  the description of a hash function to be used for {\em key extraction}. 
The OWSKA construction uses two   universal  hash functions $h$ and $h'$  for the two tasks.
%
This construction was first proposed 
in \cite{sharifian2021information} 
for an iKEM with IND-CEA security for $q_e$  encapsulation queries (and no decapsulation queries), and used 
two {\em strongly} universal hash functions, $h$  and $h'$.
Construction~\ref{ikem:cea} has the same 
security properties
but uses 
universal hash families. The construction  slightly modifies the initialization process of iKEM that  improves the length  of the established key without affecting security. 
%
%
The encapsulation ciphertext in   Construction~\ref{ikem:cea} 
is   $c=(h(\x , s), s')$, 
 where $s$ and $s'$ are  random strings that are used in   $h$ (reconciliation) and $h'$ (extraction) respectively.
Our observation is  that $s$, the seed for $h$ that is used for reconciliation,    can stay the same in all instances of the protocol and so can be generated and distributed to all parties (including to Eve)  during initialization.

  We prove security of this construction 
{\em for any pair  $h$  and $h'$ of    universal hash functions   with appropriate parameters.}

\vspace{1mm}
The second construction is a {\em pKEM with IND-CCA security} that removes the need for a public authenticated channel 
between Alice and  Bob, and provides security against an adversary who can tamper with the KEM   
ciphertext.
 We define INT-CTXT (ciphertext integrity) for  pKEM  
 (Definition~\ref{def: integrity}) that requires 
any tampering with the cipherext to be detectable by Bob. 
Theorem~\ref{theo:comp} proves  that in preprocessing model,
a KEM that is IND-CEA and INT-CTXT secure, is IND-CCA
secure.   Our IND-CCA secure  Construction~\ref{ikem:cca} 
requires a {\em specific construction of $h$
} (whereas in Construction~\ref{ikem:cea}, $h$ can be any  universal hash function).

 To show IND-CCA security of  the Construction~\ref{ikem:cca}, we show that it  is an IND-CEA and INT-CTXT secure KEM, and so 
 it provides IND-CCA security.  
 The construction is based on the OWSKA construction in \cite{ska2023} that provides security against an active adversary.  Our iKEM construction slightly modifies the reconciliation message of the OWSKA, {\em revises and corrects its security analysis}, and obtains new parameters for the system. 
The encapsulation ciphertext in 
Construction~\ref{ikem:cca} 
 is given by  
$c=(h(\x , (s',s)),s' , s)$  which includes  $s'$  as part of the input to $h$ also.
The 
%
hash function $h$ is designed  to $(i)$ provide
{\em information reconciliation} to allow  Bob 
to securely recover Alice's sample $\x$ 
and $(ii)$   
serves as a MAC (message authentication code) 
to protect integrity of the encapsulation ciphertext. 
The decapsulation algorithm checks the {\em validity of a received encapsulation ciphertext by}
computing the hash function $h$ using the candidate key {$\hat{\x}$}
 that is derived for Alice  and the received $(s',s)$, and compares the result with $h(\x , (s',s))$.
We bound the success probability of the adversary in forging a valid encapsulation ciphertext by bounding the guessing probability of the secret keys 
that are used in the encapsulation and decapsulation algorithms.

In Theorem~\ref{mac2:ctxt},
we prove
integrity of the ciphertext (IND-CTXT) 
of the  iKEM  Construction~\ref{ikem:cca} against an active adversary with access to one encapsulation and $q_d$ decapsulation queries.  
\remove{
To provide security against $q_e  > 1$ encapsulation queries, 
a {\rd $(q_e+1)$-independent   ?? $(q_e+1)$-universal } hash function  with  additional properties can be used. 
Our construction of
}
 The 
$h$ construction in section~\ref{ikem:robcca} can be  extended to provide
security against $q_e  > 1$ queries. 
\remove{
a {\rd $(q_e+1)$-independent   ?? $(q_e+1)$-universal } 
hash function. 
}
The final extracted key length however will be reduced  (almost) linearly with higher $q_e$.   
 We  note that security against $q_e  > 1$ encapsulation queries, is only necessary if the same sample $\x$ is used in multiple instances of HE, and not required in applications such as QKD where each message transmission will use its dedicated quantum communication round (and so new values of $\x$, $\y$ and $\z$). 

\remove{ 
 {\color{olive} SEEMS ALL THIS SHOULD GO
 {\em MAC with partially leaked key.}  
 A result of independent interest is the construction of a new MAC where the key is partially leaked.  
 A traditional MAC system  is a symmetric key cryptosystem  that provides 
 integrity for a message that is sent from Alice to Bob when an active adversary tampers with the communication. 
 In section \ref{sec:pre}, we first define  a one-time MAC in correlated randomness model (Satellite scenario)
\remove{
where Alice  and Bob's secret keys are samples $\x$ and $\y$ of a correlated randomness source
with distribution $P_{\X\Y\Z}$, and the keys are 
partially leaked  to the adversary through $\z$. 
}
with security 
against {\it impersonation} and {\it substitution attacks}, and in Lemma~\ref{lemma:maclemma2}
prove  that the $h$  function in Construction \ref{ikem:cca} gives a secure one-time MAC  
for the case  where the key $\x=\y$  
is partially leaked to the adversary through $\z$, and prove security of the MAC 
against impersonation and substitution attacks.
} 
}

\vspace{1mm}
\noindent
{\bf KEM Combiners.} 
 We define   KEM combiners that securely  combine a pKEM (iKEM or cKEM) and a public key KEM.
In this combination, if at least one of the component KEMs is an iKEM, the resulting KEM will be an iKEM and secure against a computationally unbounded adversary (for fixed number of encapsulation/decapsulation queries) and so {\em a \qres~ KEM}. 
The resulting KEM will also be computationally secure with polynomial (in security parameter) number of encapsulation/decapsulation queries, as long as at least one of the component KEMs is computationally secure (public-key KEM or cKEM). 

 \remove{
 Cryptographic combiners have been used to combine { \textit{component} 
cryptographic schemes} with the same functionality 
 into a single scheme with the same functionality
 and the guarantee that the resulting scheme  is as secure as any of the   component schemes.

Combiners provide robustness against  insecurity and possible break of the component schemes. 
They can be also be used to deflect the need to choose among primitives that rely on less studied computational assumptions by combining them into a single primitive, hence providing higher trust in  security systems.
}
We give two blackbox constructions of 
KEM combiners for an iKEM and a public key KEM that satisfy the above security properties  (information theoretic  security for fixed number of queries and computational  security for polynomial number of queries as long as   the corresponding component KEM is secure).
The constructions are based on 
the \textit{XOR combiner}  and \textit{PRF-then-XOR combiner}  of \cite{giacon2018kem} that were proposed  for public key KEMs. 
We extend these constructions to our setting where one of the KEMs is an iKEM. 
 The \textit{XOR combiner}  XORs the output keys of the component KEMs.  The construction 
maintains IND-CEA security of the resulting KEM (Theorem~\ref{theo:xorcomb}) but will not result in an IND-CCA KEM when the component KEMs are IND-CCA secure. 
The \textit{PRF-then-XOR combiner} uses  
PRFs (Pseudorandom functions).
 A PRF is a family of functions indexed by a secret key, that guarantees that for a uniformly chosen key, the function output is indistinguishable from  the output of  a random function  for an adversary who can see the evaluations of the function 
 on an adaptive adversary chosen set of 
 values (see Definition~\ref{defn:prf}).  
We use two types of PRFs:  with {\em statistical 
indistinguishability for constant number of queries}, and {\em computational indistinguishability for polynomial number of queries} (see Definition \ref{defn:prf}).

 The PRF-then-XOR combiner XORs the outputs of a set of PRFs, each associated with one of the  KEMs,
 where the $i^{th}$  PRF uses the secret key $k_i$ that is the output of the $i^{th}$ KEM, 
and computes the  value of the function on an input that is the concatenation of  the  ciphertexts of all other KEMs  (except the $i^{th}$ one).
  We require PRF with statistical indistinguishability for iKEMs, and with computational indistinguishability for computational KEMs. 
  \remove{
The $i^{th}$  PRF uses the output of the $i^{th}$ KEM, denoted by $k_i$ as the secret key, with an input that is the concatenation of  the  ciphertexts of all other KEMs  (except the $i^{th}$ one).  
}
 Theorems~\ref{theo:xorcomb} and~\ref{thmprfxorcomb} respectively,  prove IND-CEA, and IND-CCA security of the resulting KEMs, 
and relate their security 
to the security 
of the component KEMs and the PRFs.

\vspace{1mm}
\noindent
{\bf Discussion.} 
 Security of KEM/DEM paradigm in correlated randomness  model  
 does not rely  on any (unproven) computationally hard problem. 
  Hybrid encryption system in this model is neither a public key, nor a symmetric key encryption system.
  Rather, it relies on the communicating parties secret inputs (that we refer to as key) 
  that are not identical,
     but are correlated,   and can be partially leaked. The final security of the HE is computational.
  The paradigm provides flexibility to consider security against computationally  unbounded or bounded adversaries for each component (KEM and DEM).
  Our focus on iKEM and computationally bounded DEM is motivated by real-life application of HE in \qres~ systems.

\remove{
\vspace{1mm}
\noindent
{\em Efficient post-quantum secure encryption.} 
The iKEM construction in Section \ref{constructionikem}    uses  standard efficient operations (e.g. multiplication over finite fields) and ??? and compared to public-key based PQ-secure KEMs in \cite{nistpqccompet}, is expected to have smaller computation cost.??? CHECK?? {\color{red} I can not find your comparison with this paper. }  Using the iKEM with a computationally secure DEM (one-time symmetric encryption using  AES) provides an efficient PQ-secure  hybrid encryption  for  resource constrained  devices when correlated data can be  obtained through sources such as physically uncloneable functions or harvesting  entropy from communication environment  (e.g. using channel state information).

  \vspace{1mm}
\noindent
{\em  Future-proof security.}
iKEMs  guarantee that 
the communication  transcript is (almost) independent (as a random variable) of the established secret key, 
    and so  captured transcripts  will not be useful for  processing in future. In other words the established key can only be learnt if the correlated strings is compromised. 
 This property is referred to as   {\em future-proof} security   for cryptographic channels
  \cite{fischlin2020information}.
  
  To provide future-proof security for HE however,  DEM must have future-proof security. A computationally secure DEM that can be based on AES however, 
  
  This property can  be obtained by using an information theoretic DEM.  In \cite{dodis2012shannon} it is proved that
  $\epsilon$-Shannon security will require 
  key lengths that are  almost the same as the message \cite{dodis2012shannon} and so DEMs with information theoretic security will have limited applications in practice. 
  
We note that traditional public-key  KEM/DEM paradigm does not provide future-proof security.
 The KEM generated secret key (and so the encrypted message) can  be recovered  by a future attacker who has captured transcripts of the KEM/DEM protocol  and so the encryption of the key, and has sufficient computational resources   to solve the underlying hard problem.
Constructing future-proof DEM by using additional assumptions such as bounded storage for the adversary, or by constructing future-proof pseudorandom DEMs are interesting directions for future work. 

%

Post-quantum secure KEM has been constructed based on lattice \cite{bernstein2017ntru,bos2018crystals} and  coding assumptions \cite{aragon2017bike,melchor2018hamming} 
  {(see \cite{paquin2020benchmarking} for quantification of computation and communication costs)}. 
 
}

\vspace{1mm}
\noindent
{\bf Organization.} 
Related work is in section \ref{ap:related}. Section~\ref{sec:pre} is preliminaries.
Section~\ref{section:preprocessing} is on KEM in preprocessing model.
Instantiations of iKEM and their security proofs 
are in Section~~\ref{sec:ikem-inst}.  
 Section~\ref{sec:combiner} is on combiners and their constructions. 
 Section \ref{sec:conclusion} provides concluding remarks.

\section{Related work}
\label{ap:related}
\remove{
Security notions of encryption systems have been widely studied in \cite{bellare1997concrete,bellare1998relations,bellare2000authenticated}.
\remove{
for public-key encryption schemes are classified in \cite{bellare1998relations}, and for symmetric encryption schemes are given under a concrete framework in \cite{bellare1997concrete}. Authenticated encryption scheme in a symmetric encryption scheme is consider in \cite{bellare2000authenticated} 
}

Ciphertext unforgeability captures adversary’s inability to generate valid  ciphertexts 
\cite{katz2006characterization}.
Ciphertext integrity was also considered for information theoretic cryptographic channels \cite{fischlin2020information}.
} 
\textit{KEM/DEM  paradigm} has been widely used  {in public key based hybrid encryption
for encrypting arbitrary length messages with proved security. 
The approach was first formalized by Cramer and Shoup \cite{Cramer2003DesignAA}
who proved that
that  a CCA secure KEM and one-time secure CCA symmetric key encryption system (DEM) 
result in a CCA secure 
hybrid  encryption system.
 \remove{
 Kurosawa and Desmedt \cite{kurosawa2004new} constructed a hybrid encryption scheme with CCA security using 
 that shows CCA-secure hybrid encryption  can be achieved using  a weaker security notion for KEM (the KEM in their construction was later shown to be not CCA secure \cite{herranz2006kurosawa}). 
}
The relation between different security notions of KEM and 
 DEM, and the resulting 
  hybrid encryption system is given in \cite{HERRANZ20101243}. 
 There are numerous generic and specific constructions of public-key KEM including \cite{dent2003designer,
 bentahar2008generic,haralambiev2010simple
 }.
   There are also constructions of  
   KEM that use hardness assumptions for which there is no known quantum algorithm. This  includes constructions  
  \cite{ding2012simple,peikert2014lattice,bos2018crystals} that use LWE (Learning with Error) and other lattice based assumptions. 
  \remove{
  (LWE based assumption) and \cite{peikert2014lattice} (lattice based) with CPA security,  and 
  \cite{bos2018crystals} ??MORE REF?? with CCA security.
  } 
  Quantum-resistant secure KEM has been part of NIST  post-quantum competition \cite{nistpqround2}  and CRYSTALS-Kyber is the standardized \qres ~ KEM \cite{bos2018crystals}. 
 KEM combiners
 \remove{
 combine multiple KEMs such that the resulting key is secure if any of the keys is secure 
 }
 are studied in \cite{giacon2018kem,matsuda2018new,bindel2019hybrid,harnik2005robust}.

All above works are in public-key setting. KEM/DEM in correlated randomness setting was introduced in  \cite{sharifian2021information} where authors 
considered passive adversaries with access to 
encapsulation queries, only.
 We extend this work in a number of ways. We consider security against active attackers and prove a general  composition theorem for CCA security of HE,  and construct  a CCA secure  
 iKEM that results in a CCA secure (\qres)  HE. 
We also construct combiners for iKEM and public-key KEM, that when used 
with a 
computationally secure DEM, result in a provably secure CCA encryption system.

\textit{Information theoretic key agreement} %
in source model was first studied by Maurer \cite{Maurer1993}, and Ahlswede and Csisz\'ar \cite{Ahlswede1993}, and has led to a long line of research on this topic and more specific related topics including information reconciliation \cite{bennett1988privacy,renner2004smooth,Holenstein2011,tomami2014}.
 OWSKA 
uses a single message from Alice to Bob to establish a shared key \cite{holenstein2005one,renes2013efficient,Chou2015a,sharif2020}.
Key establishment in correlated randomness model with security against active adversary was studied in \cite{maurer2003authen1,maurer2003authen2,Renner2004exact,kanukurthi2009key}.

\textit{Combining cryptographic primitives} was first considered by Shannon  who studied security of an encryption system that is obtained by  combining multiple encryption systems, and   
suggested 
``weighted sum'' and ``product ciphers'' to combine 
secrecy systems  to achieve 
stronger security \cite{shannon1949communication}. %
Combiners have been studied for numerous cryptographic primitives including encryption systems \cite{even1985power,maurer1993cascade}  and hash functions~\cite{Fischlin2007}. 
Robust combiners for cryptographic systems were   studied by Herzberg \cite{Herzberg20022007}
and later extended
~\cite{harnik2005robust} to include  parallel and cascade constructions, where 
constructions for various primitives including OWF (One Way Functions), signatures and MACs are given.
A {\em robust combiner} for a cryptographic primitive $\cal P$  takes multiple candidate schemes that implement  $\cal P$,  and combine them into a single scheme such that the resulting scheme remains secure even if some of the schemes become insecure. 
%
%
 In a $(k,n)$-robust combiner  \cite{harnik2005robust} security is guaranteed if at least $t$ out of $n$ constructions remains secure.
 
%
Combiners for public key  KEM  was studied  in
Giacon et al.~\cite{giacon2018kem},  
and with security against 
quantum adversaries  were  considered and constructed in Bindel et al.~\cite{bindel2019hybrid}.
}

\textit{Correlated randomness model}  has been used in cryptography 
to remove  impossibility results, including 
 key establishment in presence of computationally unbounded adversaries \cite{Maurer1993}, oblivious transfer \cite{beaver1995precomputing} and multi-party computation (MPC) protocols \cite{bendlin2011semi,ishai2013power,garg2018two}.
  Correlated randomness for key agreement can  be realized in  settings such as biometric authentication, transmission over noisy (wiretapped) channels, and using communication over quantum channel.

\section{Preliminaries}
\label{sec:pre}
  We denote 
 random variables (RVs) with upper-case letters, (e.g., $X$), and their realizations with lower-case letters (e.g., $x$). The 
 probability distribution  associated with 
   a random variable
	   $X$   is denoted by $\mathrm{P}_X(x)=\mathsf{Pr}(X=x)$, and the conditional 
    probability distribution associated with 
    $X$ given  
    $Y$ is denoted by $\mathrm{P}_{X|Y}(x|y)=\mathsf{Pr}(X=x|Y=y)$. 
    \emph{Shannon entropy} of an RV $X$ is defined  by $H(X)=-\sum_x\mathrm{P}_X(x)\log(\mathrm{P}_X(x))$.
The  \emph{min-entropy} $H_{\infty}(X)$ of a random variable $X \in \mathcal{X}$ with 
probability distribution $\mathrm{P}_X$  
is 
$H_{\infty}(X)= -\log (\max_{x} (\mathrm{P}_X({x})))$.
The \emph{average conditional min-entropy}  \cite{DodisORS08} is defined as,
$\tilde{H}_{\infty}(X|Y)= -\log \mathbb{E}_{{y} \leftarrow Y}\max_{{x} \in \mathcal{X}}\mathrm{P}_{X|Y}({x}|{y}).$
The statistical distance between two random variables $X$ and $Y$ with the same domain $\mc T$ is given by ${\rm \Delta}(X,Y)=\frac{1}{2} \sum_{v\in {\cal T}} |\Pr[X=v]-\Pr[Y=v]|$. For an $n$-bit variable $\x$, we use $[x]_{i\cdots j}$ to denote the  block of bits from the $i$th bit to the $j$th bit in  $x.$ For $\ell \in \mathbb{N}$, $U_\ell$ denotes an RV with uniform distribution over $\{0,1\}^\ell$. Vectors are denoted using boldface letters, e.g. $\X=(X_1,\cdots,X_n)$ is a vector of $n$ RVs, and its realization is given by $\x=(x_1,\cdots,x_n)$. 

 To define closeness of two families of  distributions that are indexed by $\lambda$ using  
 the notion of  indistinguishability (statistical and computational), we use  
two classes 
of functions called  ${SMALL}$ and ${NEGL}$ as defined in \cite{pfitzmann2000model}.  
\remove{
We use  
two classes 
of functions called  ${SMALL}$ and ${NEGL}$ as defined in \cite{pfitzmann2000model},  to define closeness of families of  distributions that are indexed by $\lambda$, using  
 the notions of statistical and computational indistinguishability, respectively.
 }
 The class of negligible functions ${NEGL}$, contains all functions $s: \mathbb{N} \to \mathbb{R}_{\ge 0}$  where for every positive polynomial $f(\cdot)$, $\exists n_0 \in \mathbb{N}$ such that $\forall n \ge n_0, |s(n)| < \frac{1}{f(n)}$, where $\mathbb{R}_{\ge 0}$ is the set of non-negative real numbers. 
%
 A set  ${SMALL}$ is a class of \textit{small} functions $\mathbb{N} \to \mathbb{R}_{\ge 0}$ if: $(i)$ it is closed under addition, and $(ii)$ a function $s' \in SMALL$ implies that all functions $f':\mathbb{N} \to \mathbb{R}_{\ge 0}$ with $f' \le s'$ are also in the set $SMALL$.

 {\em Universal hash functions}  have been used to generate  close to uniform RVs from 
non-uniform entropy sources with sufficient min-entropy.
This is proved in 
Leftover Hash Lemma~\cite{impagliazzo1989pseudo}. We use a variant of Leftover Hash Lemma, called Generalized Leftover Hash Lemma~\cite[Lemma~2.4]{DodisORS08}.

\begin{definition} [Universal hash family]\label{defn:uhf}
    A family of hash functions $h:\mathcal{X} \times \mathcal{S} \to \mathcal{Y}$ is called a universal hash family if $\forall x_1,x_2 \in \mathcal{X}$, $x_1 \ne x_2$, we have  $\pr[h(x_1,S)=h(x_2,S)] \le \frac{1}{|\mathcal{Y}|}$, where the probability is over the uniform choices of $\mathcal{S}$.
\end{definition}

 \begin{lemma}[Generalized Leftover Hash Lemma~\cite{DodisORS08}]
 \label{glhl}
Let $h: \mathcal{X} \times \mathcal{S} \rightarrow \{0,1\}^{\ell}$ be a universal hash family. Then for any two variables $A \in \mathcal{X}$ and $B \in \mathcal{Y}$, applying $h$ on $A$ can extract a uniform random variable whose length $\ell$ satisfies the following   $\Delta(h(A, S), S, B; U_\ell, S, B)\le \frac{1}{2}\sqrt{2^{-\tilde{H}_{\infty}(A|B)}\cdot 2^
\ell}$, where $S$ is chosen uniformly from $\mathcal{S}$. 
 \end{lemma}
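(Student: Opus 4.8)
The plan is to prove this via the standard collision-probability (R\'enyi-2) argument, reducing statistical distance to a quantity that the universal hash property controls directly, and then averaging over $B$ at the very end.

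First I would recall the elementary bound relating statistical distance from uniform to collision probability: for any random variable $W$ over a domain of size $N$, one has $\Delta(W, U_N) \le \frac{1}{2}\sqrt{N\cdot\mathrm{Col}(W) - 1}$, where $\mathrm{Col}(W)=\sum_w \pr[W=w]^2$ is the probability that two independent copies of $W$ collide. This follows from Cauchy--Schwarz (equivalently Jensen) applied to the $\ell_1$ and $\ell_2$ norms of the distribution vector, and it is the source of the square-root structure in the final bound. I would prove this first as an auxiliary fact, or cite it.

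Next, fixing $B=b$, I would compute the collision probability of the pair $(h(A,S),S)$, whose domain has size $2^\ell\cdot|\mathcal{S}|$. Drawing two independent samples $(A,S)$ and $(A',S')$ conditioned on $B=b$, a collision forces $S=S'$ (probability $1/|\mathcal{S}|$) and then $h(A,S)=h(A',S)$. Splitting on whether $A=A'$: the case $A=A'$ contributes $\mathrm{Col}(A\mid B=b)$, while for $A\ne A'$ the universal property of $h$ (Definition~\ref{defn:uhf}) bounds the hash collision by $2^{-\ell}$. Hence $\mathrm{Col}(h(A,S),S\mid B=b)\le \frac{1}{|\mathcal{S}|}\bigl(\mathrm{Col}(A\mid B=b)+2^{-\ell}\bigr)$. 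Substituting into the auxiliary bound, the term $N\cdot\mathrm{Col}$ becomes at most $2^\ell\,\mathrm{Col}(A\mid B=b)+1$, so the $-1$ cancels cleanly and I obtain $\Delta\bigl((h(A,S),S\mid B=b),(U_\ell,S)\bigr)\le\frac{1}{2}\sqrt{2^\ell\,\mathrm{Col}(A\mid B=b)}$. Since $\mathrm{Col}(A\mid B=b)\le \max_a\pr[A=a\mid B=b]=2^{-H_\infty(A\mid B=b)}$, this is a clean conditional statement.

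Finally I would average over $B$. Because the $B$-marginal is identical on both sides, $\Delta\bigl((h(A,S),S,B),(U_\ell,S,B)\bigr)=\mathbb{E}_{b\leftarrow B}\bigl[\Delta\bigl((h(A,S),S\mid B=b),(U_\ell,S)\bigr)\bigr]$, which is at most $\frac{1}{2}\,\mathbb{E}_b\bigl[\sqrt{2^\ell\,2^{-H_\infty(A\mid B=b)}}\bigr]$. The crux of the argument, and the step I expect to be the main obstacle, is to pull the expectation inside the square root via Jensen's inequality (concavity of $\sqrt{\cdot}$), giving $\mathbb{E}_b\bigl[\sqrt{2^{-H_\infty(A\mid B=b)}}\bigr]\le\sqrt{\mathbb{E}_b[2^{-H_\infty(A\mid B=b)}]}=\sqrt{2^{-\tilde{H}_{\infty}(A\mid B)}}$, where the last equality is exactly the definition of average conditional min-entropy. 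Combining yields $\frac{1}{2}\sqrt{2^{-\tilde{H}_{\infty}(A\mid B)}\cdot 2^\ell}$, as claimed. The subtlety worth flagging is that this Jensen step is precisely why the bound is phrased in terms of $\tilde{H}_{\infty}$ rather than a worst-case conditional min-entropy; carrying out the averaging in the wrong order would weaken the bound.
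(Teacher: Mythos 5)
Your proof is correct and is essentially the standard argument for this lemma: the Cauchy--Schwarz/collision-probability bound, the universality of $h$ to control the hash-collision term (with the clean cancellation of the $-1$), and Jensen's inequality to pull the expectation over $B$ inside the square root, which is exactly what produces $\tilde{H}_{\infty}(A|B)$ rather than a worst-case quantity. Note that the paper does not prove this statement at all --- it imports it verbatim from \cite[Lemma~2.4]{DodisORS08} --- and your argument matches the proof given in that reference.
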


 For $\lambda\in\mathbb{N}$,  the unary representation of $\lambda$ given by $1^\lambda$, 
is used  
to specify the running time of the algorithm as a function of $\lambda$.  For efficient algorithm, the running time is 
 a polynomial in $\lambda$.  {\em We use $\lambda$ as the {\em security parameter} of the system.}

  An algorithm $\msa$ that  takes inputs $x, y, \cdots$, and generates the output $u$, while having access to oracles 
 $\ms O_1, \ms O_2, \dots$, 
 by $u \gets\msa^{\ms O_1,\ms O_2,...}(x, y, \cdots)$. 
\remove{ We 
denote an algorithm $\msa$ that has access to oracles $\ms O_1, \ms O_2, \dots$,  takes inputs $x, y, \cdots$, and generates the output $u$ by $u \gets\msa^{\ms O_1,\ms O_2,...}(x, y, \cdots)$.}

 \vspace{1mm}
 \noindent
 {\bf KEM and DEM.}   Hybrid encryption and the notion of KEM was first introduced and formalized in \cite{Cramer2003DesignAA}.  Properties of KEM and DEM were formally defined in \cite{HERRANZ20101243}. 

\begin{definition}[KEM distinguishing advantage \cite{HERRANZ20101243}] \label{def:kem} Let $\msa = (\msa_1, \msa_2)$ be 
 an adversary and $\mathsf{kem} = (\gkem, \ekem, \dkem)$ be a KEM with security parameter $\lambda$ and key space $\{0,1\} ^{\keml(\lambda)}$. 
For $atk\in \{cpa, ,cca1,cca2\}$, 
 the \underline{k}ey \underline{ind}istinguishability (kind) advantage of $\ms {kem}$ is defined as
%
{
\begin{equation}\label{eq3}
Adv^{kind\text{-}atk}_{\ms {kem},\msa}(\lambda) \triangleq  |\pr[\ind_{\ms {kem},\msa}^{atk\text{-}0}(\lambda)= 1]-\pr[\ind_{\ms {kem},\msa}^{atk\text{-}1}(\lambda)= 1]|,
\end{equation}}

where the distinguishing game $\ind_{\ms{kem},\msa}^{atk\text{-}b}$ for $b\in\{0,1\}$ is defined in Figure.~\ref{fig:kem}.
 
\begin{figure}[H]
\begin{center}
 \begin{minipage}[c]{0.29\textwidth}

\underline{\textbf{Game} $\ind_{\ms {kem},\msa}^{atk\text{-}b}(\lambda)$} \ \ \ \ \ \ \ \ \ \ \ \ \ \ \ \ 
\begin{algorithmic}[1]
\State $(pk,sk)\stackrel{\$}\gets \gkem(1^{\lambda})$
\State $st\stackrel{\$}\gets \msa_1^{\ms O_1}(pk)$
\State $(k^*,c^*)\stackrel{\$}\gets \ekem(pk)$
\State $k_0\gets k^*$ \State $k_1\stackrel{\$}\gets\{0,1\}^{\keml(\lambda)}$
\State $b'\stackrel{\$}\gets \msa_2^{\ms O_2}(c^*,st,k_b)$
\State Return $b'$
\end{algorithmic}
\end{minipage}
\begin{minipage}[c]{0.39\textwidth}
\vspace{-3.9em}
\underline{\textbf{Oracles}$_{ }$ $\ms O_{1_{}}$ and $\ms O_2$}\\       
       \begin{tabular}{l l l}
    $atk$ & \ $\ms O_1(\cdot)$ & $\ms O_2(\cdot)$ \\
    \hline
     $cpa$& \ $\ \varepsilon$ & $\varepsilon$\\
     $cca1$& \ $
    \dkem(sk,\cdot)$ & $\varepsilon$\\
     $cca2$& \ $
    \dkem(sk,\cdot)$ \ &$
    \dkem(sk,\cdot)$
\end{tabular}
 \end{minipage}
\end{center} 
         \caption{
         The distinguishing game $\ind_{\ms {kem},\msa}^{atk\text{-}b}$, where $b\stackrel{\$}\gets\{0,1\}$, and $atk\in\{cpa,cca1,cca2\}$.
   The decapsulation  oracle  $\dkem(sk,\cdot)$  has the  private  key $sk$.  Oracle output $\ms O_i = \varepsilon, i\in\{1, 2\}$, means $\ms O_i$
returns the empty string $\varepsilon$. $\ms O_2$ 
    cannot be asked to decapsulate $c^*$.}
    \label{fig:kem}
\end{figure}
 \end{definition}
 
A KEM is
IND-CPA (CCA1 or CCA2) secure if for all
polynomial-time adversaries  $\msa$ that corresponds to $atk=cpa$,  $atk=cca1$ or $atk=cca2$, 
the advantage function (in equation~\ref{eq3})  is  negligible in ${\lambda}$. 
{\em In this paper, we only consider CCA2 security, 
and  refer to it as CCA-security. }

 \vspace{1mm}
\noindent
{\em Data Encapsulation Mechanism (DEM)}
is a symmetric key encryption algorithm. 
We use the following definition 
in \cite{HERRANZ20101243}.

\begin{definition}
[Security of DEM: IND-OT, IND-OTCCA, IND-CPA, IND-CCA1, IND-CCA2 \cite{HERRANZ20101243}]\label{demseckind} Let \\$\ms{dem} =  (\ms{dem.Gen},\ms{dem.Enc},\ms{dem.Dec})$ be a DEM scheme with security parameter $\lambda$ and key space $\{0,1\}^{{\ms{dem.Len}}(\lambda)}$,
and let $\msa = (\msa_1, \msa_2)$ be an adversary. For $atk \in\{ot,otcca, cpa, cca1, cca2\}$ and $\lambda\in \mathbb{N}$,  
the \underline{ind}istinguishability (ind) advantage of $\ms{dem}$ is defined as
%
\begin{equation}\label{eq2}
Adv^{ind\text{-}atk}_{\ms{dem},\msa}(\lambda) \triangleq  |\pr[\mathrm{IND}_{\ms{dem},\msa}^{atk\text{-}0}(\lambda)= 1]-\pr[\mathrm{IND}_{\ms{dem},\msa}^{atk\text{-}1}(\lambda)= 1]|,
\end{equation}
where the distinguishing game $\mathrm{IND}_{\ms{dem},\msa}^{atk\text{-}b}$ for $b\in\{0,1\}$ is defined in Figure.~\ref{fig:dem}.
 
\begin{figure}[!ht]
\begin{center}
\begin{minipage}[c]{0.29\textwidth}
{
\underline{\textbf{Game} $\mathrm{IND}_{\ms dem,\msa}^{atk\text{-}b}(\lambda)$} \ \ \ \ \ \ \ \ \ \ \ \ \ \ \ \ \ \ \ 
\begin{algorithmic}[1]
\State $k\stackrel{\$}\gets \ms{dem.Gen}(1^{\lambda})$
\State $(st,m_0,m_1)\stackrel{\$}\gets \msa_1^{\ms O_1}()$
\State $c^*\stackrel{\$}\gets \ms{dem.Enc}(k,m_b)$
\State $b'\stackrel{\$}\gets \msa_2^{\ms O_2}(c^*,st)$
\State Return $b'$
\end{algorithmic}}
\end{minipage}
\begin{minipage}[c]{0.59\textwidth}
\vspace{1em}
{
\parbox{0.51\textwidth}{
\underline{\textbf{Oracles}$_{ }$ $\ms O_{1_{}}$ and $\ms O_2$}   \\
\begin{tabular}{p{2em} p{13em} p{12em}}
    $atk$ &$\ms O_{1}$ &$\ms O_{2}$ \\
    \hline
     $ot$& $\varepsilon$ & $\varepsilon$\\
     $otcca$& $\varepsilon$&$ {\ms{dem.Dec}}(k,\cdot)$\\
     $cpa$&$\ms{dem.Enc}(k,\cdot)$ &$\varepsilon$\\
     $cca1$&$\{\ms{dem.Enc}(k,\cdot),{\ms{dem.Dec}}(k,\cdot)\}$& $\varepsilon$\\
     $cca2$&$\{\ms{dem.Enc}(k,\cdot),{\ms{dem.Dec}}(k,\cdot)\}$& $\{\ms{dem.Enc}(k,\cdot),{\ms{dem.Dec}}(k,\cdot)\}$\\
\end{tabular}
}
}
\end{minipage}
\end{center}
\caption{DEM distinguishing game. Here, $\ms{dem.Enc}(k,\cdot)$ and $\ms{dem.Dec}(k,\cdot)$ are  encryption and  decryption oracles  with 
 key $k$, respectively, and $\varepsilon$  denotes an empty string.}
 \label{fig:dem}
\end{figure}

A {\em DEM is $\sigma(\lambda)\text{-}IND\text{-}ATK$ for $ATK\in\{OT,OTCCA,CPA,CCA1,CCA2\}$} if for \textit{all}  polynomial-time adversaries $\msa$, $Adv^{ind\text{-}atk}_{\ms{dem},\msa}(\lambda)\leq \sigma(\lambda)$, where $\sigma(\cdot)$ is a non-negative negligible function in $\lambda$.
\end{definition}

 The formalization and construction of HE in  \cite{Cramer2003DesignAA} uses \textit{one-time} symmetric key encryption schemes with a specific security definition (adversary with access to decryption oracle only). 
 The  one-time symmetric key encryption that is defined below, is a DEM with OTCCA security.

A {\em one-time symmetric key encryption }$\dem=(\edem,\ddem)$ with  security parameter $\lambda$ and 
the  key space $\{0,1\}^{\ldem(\lambda)}$  consists of two  deterministic\footnote{Thus, {for all $k\in\{0,1\}^{\ldem(\lambda)}$ and
  $m \in \{0,1\}^* $, $\pr [\ddem\big(k,\edem(k,m)\big) = m] = 1$.}} algorithms such that: i) the encryption algorithm $\edem(k,m)$  encrypts a message $m\in\{0,1\}^*$ under a uniformly chosen key $k\in\{0,1\}^{\ldem(\lambda)}$ and outputs a ciphertext $c$, and ii) the decryption algorithm $\edem(c,k)$ that decrypts the ciphertext $c$ using  the key $k$, and either recovers the message $m$,  or outputs a special rejection symbol $\perp$.

Security of $\dem$ 
is tailored for its application in  hybrid encryption systems,  and matches the OTCCA security in Definition \ref{demseckind}.
\remove{{\color{olive} REMOVE
 One-time security against a \textit{passive attack (or OT security) 
}  is in terms of indistinguishability of   two equal size messages $m_0$ and $m_1$ that are chosen by the adversary  when it is   given the ciphretext of one of the two (randomly chosen) plaintexts, and without having access to any  oracle queries.
Security against \textit{active attacks} 
is defined similarly, to the case of 
but  allows the adversary to have access to ciphertext queries  before and after it receives  the challenge ciphertext.
(See Definition~\ref{demseckind}.)
}}

\section{KEM in correlated randomness  model}
\label{section:preprocessing}
 A {\em KEM in correlated randomness model} (also called {\em preprocessing model}) 
has two phases.  In 
the {\em initialization phase} that is 
also called {\em offline phase},
Alice, Bob  and Eve, respectively,    privately receive   $r_A$,  $r_B$ and $r_E$, that is obtained by sampling  
a public 
 joint distribution $P_{\X\Y\Z}$ (e.g. an efficient probabilistic experiment). In the \textit{online phase}  
 Alice and Bob use their private values  in the   encapsulation and decapsulation algorithms, respectively, to  obtain a shared  key. 
Here $r_E$ represents Eve's initial information about Alice and Bob's samples.

\begin{definition}[KEM in Preprocessing Model (pKEM)]  \label{def:ikem} 
KEM  in preprocessing model (pKEM) with security parameter $\lambda$,  joint distribution $P_{\X \Y \Z }$, and
key space $\ms{KeySP_{\pk}}(\lambda)=\{0,1\}^{\pkeml(\lambda)}$, is a triple of algorithms \\ $\pk=(\pkg,\pke,\pkd)$,  where $\pkg(1^\lambda,P_{\X \Y \Z })$ is a {\em randomized generation algorithm} that produces private samples $(r_A,r_B,r_E)$ that are privately given to the corresponding parties,   $\pke(r_A)$ is the {\em randomized encapsulation algorithm} that outputs a pair of ciphertext and key $(c,k)$ for $c\in \mc C$ and $k\in\{0,1\}^{\pkeml(\lambda)}$, and $\pkd(r_B,c)$ is the deterministic decapsulation algorithm that outputs a  key $k$ or a symbol $\perp$ (for an invalid ciphertext).
\end{definition}
%


\noindent
{\bf Correctness.} 
A pKEM is $\eps(\lambda)$-correct if  for all $\lambda\in\mathbb{N}$ and 
 $(r_A,r_B, r_E)\leftarrow \pkg(1^\lambda,P_{\X \Y \Z })$, \\
$\text{Pr}[\pkd(r_B, c)\neq \pke(r_A).{key}]\leq \eps(\lambda)$, where $\eps:\mathbb{N}\to [0,1)$
is a \textit{small} function of $\lambda$,  and $\pke(r_A).{key}=k$ and 
 the probability is over  all  random coins of $\pke(\cdot)$ and $\pkg(\cdot)$.


\vspace{1mm}
\noindent
{\bf Security of pKEM.}  
We consider three types of attacks: One-time attack (OT), Chosen Encapsulation  Attack  (CEA),  
and  Chosen  Ciphertext  Attack (CCA),  specified by access to the 
 encapsulation  and  decapsulation oracles $\pke(r_A,\cdot)$ and $\pkd(r_B,\cdot)$,  respectively. 
The  corresponding  security  notions  are  denoted  by IND-OT,  IND-CEA 
and  IND-CCA,  respectively. 
 An encapsulation  query to $\pke(r_A,\cdot)$ is a call to generate a key and ciphertext pair $(c,k)$ and 
 {does not take any input from the adversary}.
For a query to $\pkd(r_B,\cdot)$,  the attacker chooses  a ciphertext $c$, 
 and receives the corresponding   key $k$, or $\perp$.

 \begin{definition} [pKEM distinguishing advantage] \label{def:gkem}Let $\pk = ( \pkg, \pke, \pkd)$  be a pKEM 
and let $\msa = (\msa_1,\msa_2)$ be a distinguisher.
 The \underline{p}reprocessing  \underline{k}ey \underline{ind}istinguishability advantage (pkind) 
  is denoted by $ Adv^{pkind\text{-}atk}_{\pk,\msa}(\lambda)   $ and defined as follows:
\begin{eqnarray}\label{eq:gkemadv} 
   |\pr[\pind_{\pk,\msa}^{atk\text{-}0}(\lambda)= 1]-\pr[\pind_{\pk,\msa}^{atk\text{-}1}(\lambda)=1]|,
\end{eqnarray}
%
where the distinguishing game $\pind_{\pk,\msa}^{atk\text{-}b}$ for   a random bit $b\stackrel{\$}\gets\{0,1\}$, is defined in Figure.~\ref{fig1:thm1}.
 \end{definition}

\begin{figure}[!ht]
\small
\parbox{0.65\textwidth}
{\underline{\textbf{Game} $\pind_{\pk,\msa}^{atk\text{-}b}(\lambda)$}  \ \ \ \ \ \ \ \ \ \ \ \ \ \ \ \ \ \ \ \ \ \ \ \ \ \   
 \underline{\textbf{Oracles}$_{ }$ $\ms O_1$ and $\ms O_{2_{}}$}
\begin{multicols}{2}
\begin{algorithmic}[1]
  \State $(r_A ,r_B, r_E)\stackrel{\$}\gets \pkg(1^\lambda,
{P_{\X \Y \Z }})$ 
  \State {$st_1\stackrel{\$}\gets \msa_1^{\ms O_1}(r_E)$}
\State {$(k^*,c^*)\stackrel{\$}\gets \pke(r_A)$}
\State {$k_0\gets k^*$
\State $k_1\stackrel{\$}\gets \{0,1\}^{\pkeml(\lambda)}$}
\State $b'\stackrel{\$}\gets \msa_2^{\ms O_2}(st_1,c^*,k_b)$
\State Return $b'$
\end{algorithmic}
     \columnbreak
 {
\begin{tabular}{l l l}
\centering
    $atk$ &$\ms O_1(\cdot)$ & $\ms O_2(\cdot)$ 
    \vspace{.3em}\\
    \hline
     $ot$&$\varepsilon$ & $\varepsilon$\\
     $cea$&$\ms {\pke}(r_A,\cdot)$ & $\ms {\pke}(r_A,\cdot)$\\
     $cca$&$\{\ms {\pke}(r_A,\cdot),$ $\ms {\pkd}(r_B,\cdot)\}$&$\{\ms {\pke}(r_A,\cdot),$ $\ms {\pkd}(r_B,\cdot)\}$\\
\end{tabular}
}
\end{multicols}
}
    \caption{
    The security game $\pind_{\pk,\msa}^{atk\text{-}b}$   where
     $b\stackrel{\$}\gets\in\{0,1\}$ and $atk\in\{ot,cea,cca\}$. 
    Here $\ms O_1(\cdot)$ and $\ms O_2(\cdot)$  are oracles that are accessed before and after the challenge is seen, respectively.  $\ms O_i = \varepsilon$,
for $i\in\{1, 2\}$, means $\ms O_i$
returns the empty string $\varepsilon$.
The number of queries for computational (resp. unbounded) adversaries will be a polynomial in $\lambda$ (resp. constant number $q_e$ encapsulation and $q_d$ decapsulation queries).  
    The adversary $\msa_2$  cannot 
    ask 
    $c^*$ to decryption oracle.}
    \label{fig1:thm1}
\end{figure}

For $\text{ATK}\in\{\text{OT,CEA,CCA}\}$, a pKEM is $\sigma(\lambda)$-IND-ATK secure   if $Adv^{pkind\text{-}atk}_{\pk,\msa}(\lambda)$ is bounded by $\sigma(\lambda)$ for $\text{atk}\in\{\text{ot,cea,cca}\}$, respectively,  where $\sigma:\mathbb{N}\to [0,1)$
is a \textit{small} function of $\lambda$.
The adversary 
 $\msa$  may be   computationally $(i)$ bounded, or  $(ii)$   unbounded. 
  We call the KEM in the former case a \textit{computational KEM (cKEM)}, and in the latter case 
 an \textit{information theoretic KEM (iKEM)}, both in preprocessing model.  For a secure cKEM, $\sigma(\cdot) \in NEGL$ and for a secure  iKEM $\sigma(\cdot) \in SMALL $. 

\begin{remark}[iKEM with  bounded-query 
security]\label{remark:bounded}The number  of queries when the adversary is computationally bounded (Definition \ref{def:gkem})  is a 
 polynomial 
  in  $\lambda$. 
We  define  \textit{$q$-bounded} adversaries 
for  iKEM, where the number of queries is bounded by a {\em known predetermined} polynomial in $\lambda$. 
\textit{$q$-bounded}  CCA  security  for public-key encryption   has been considered in \cite{cramer2007bounded}  to overcome  impossibility results that hold for general CCA encryption.
In iKEM, the bound on the number of queries is because of the adversary's unlimited computation power.
Indistinguishability security against a  $q_e$-bounded CEA adversary with access to at most $q_e$ encapsulation queries is denoted by IND-$q_e$-CEA  security. Similarly,    
IND-$(q_e;q_d)$-CCA  security  is defined against  an attacker that is $q_e$-bounded  for encapsulation queries and $q_d$-bounded  for decapsulaton queries, where  the  queries can be asked according to the distinguishing game of Figure.~\ref{fig1:thm1}.

\end{remark}

\subsubsection{Ciphertext Integrity (INT-CTXT) in preprocessing model.}\label{sec:intctxt}
\textit{Ciphertext integrity} (INT-CTXT)  requires that the adversary's tampering of the ciphertext be detected by a high probability.
 Ciphertext integrity was defined by Bellare et al.  \cite{bellare2000authenticated} for symmetric  key encryption systems and it was  proved that in symmetric key encryption systems,
IND-CPA security together with INT-CTXT security implies IND-CCA security 
(\cite[Theorem 3.2]{bellare2000authenticated}). 
In \cite{katz2006characterization}, the notion of ciphertext {\em existential unforgeability} is proposed  and  a composition theorem (\cite[Theorem 1]{katz2006characterization}) is proved that shows existential unforgeability of the ciphertext together with CPA security of the encryption system, leads to CCA security of the encryption system.
In the following we define integrity for KEM in preprocessing model, and prove a
composition theorem to obtain CCA security.

\begin{definition}[pKEM ciphertext integrity]
A   pKEM $\pk=(\pkg, \pke, \pkd)$ with security parameter $\lambda$,  initial joint distribution 
$P_{\X \Y \Z }$, and  the key space $\ms{KeySP}(\lambda)=\{0,1\}^{\pkeml(\lambda)}$
 provides 
ciphertext integrity  (INT-CTXT), if for 
all  initial  correlated samples $\big(r_A,r_B,r_E\big)$ (generated by $\pkg(1^{\lambda},P_{\X \Y \Z })$), and all adversaries $\ms A$ with access to the encapsulation  and  decapsulation queries, the \underline{k}ey \underline{int}egrity advantage defined as $Adv^{kint}_{\pk, \ms 
A}(\lambda)\triangleq\pr[
\mathrm{KINT}_{\pk,\ms A}=1]$ 
is  upper bounded by $\delta(\lambda)$,  a \textit{small} function of $\lambda$, where the integrity game $\mathrm{KINT}_{\pk,\ms A}$ is given in Figure.~\ref{fig:int}.

\label{def: integrity} 
\end{definition}
\begin{figure}[h!]
\begin{center}
  \begin{minipage}[c]{0.35\textwidth}
  \parbox{0.99\textwidth}
{\underline{\textbf{Game} $\mathrm {KINT}_{\pk,\ms A}(\lambda)$} 
\begin{algorithmic}[1]
  \State $(r_A ,r_B, r_E)\stackrel{\$}\gets \pkg(1^{\lambda},
{P_{\X \Y \Z }})$ 
  \State {$\hat{c}\stackrel{\$}\gets \ms A^{\pke(r_A,\cdot),\pkd(r_B,\cdot)}(r_E)$}

\State If $\pkd(r_B,\hat{c})\neq \perp$:  {Return 1}
\end{algorithmic}
}
  \end{minipage}
         \caption{
         The integrity game of    pKEM. 
      Computationally bounded adversaries can make any-poly encapsulation and decapsulation queries. Unbounded  adversaries
        can  make fixed-poly $q_e$ encapsulation and $q_d$ decapsulation queries. 
        $\hat{c}$ cannot be a queries output of 
        $\pke(r_A,\cdot)$.}
        \label{fig:int}
\end{center}        
\end{figure}

We define INT-$(q_e;q_d)$-CTXT  for    an adversary  with $q_e\geq 0$ encapsulation and $q_d>0$ decapsulation queries, where the number of allowed queries depends on the adversary being computationally bounded or unbounded.


The following theorem shows that  a pKEM  that is IND-CEA and INT-CTXT secure is  IND-CCA secure.

\begin{theorem}
\label{theo:comp}
Let $\pk=(\pkg, \pke, \pkd)$ be a pKEM  with security parameter $\lambda$ and the input distribution $P_{\X \Y \Z }$. For an adversary $\msa=(\msa_1,\msa_2)$ in the CCA key distinguishing game $\pind_{\pk,\msa}^{cca\text{-}b}(\lambda)$, there are adversaries $\ms A$ and $\ms B$ for $\mathrm {KINT}_{\pk,\ms A}(\lambda)$ and the CEA key distinguishing game $\pind_{\pk,\ms B}^{cea\text{-}b}(\lambda)$, respectively 
that satisfy the following:  

$$Adv^{pkind\text{-}cca}_{\pk,\msa}(\lambda)\leq 2q_dAdv^{kint}_{\pk, \ms 
A}(\lambda)+Adv^{pkind\text{-}cea}_{\pk,\ms B}(\lambda).
$$

If $\msa$ makes $q_e$ encapsulation and $q_d$ decapsulation queries, $\ms A$ makes $q_e-1$ encapsulation and $q_d$ decapsulation queries to its decapsulation oracles, and $\ms B$ makes $q_e$ queries to its encapsulation oracle, 
%
we have the following. 
\begin{enumerate}
\item For computationally bounded adversaries 
\[\mbox{INT-CTXT cKEM }+ \mbox{IND-CEA
cKEM} \rightarrow \mbox{IND-CCA cKEM}.
\]

\item For computationally unbounded adversaries, 
\begin{eqnarray*}
 &&\mbox{INT-$(q'_e;q_d)$-CTXT iKEM} +  \mbox{IND-$q_e$-CEA iKEM} \rightarrow  \mbox{IND-$(q''_e; q_d)$-CCA iKEM},
\end{eqnarray*}
 where $q''_e = min(q_e; q'_e-1)$.
\end{enumerate} 
\end{theorem}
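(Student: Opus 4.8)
The plan is to adapt the classical Bellare--Namprempre template (IND-CPA $+$ INT-CTXT $\Rightarrow$ IND-CCA for symmetric encryption) to the preprocessing KEM setting. The guiding idea is that a decapsulation oracle helps a CCA adversary only insofar as it decapsulates ciphertexts the adversary did not honestly obtain; INT-CTXT says such ``fresh'' ciphertexts are almost never valid, so $\pkd(r_B,\cdot)$ can be replaced by an oracle that answers from a table of previously encapsulated pairs, after which no secret decapsulation sample $r_B$ is needed and the game collapses to a CEA game.

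Concretely, I would fix $b\in\{0,1\}$ and work inside $\pind_{\pk,\msa}^{cca\text{-}b}(\lambda)$. I maintain a list $L$ of all pairs $(c,k)$ returned by the encapsulation oracle, and introduce a \emph{simulated} decapsulation oracle that, on a query $c$, returns $k$ if $(c,k)\in L$ and $\perp$ otherwise. By $\eps(\lambda)$-correctness and determinism of $\pkd$, this simulated oracle agrees with the genuine $\pkd(r_B,\cdot)$ on every honestly produced ciphertext; the two can therefore \emph{disagree} only on a query $c$ that was never an encapsulation output yet satisfies $\pkd(r_B,c)\neq\perp$ --- that is, precisely a valid ciphertext forgery in the sense of Definition~\ref{def: integrity}. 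Hence, conditioned on the event $\neg E_b$ that no such forging query is ever issued, the simulated game and the real game $\pind_{\pk,\msa}^{cca\text{-}b}$ are identical from $\msa$'s viewpoint.

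Next I bound $\pr[E_b]$. I would build an INT-CTXT adversary $\ms A$ that runs $\msa$ internally, relaying encapsulation queries to its own $\pke$ oracle (and fabricating the challenge $(c^*,k_b)$ with one further encapsulation) and answering decapsulation queries honestly up to a uniformly guessed index $i^\ast\in\{1,\dots,q_d\}$, at which point it halts and outputs the $i^\ast$-th queried ciphertext as its forgery $\hat c$. Since the guess hits the first forging query with probability at least $1/q_d$, this gives $\pr[E_b]\leq q_d\,Adv^{kint}_{\pk,\ms A}(\lambda)$ (a hybrid that swaps the $q_d$ decapsulation answers from real to table-based one at a time yields the same factor). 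Once the oracle is swapped for good, the game never touches $r_B$, so a CEA distinguisher $\ms B$ reproduces the fully simulated game verbatim --- answering decapsulation from $L$, relaying encapsulation to its own oracle, and forwarding the challenge --- so that the gap between the two simulated games (for $b=0$ and $b=1$) is at most $Adv^{pkind\text{-}cea}_{\pk,\ms B}(\lambda)$. A triangle inequality across $b=0$, the common simulated middle, and $b=1$ then assembles $Adv^{pkind\text{-}cca}_{\pk,\msa}(\lambda)\leq 2q_d\,Adv^{kint}_{\pk,\ms A}(\lambda)+Adv^{pkind\text{-}cea}_{\pk,\ms B}(\lambda)$, the factor $2$ coming from the two challenge bits and $q_d$ from the forgery bound. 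Claims (1) and (2) follow by specializing $\sigma$: for a computationally bounded adversary $q_d$ is polynomial and both right-hand terms are negligible, giving IND-CCA; for an unbounded adversary $q_d$ is a fixed constant and the INT-$(q'_e;q_d)$-CTXT and IND-$q_e$-CEA guarantees keep both terms \emph{small}, giving IND-$(q''_e;q_d)$-CCA.

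The step I expect to be the main obstacle is verifying the faithfulness of the simulated decapsulation oracle and reconciling the query budgets, rather than any single inequality. I must check that the challenge ciphertext $c^*$ is treated identically in all games (it may never be submitted to decapsulation, so it never triggers a spurious table miss), that correctness and determinism genuinely make the simulated and real oracles coincide off the forgery event, and that $\ms B$ never needs to decapsulate $c^*$. The delicate bookkeeping is tracking how many encapsulation and decapsulation queries the reductions $\ms A$ and $\ms B$ each consume relative to $\msa$'s budget; matching these against the INT-$(q'_e;q_d)$-CTXT and IND-$q_e$-CEA guarantees is exactly what pins down the final parameter $q''_e=\min(q_e;\,q'_e-1)$ in the information-theoretic case, while $\ms B$ retains the full $q_e$ encapsulation queries.
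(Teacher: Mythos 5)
Your proposal is correct and follows essentially the same route as the paper: an intermediate game in which decapsulation is answered without touching $r_B$, a hop to that game costing $q_d\cdot Adv^{kint}_{\pk,\ms A}(\lambda)$ for each value of the challenge bit (hence the factor $2q_d$), and a reduction of the modified game to the CEA distinguishing game, all assembled by the triangle inequality. The only cosmetic difference is that the paper's intermediate decapsulation oracle always returns $\perp$ whereas yours answers from the table of honest encapsulation outputs; both isolate exactly the forgery event that INT-CTXT bounds, and your index-guessing argument for the $q_d$ factor is equivalent to the paper's union bound over the $q_d$ per-query forgery events.
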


\textit{Proof sketch.}
For the proof of the first part, we use  a sequence of  two  games  $\mathrm{G}^{0\text{-}b}_{\pk,\msa}$ and $\mathrm{G}^{1\text{-}b}_{\pk,\msa}$
 played by the distinguisher $\ms D$, where 
 $b$  is    uniformly chosen from $\{0,1\}$. 
 The first game  $\mathrm{G}^{0\text{-}b}_{\pk,\msa}$ is the  CCA distinguishing game ($\pind_{\pk,\msa}^{cca\text{-}b}(\lambda)$ in Figure~\ref{fig1:thm1}). 
The second game, $\mathrm{G}^{1\text{-}b}_{\pk,\msa}$, is the same as $\mathrm{G}^{0\text{-}b}_{\pk,\msa}$ except for its decapsulation oracle that always outputs $\perp$, an empty string. We bound the CCA advantage of the adversary by bounding the advantage of these games.
The proof of the second part uses the same sequence of games, but is against a computationally unbounded adversary. See the complete {proof} in Appendix~\ref{pf:theoremcca}. $\qed$

\subsection{Hybrid encryption  in Preprocessing Model}\label{sec:heenc}


 We define hybrid encryption (HE) and KEM/DEM paradigm for preprocessing model, 
 where during the offline phase, Alice, Bob and possibly Eve, receive correlated private inputs, and
during the online phase,  pKEM generates the key that will be used by DEM. 

\begin{definition}[Hybrid encryption in preprocessing model]
For 
a security parameter $\lambda$,  let \\$\pk=(\pkg,\pke;\pkd)$ be a    pKEM  and $\dem=(\edem,\ddem)$ be a  DEM  with 
the same key space  $\{0,1\}^{\ell(\lambda)}$, for each $\lambda$.
A {\em hybrid encryption in preprocessing model} denoted by $\ike_{\pk,\dem}= (\gike,\eike,\dike)$  is defined as given in Figure~\ref{fig:hyb}.


\end{definition}

\begin{figure}[!ht]
   \begin{center}
\begin{tabular}{l}
  \underline{$\mathbf{Alg}\ \gike(1^\lambda,P_{\X \Y \Z })$} \\
       {$(r_A,r_B,r_E)\stackrel{\$}\gets\pkg(1^\lambda,P_{\X \Y \Z })$}\\
        {Return } $(r_A,r_B,r_E)$
       \vspace{1em}
\end{tabular}
\begin{tabular}{l l}
\\
\underline{$\mathbf{Alg}\ \eike(r_A,m)$} & \underline{$\mathbf{Alg}\ \dike(r_B,c_1,c_2)$} \\
 {$(c_1,k)\stackrel{\$}\gets\pke(r_A)$} &  If $\perp \gets\pkd(r_B,c_1)$:\\
 { $c_2\gets\edem(k,m)$} &  \qquad \quad Return $\perp$\\  
 { {Return }$(c_1,c_2)$} &  Else: $m\gets\ddem(c_2,k)$\\
 & \qquad \quad {Return } $m$
\end{tabular}
    \caption{
    Hybrid encryption $\ike_{\pk,\dem}$ in preprocessing model}
    \label{fig:hyb}
\end{center}    
\end{figure}

\noindent
\textbf{Security of hybrid encryption in preprocessing model.}\label{defn:hesecurity}
$\ike_{\pk,\dem}$ is a private input 
encryption system, where Alice and Bob's private inputs are not the same but are correlated.
We use indistinguishability security and  
consider three security notions 
depending on the attacker's access to the encryption and decryption oracles (during the online phase):  i) no oracle access (IND-OT), ii) access to encryption queries (IND-CPA),  and iii) access to encryption and decryption queries,  where access in the latter two cases will be  before and after receiving the challenge ciphertext (IND-CCA). 
The number of queries for computationally bounded adversaries is polynomial in $\lambda$, and for unbounded adversary is a  predetermined polynomial in $\lambda$.
The security notions in the latter case for $q_e$ encryption queries, and for $q_e$ encryption and $q_d$ decryption queries  are  denoted by IND-$q_e$-CEA  and IND-$(q_e;q_d)$-CCA, respectively. 
 The security games are  similar to the 
 security games in symmetric key encryption schemes: the adversary (after making 
  queries according to the game type) generates two  equal length (in bits) messages $m_0$ and $m_1$, and for a random $b\in\{0,1\}$, receives $c^*=\ike_{\pk,\dem}(m_b)$. It  then (after making enough queries) outputs a bit  $\hat{b}\in\{0,1\}$.
 The \underline{ind}istinguishability advantage 
for a computationally bounded adversary $\msa$ and $atk\in\{ot,cpa,cca\}$, and
computationally unbounded adversary $\msa'$ and $atk\in\{ot,q_e\text{-}cpa,(q_e;q_d)\text{-}cca\}$, are 
$
    Adv^{ind\text{-}atk}_{\ike_{\pk,\dem},x}(\lambda) \triangleq |\pr[\hat{b}=1|b=0]-\pr[\hat{b}=1|b=1]|
$, where $x\in\{\msa,\msa'\}$, $\pk=\ck$ when $x=\msa$ and $\pk=\ikem$ when $x=\msa'$. 
The advantage is bounded by $\sigma(\lambda)$, where   $\sigma(\cdot) \in NEGL$ for adversary $\msa$ and  $\sigma(\cdot) \in SMALL$ for adversary $\msa'$.

The following theorem is the counterpart for Theorem $7.2$ in \cite{Cramer2003DesignAA} in preprocessing  model. Theorem 7.2 in \cite{Cramer2003DesignAA} considers only public key KEM. We prove the following theorem for both computational and information-theoretic KEMs
(cKEM and iKEM, respectively).
 The theorem is proved for  two types of query accesses for the adversary. 
One can consider similar types of results for other KEMs as defined in \cite{HERRANZ20101243}.
\begin{theorem}[Hybrid encryption composition theorem]
\label{theo:composition}
For a security
parameter $\lambda \in \mathbb{N}$, let, 

--$\ck=(\ckg, \cke, \ckd)$ be an $\epsilon(\lambda)$-correct cKEM in preprocessing model, and 

--$\ikem=(\ikemg, \ikeme, \ikemd)$ be an $\epsilon(\lambda)$-correct iKEM in preprocessing model,  

and let $\dem$ denote a one-time symmetric key encryption scheme with security parameter $\lambda$ that is compatible with the corresponding $\ck$ or $\ikem$. Then, 
{\small
\begin{eqnarray*}
&& 1) \quad \sigma(\lambda)\mbox{-IND-CEA}~\ck + \sigma'(\lambda)\mbox{-IND-OT} ~\dem\rightarrow
 (2\sigma(\lambda)+\sigma'(\lambda))\mbox{-IND-CPA}~ \ike_{\ck,\dem}\\
&& 2) \quad \sigma(\lambda)\mbox{-IND-CCA}~ \ck + \sigma'(\lambda)\mbox{-IND-OT} ~\dem  \rightarrow
 (2\epsilon(\lambda) + 2\sigma(\lambda)+\sigma'(\lambda))\mbox{-IND-CCA}~ \ike_{\ck,\dem}\\
&& 3) \quad \sigma(\lambda)\mbox{-IND-}q_e\mbox{-CEA }\ikem+\sigma'(\lambda)\mbox{-IND-OT } \dem \rightarrow
 (2\sigma(\lambda)+\sigma'(\lambda))\mbox{-IND-}q_e\mbox{-CPA } \ike_{\ikem,\dem}\\
&& 4) \quad \sigma(\lambda)\mbox{-IND-}(q_e; q_d)\mbox{-CCA } \ikem+\sigma'(\lambda)\mbox{-IND-OTCCA } \dem  \rightarrow
 (2\eps(\lambda)+2\sigma(\lambda)+\sigma'(\lambda))\mbox{-IND-}(q_e; q_d)\mbox{-CCA } \ike_{\ikem,\dem}
\end{eqnarray*}
}

Security of the hybrid encryption scheme in all above cases is with respect to a computationally bounded adversary.

\end{theorem}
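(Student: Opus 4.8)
The plan is to establish all four implications with a single game-hopping template, the cases differing only in which KEM and DEM oracles the reductions must simulate and (for the iKEM items) in the bookkeeping of query budgets. Fix the HE challenge bit $b\in\{0,1\}$ and consider the indistinguishability game whose challenge ciphertext is $(c_1^*,c_2^*)$ with $(c_1^*,k^*)\gets\pke(r_A)$ and $c_2^*\gets\edem(k^*,m_b)$. The single crucial move is to replace the real encapsulated key $k^*$ by a uniformly random, independent key $\tilde k\in\{0,1\}^{\ell(\lambda)}$ used to form $c_2^*$; write $\mathrm{G}_0^b$ for the original game and $\mathrm{G}_1^b$ for the modified one, and set $p_{b,j}=\pr[\hat b=1\text{ in }\mathrm{G}_j^b]$. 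Since the HE advantage equals $|p_{0,0}-p_{1,0}|$, the triangle inequality gives
\[
|p_{0,0}-p_{1,0}|\le |p_{0,0}-p_{0,1}|+|p_{0,1}-p_{1,1}|+|p_{1,1}-p_{1,0}|.
\]
The two outer terms are each bounded by the KEM advantage (explaining the coefficient $2$ on $\sigma(\lambda)$), while the middle term is bounded by the DEM advantage $\sigma'(\lambda)$ because in $\mathrm{G}_1^b$ the key $\tilde k$ is independent of $c_1^*$ and of everything else the adversary sees.

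For the CPA items (1 and 3) the two hops are direct. A KEM distinguisher $\ms B$, given $(c^*,k_{\text{chal}})$ with $k_{\text{chal}}$ either $k^*$ or uniform, sets $c_1^*=c^*$, computes $c_2^*=\edem(k_{\text{chal}},m_b)$, and runs the HE adversary, answering each HE encryption query by calling the KEM encapsulation oracle for a fresh $(c_1,k)$ and returning $\edem(k,\cdot)$; this equals $\mathrm{G}_0^b$ when $k_{\text{chal}}=k^*$ and $\mathrm{G}_1^b$ when $k_{\text{chal}}$ is uniform, so each outer term is at most the IND-CEA (resp. IND-$q_e$-CEA) advantage. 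In $\mathrm{G}_1^b$ a DEM distinguisher runs $\pkg$ itself, embeds its own DEM challenge $\edem(\tilde k,m_\beta)$ as $c_2^*$, simulates all KEM parts honestly, and forwards the output, bounding the middle term by $\sigma'(\lambda)$; together this yields $2\sigma(\lambda)+\sigma'(\lambda)$.

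The CCA items (2 and 4) use the same two hops but must also simulate the HE decryption oracle consistently across the key switch, which is the delicate part. A decryption query $(c_1,c_2)\ne(c_1^*,c_2^*)$ is answered by: decapsulating $c_1$ (via the KEM decapsulation oracle in the KEM hop, or via self-generated $r_B$ in the DEM hop) and applying $\ddem$ when $c_1\ne c_1^*$; and, when $c_1=c_1^*$ (hence $c_2\ne c_2^*$), decrypting $c_2$ under the key hidden in $c_1^*$, which is supplied by $k_{\text{chal}}$ in the KEM hop and by the OTCCA decryption oracle in the DEM hop (a legal query since $c_2\ne c_2^*$, and the reason OTCCA rather than OT is required). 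Each of the two KEM hops redefines the oracle's answer on queries $(c_1^*,\cdot)$ to use $k_{\text{chal}}$ in place of honest decapsulation; these agree except when $\pkd(r_B,c_1^*)\ne k^*$, an event of probability at most $\eps(\lambda)$ by correctness, so charging it once per hop contributes the extra $2\eps(\lambda)$ and gives $2\eps(\lambda)+2\sigma(\lambda)+\sigma'(\lambda)$.

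Finally, items 3 and 4 are obtained by running the identical reductions against a computationally unbounded $\msa'$, the only change being to track query counts: each HE encryption query costs one KEM encapsulation query and each HE decryption query with $c_1\ne c_1^*$ costs one KEM decapsulation query, so $\ms B$ and the DEM distinguisher stay within the fixed $(q_e;q_d)$ budgets and the resulting advantages lie in $SMALL$ rather than $NEGL$. I expect the main obstacle to be exactly this CCA decryption-oracle simulation: guaranteeing that the treatment of queries $(c_1^*,c_2)$ with $c_2\ne c_2^*$ is consistent simultaneously with the KEM hop and the DEM hop, and charging the correctness slack to $2\eps(\lambda)$ without double counting.
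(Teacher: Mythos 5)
Your proposal is correct and follows essentially the same route as the paper's proof: a game hop that replaces the encapsulated key by a uniform one, with the resulting terms charged to the KEM and DEM advantages, and with the decryption-oracle patch on queries having $c_1=c_1^*$ charged to the correctness parameter $\eps(\lambda)$ (and you rightly note that OTCCA security of the DEM is what makes the $c_1=c_1^*,\ c_2\neq c_2^*$ queries answerable). The only difference is bookkeeping: the paper uses three games and the advantage form $2\left|\Pr[\hat b=b]-\tfrac{1}{2}\right|$, so the single KEM hop and the single $\eps$ hop each get doubled, whereas you use two games and a three-term triangle inequality on $|p_{0,0}-p_{1,0}|$, so the KEM hop and the $\eps$ charge each appear once per challenge bit; both yield $2\eps(\lambda)+2\sigma(\lambda)+\sigma'(\lambda)$.
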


\textit{Proof Sketch.}
We prove the theorem for the first two cases; the proofs of the last two cases will be similar.
 We use  a sequence of  three  games  $\mathrm{G}^{0\text{-}b}_{\msa}$, $\mathrm{G}^{1\text{-}b}_{\msa}$, and $\mathrm{G}^{2\text{-}b}_{\msa}$, all
 played by a computationally bounded adversary (distinguisher) $\msa$.
 $\mathrm{G}^{0\text{-}b}_{\msa}$ is identical to the distinguishing game of the hybrid encryption in preprocessing model. 
 $\mathrm{G}^{1\text{-}b}_{\msa}$  differs 
  from 
 $\mathrm{G}^{0\text{-}b}_{\msa}$ in its decapsulation oracle response. 
For the challenge HE ciphertext $c^*=(c_1^*,c_2^*)$, where $c_1^*$ is the ciphertext output of 
 $\cke$ and $c_2^*$ is generated by $\edem$,  the response will be as follows.
If the decryption query $c=(c_1,c_2)\neq (c_1^*,c_2^*)$ and $c_1=c_1^*$ (and $c_2\neq c_2^*$), the response will be the key $k_1^*$ 
  that was generated by the encapsulation oracle (corresponding to $c_1^*$); otherwise 
  the decryption oracle of $\mathrm{G}^{1\text{-}b}_{\msa}$ uses $\ckd$ to decrypt $c$.
Finally, $\mathrm{G}^{2\text{-}b}_{\msa}$  differs from $\mathrm{G}^{1\text{-}b}_{\msa}$ in using a uniformly sampled key instead of the key generated by the cKEM. The differences between the success probabilities of the first two, and the last two games are bounded by the failure probability of encapsulation, and  the indistinguishability advantage of the  of the $\dem$.
 The complete proof is given in Appendix~\ref{pf:hecomp}.$\ \ \qed$

Note that in cases (3) and (4) of the theorem above, the iKEM is secure against a computationally unbounded adversary. Therefore, as long as the symmetric encryption scheme is quantum safe, {\em the resulting hybrid encryption scheme will be quantum-resistant}.

\section{Instantiating iKEM 
}
\label{sec:ikem-inst}
In this section, we construct two iKEMs in correlated random model and 
prove their security properties.
In the first construction (Construction~\ref{ikem:cea}) the adversary can only query the encapsulation oracle. In  the second construction (Construction~\ref{ikem:cca}) however, the adversary can   query both  the encapsulation and decapsulation oracles. 


  For both constructions we consider the 
probabilistic experiment that underlies the generation of correlated triplet 
to be  $n$ times independent repetition of a probabilistic experiment, and so $\X = (X_1,\cdots,X_n)$, $\Y = (Y_1,\cdots,Y_n)$ and $\Z =(Z_1,\cdots,Z_n)$ respectively, where 
$P_{\X \Y \Z }(\x , \y , \z )=
\prod_{i=1}^n P_{XYZ}(x_i,y_i,z_i)$, where $\x= (x_1,\cdots,x_n)$,  $\y=(y_1,\cdots,y_n)$ and $ \z= (z_1,\cdots,z_n)$.
Alice, Bob and Eve privately receive realizations of the random variables $\X$, $\Y$ and $\Z$ , respectively.

 This setting is considered  in  
commonly used  source model 
~\cite{Maurer1993,Maurer1997authencation}. 
%
\subsection{A CEA secure construction}
\label{constructionikem1}

 An IND-$q_e$-CEA secure iKEM allows adversary to query tha encapsulation oracle, and 
can be used to construct an IND-$q_e$-CPA secure hybrid encryption where the adversary has access to encryption queries.
The construction slightly modifies 
the 
IND-$q_e$-CEA secure iKEM construction in \cite{sharifian2021information} to increase 
the length of the  extracted key, without compromising its security.

\begin{construction}[CEA secure iKEM.]\label{ikem:cea}
We define an iKEM $\mathsf{ikem}_{cea}=(\ikemg,\ikeme,\ikemd)$, as follows:

Let  $P_{\X \Y \Z }=\prod_{i=1}^nP_{X_i Y_i Z_i}$ be the public joint distribution  as defined  above,  
and $P_{X_i Y_i Z_i}=P_{X Y Z}$ for all $ i \in \{1, \cdots, n\}$. 

Let $ h: \mathcal{X}^n \times \mathcal{S} \rightarrow \{0,1\}^t$ and $ h': \mathcal{X}^n \times \mathcal{S'} \rightarrow \{0,1\}^\ell$ be two  universal  hash families. Let the  ciphertext and key space be defined as $\mathcal{C}=\{0,1\}^t \times S'$ and $\mathcal{K}=\{0,1\}^\ell$, respectively. 
The iKEM $\mathsf{ikem}_{cea}$'s three algorithms $(\ikemg,\ikeme,\ikemd)$ are described  in Algorithm~\ref{alg:iK_{OWSKA}Gen}, Algorithm~\ref{alg:iK_{OWSKA}.Enc} and Algorithm~\ref{alg:iK_{OWSKA}.dec} respectively. 
The parameters $t$ and $\ell$ depend on the security parameter $\lambda$ 
and their relationship  with other system parameters is given in section~\ref{securityanalysisconst1}.

 In    $\ikemd$  (Algorithm~\ref{alg:iK_{OWSKA}.dec}), we use a parameter $\nu$ 
to define a set $\mathcal{R}$. This 
is a 
decapsulation algorithm parameter that depends on  $P_{\X\Y\Z}$,  the correlation between the RVs $\X $ and $\Y $, and is chosen using 
the required correctness (and security) of the derived key.  Higher correlation between the RVs $\X $ and $\Y $ leads to smaller $\nu$ for the same correctness level.
 The details of parameter derivation for $\nu$ and $t$ are  in \cite{sharif2020}    and are also repeated in the proof of Theorem \ref{Thm:ikemotsecurity} which uses the same reconciliation algorithm to obtain $\x$ from $\y$. 
Theorem~\ref{thm:ceaconst1} derives that the length of  the extracted key gives is upper bounded or 
$\ell$, the extracted key length by constructing a protocol, improving the results in \cite{sharif2020}. 


\end{construction}
 
Note that $\ikemg(P_{\X \Y \Z })$,   in addition to the random samples,  generates 
a random seed  $s$ of appropriate size 
that is distributed to the parties over public authenticated channels. 


{\setlength{\algoheightrule}{0pt}
\setlength{\algotitleheightrule}{0pt}
\SetAlgoNoLine%
 	\begin{algorithm}[!ht]
             \SetAlgoNoEnd
             \SetAlgoNoLine%
 		\DontPrintSemicolon
            \rule{.4\textwidth}{0.75pt}
            
 		\SetKwInOut{Input}{Input}\SetKwInOut{Output}{Output}
 		\Input{A public  distribution $P_{\X \Y \Z }$}
 		\Output{$(\x ,\y ,\z )$, the seed  $s$ (public)}
 		\SetKw{KwBy}{by}
 		\SetKwBlock{Beginn}{beginn}{ende}
 		1: 
  Samples  $(\x ,\y ,\z ) \xleftarrow{\$} P_{\X \Y \Z }$ and \\ send privately 
   to Alice, Bob and Eve, respectively. \\
 		2: 
   Sample and publish 
  $s \xleftarrow{\$} \mathcal{S}$ for $h(\cdot)$. \\
 		\parbox{\linewidth}{\caption{
   $\ikemg(P_{\X \Y \Z })$}}  
 		\label{alg:iK_{OWSKA}Gen}
 	\end{algorithm} }
\begin{minipage}{0.43\textwidth}
\vspace{-4.6em}
\setlength{\algoheightrule}{0pt}
\setlength{\algotitleheightrule}{0pt}  
 	\begin{algorithm}[H]
 		\SetAlgoLined
 		\DontPrintSemicolon
            \rule{.9\textwidth}{0.75pt}
               
 		\SetKwInOut{Input}{Input}\SetKwInOut{Output}{Output}
 		\Input{$\x $ and the seed $s$ (output of $\ikemg$)}
 		\Output{   The final 
   key  = $k$, 
  ciphertext = $c$}
 		\SetKw{KwBy}{by}
 		\SetKwBlock{Beginn}{beginn}{ende}
 		        1: 
           Sample $s' \xleftarrow{\$} \mathcal{S'}$ for $h'(\cdot)$\\
 		        2: $k$ = $h'(\x ,s')$ \\
 		        3: $c$ = $(h(\x ,s),s')$ \\
 		        4: Output = $(k,c)$ \\
 		\vspace{-1em}
 		\parbox{\linewidth}{\caption{
   $\ikeme(\x )$}} 
 		\label{alg:iK_{OWSKA}.Enc}
 	\end{algorithm} 
\end{minipage}
\begin{minipage}[c]{0.54\textwidth}
\setlength{\algoheightrule}{0pt}
\setlength{\algotitleheightrule}{0pt}  
 	\begin{algorithm}[H]
 		\SetAlgoNoLine
 		\DontPrintSemicolon
            \rule{.9\textwidth}{0.75pt}
                  
 		\SetKwInOut{Input}{Input}\SetKwInOut{Output}{Output}
 		\Input{$\y $, ciphertext $c$ and the seed $s$ (output of $\ikemg$)}
 		\Output{ The final 
   key $k$  or $\perp$}
 		\SetKw{KwBy}{by}
 		\SetKwBlock{Beginn}{beginn}{ende}
 		        1: Parse $c$ as $(v,s')$, where $v$ is a $t$-bit string \\
 		        2: Let $\mathcal{R}=\{\x :-\log(P_{\X |\Y }(\x |\y )) \le \nu\}$ \\
 		        3: For each $\hat{\x } \in \mathcal{R}$, Bob checks whether $v=h(\hat{\x },s)$ \\
 		         4: \eIf{there is a unique $\hat{\x } \in \mathcal{R}$ such that   $v=h(\hat{\x },s)$}{
 		                   Output $k=h'(\hat{\x },s')$ \\
 		                   }
 		                   {
 		                   Output $\perp$ \\
 		                   }
 		\vspace{-1em}
 		\parbox{\linewidth}{\caption{
   $\ikemd(\y ,c)$}} 
 		\label{alg:iK_{OWSKA}.dec}
 	\end{algorithm} 
\end{minipage}


\subsection{Security analysis of iKEM construction~\ref{ikem:cea}}\label{securityanalysisconst1}

Theorem~\ref{thm:ceaconst1} provides the relationship among parameters of construction~\ref{ikem:cea}. 

  The protocol is based on the OWSKA in \cite{sharif2020}. The parameters  $\nu$ and  $t$  are derived in ~\cite[Theorem 2]{sharif2020} such that  the error probability of 
the protocol will be upperbounded  by the correctness (reliability) parameter $\epsilon$. 
%
The iKEM construction~\ref{ikem:cea} uses the same reconciliation information in all queries. That is, in the encapsulation ciphertext $c=(h(\x,s), s')$,  the  value of $h(\x,s)$ that is used by Bob to recover recover $\x$ (reconciliation information) will be the same in all queries.
Each query however will include a new value of $s'$ and so a new final key. 
In our  construction~\ref{ikem:cea}, the randomness $s$ is generated during the initialization and published (or sent to Bob over a public authenticated channel).
The
 CEA secure iKEM protocol construction in~\cite{sharifian2021information} however updates both parts of  $c$ in each query, which results in higher information leakage from $\x$
and shorter length for the final key.
In appendix~\ref{app:setalgotheo} we have reproduced the protocol in ~\cite{sharifian2021information} for ease of reference.



 The following lemma 
for conditional min-entropy is proved in ~\cite[Lemma 2]{ska2023}, and will be used in Theorem \ref{thm:ceaconst1}. 
%

\begin{lemma} {\em \cite{ska2023}}  \label{lemma:minentropy}  For any $X_1Z_1, \cdots, X_nZ_n$ independently and identically distributed according to $P_{XZ}$, it holds that \\$\tilde{H}_\infty({\X }|{\Z })=n\tilde{H}_\infty(X|Z)$, where ${\X }=(X_1, \cdots, X_n)$ and ${\Z }=(Z_1, \cdots, Z_n).$ 
\end{lemma}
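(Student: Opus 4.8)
The plan is to work directly from the definition of average conditional min-entropy and exploit the product structure induced by the i.i.d. assumption. Recall that
$$\tilde{H}_{\infty}(\X|\Z) = -\log \mathbb{E}_{\z \leftarrow \Z}\, \max_{\x} P_{\X|\Z}(\x|\z),$$
so it suffices to show that the quantity $\mathbb{E}_{\z}\max_{\x} P_{\X|\Z}(\x|\z)$ equals the $n$-th power of its single-letter counterpart $\mathbb{E}_{z \leftarrow Z}\max_{x}P_{X|Z}(x|z)$. The claimed additivity then follows at once from the identity $-\log(a^n) = -n\log a$.

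First I would use the i.i.d. assumption to factor the conditional pmf as $P_{\X|\Z}(\x|\z) = \prod_{i=1}^n P_{X|Z}(x_i|z_i)$. The central observation is that the inner maximization over $\x = (x_1,\ldots,x_n)$ decouples: since each factor $P_{X|Z}(x_i|z_i)$ is non-negative and depends only on its own coordinate $x_i$, maximizing the product is the same as maximizing each factor separately, so that $\max_{\x} \prod_{i=1}^n P_{X|Z}(x_i|z_i) = \prod_{i=1}^n \max_{x_i} P_{X|Z}(x_i|z_i)$. Next, taking the expectation over $\z$ and using that the coordinates $Z_1,\ldots,Z_n$ are independent, the expectation of the product factors into a product of expectations, each equal to $\mathbb{E}_{z \leftarrow Z}\max_x P_{X|Z}(x|z)$. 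Collecting the $n$ identical factors yields $\left(\mathbb{E}_{z}\max_x P_{X|Z}(x|z)\right)^n$, and applying the logarithm completes the argument.

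The only step that needs any care is the decoupling of the maximum over the product, but this is elementary given the non-negativity of the factors and the separation of variables across coordinates; the factoring of the expectation is then a routine consequence of the independence of the $Z_i$. Hence I do not anticipate a genuine obstacle here — the result is essentially a bookkeeping exercise that converts the additive behaviour of ordinary min-entropy into the multiplicative (product/expectation) form demanded by the average conditional min-entropy, and the whole proof fits in a few lines.
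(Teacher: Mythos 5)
Your argument is correct and complete: factoring the conditional pmf by independence, decoupling the maximum over coordinates, and factoring the expectation over the independent $Z_i$ gives exactly $\mathbb{E}_{\z}\max_{\x}P_{\X|\Z}(\x|\z)=\bigl(\mathbb{E}_{z}\max_{x}P_{X|Z}(x|z)\bigr)^{n}$, from which the claim follows. The paper does not reprove this lemma (it cites it from an external reference), and your derivation is the standard one given for it, so there is nothing to compare beyond noting that your proof is sound.
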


\begin{theorem}[IND-$q_e$-CEA]
\label{thm:ceaconst1}
The iKEM $\mathsf{ikem}_{cea}$ described in construction~\ref{ikem:cea} establishes a secret key of length \[
\ell \le \frac{n\tilde{H}_{\infty}(X |Z ) + 2\log(\sigma) + 2 - t}{q_e+1}\]
that is $2\sigma$-indistinguishable from random by an adversary with access to $q_e$ encapsulation queries, where $q_e \ge 0$ (i.e. $2\sigma$-IND-$q_e$-CEA secure). 
\end{theorem}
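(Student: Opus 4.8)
The plan is to reduce the whole statement to a single application of the Generalized Leftover Hash Lemma (Lemma~\ref{glhl}) to the extraction hash $h'$, after carefully accounting for everything the adversary learns. First I would fix the adversary's entire view in the game $\pind_{\ikem,\msa}^{cea\text{-}b}$. The crucial structural observation — and the reason this construction improves on \cite{sharifian2021information} — is that the reconciliation seed $s$ is sampled once inside $\ikemg$ and then kept fixed, so the reconciliation value $w:=h(\x,s)$ is \emph{identical} in every encapsulation query and in the challenge. Hence the view decomposes as $\z$, the public seed $s$, the single $t$-bit value $w$, the $q_e$ query responses $\{(s'_j,h'(\x,s'_j))\}_{j=1}^{q_e}$, the challenge seed $s'_*$, and the challenge key $k_b$. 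Writing $B$ for all of this except $k_b$, the advantage $Adv^{pkind\text{-}cea}_{\ikem,\msa}(\lambda)$ is upper bounded by the statistical distance $\Delta\big(h'(\x,s'_*),B\,;\,U_\ell,B\big)$, so it suffices to bound this quantity.

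Next I would lower bound the residual min-entropy $\tilde{H}_\infty(\X\mid B)$. Starting from $\tilde{H}_\infty(\X\mid\Z)=n\,\tilde{H}_\infty(X\mid Z)$ (Lemma~\ref{lemma:minentropy}), I note that the seeds $s,s'_1,\dots,s'_{q_e}$ and $s'_*$ are sampled independently of $\X$ and therefore do not decrease its conditional min-entropy. The only genuine leakage about $\x$ is the $t$-bit value $w$ together with the $q_e$ revealed keys $h'(\x,s'_j)$, each of length $\ell$; by the chain rule for average conditional min-entropy \cite{DodisORS08}, revealing these $t+q_e\ell$ bits lowers the average min-entropy by at most $t+q_e\ell$. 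This yields $\tilde{H}_\infty(\X\mid B)\ge n\,\tilde{H}_\infty(X\mid Z)-t-q_e\ell$, where the single (rather than per-query) subtraction of $t$ is exactly what the fixed public seed $s$ buys us.

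Then I would invoke Lemma~\ref{glhl} with source $\X$, side information $B$, extractor seed $s'_*$, and output length $\ell$, obtaining
\[\Delta\big(h'(\x,s'_*),B\,;\,U_\ell,B\big)\ \le\ \tfrac12\sqrt{2^{-\tilde{H}_\infty(\X\mid B)}\cdot 2^{\ell}}\ \le\ \tfrac12\sqrt{2^{-(n\tilde{H}_\infty(X\mid Z)-t-q_e\ell)+\ell}}.\]
Imposing that the right-hand side be at most $\sigma$ and taking logarithms gives $(q_e+1)\ell+t-n\tilde{H}_\infty(X\mid Z)\le 2\log\sigma+2$, which rearranges to the claimed key-length bound $\ell\le\big(n\tilde{H}_\infty(X\mid Z)+2\log\sigma+2-t\big)/(q_e+1)$; since the distinguishing advantage is at most this statistical distance, the $2\sigma$-IND-$q_e$-CEA guarantee follows. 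The $(q_e+1)$ in the denominator is transparent in this derivation: $q_e$ of the factors come from the revealed query keys absorbed into $B$, and the remaining one from the extra $2^{\ell}$ term inherent to the leftover-hash extraction of the challenge key.

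The step I expect to be the main obstacle is the rigorous justification that the $q_e$ previously revealed keys may be treated purely as min-entropy leakage inside $B$, even though they are outputs of the \emph{same} family $h'$ that extracts the challenge key. The point that makes this sound is that the challenge seed $s'_*$ is drawn uniformly and independently of both the source $\X$ and of $B$, so Lemma~\ref{glhl} applies with $B$ in the role of arbitrary (adversarially correlated) side information; in particular the lemma is an average-case statement over $s'_*$ and therefore already absorbs the low-probability event $s'_*\in\{s'_1,\dots,s'_{q_e}\}$ in which the challenge key would be fully determined by $B$. A secondary point to verify is that plain universality of $h'$ (Definition~\ref{defn:uhf}), as opposed to the strong universality used in \cite{sharifian2021information}, suffices here — which is immediate, since Lemma~\ref{glhl} only requires a universal family.
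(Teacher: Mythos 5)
Your proposal is correct and follows essentially the same route as the paper's proof in Appendix F: the same decomposition of the adversary's view with the key observation that the fixed seed $s$ makes $h(\x,s)$ a one-time $t$-bit leakage, the same chain-rule bound $\tilde{H}_\infty(\X\mid\Z,\mathbf{W}^{q_e\text{-}cea})\ge n\tilde{H}_\infty(X\mid Z)-t-q_e\ell$ via \cite[Lemma 2.2(b)]{DodisORS08} and Lemma~\ref{lemma:minentropy}, and the same final application of the Generalized Leftover Hash Lemma yielding $\tfrac12\sqrt{2^{(q_e+1)\ell+t-n\tilde{H}_\infty(X\mid Z)}}\le\sigma$. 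Your closing remarks on averaging over the challenge seed and on plain universality sufficing are consistent with what the paper does implicitly.
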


\textit{Proof sketch.}
 The response to an encapsulation query leaks some information about Alice's private sample $\x$ and this reduces the length of the final  shared key.
The response to the $i$th encapsulation query is a key and ciphertext pair $(K_i,C_i)$, where $C_i =(h(\x,s), s'_i)$  and $h(\x,s)$ 
is the same in all responses. Here $K_i$ and $C_i$ are RVs over $\{0,1\}^\ell$ and $\{0,1\}^t$,  respectively.

After seeing $(K_i,C_i)$,  the remaining min-entropy entropy of $\x$ 
using~\cite[Lemma 2.2(b)]{DodisORS08},  will be lowerbounded by
$\tilde{H}_\infty(\X |\Z ,\mathbf{W}^{cea}_i)=\tilde{H}_\infty(\X |\Z ,K_i,C_i) \ge \tilde{H}_\infty(\X |\Z ) - \ell - t$.
Since $h(\x,s)$ is the same in all responses, after $q_e$ queries, the min-entropy entropy of $\x$ will be  $\tilde{H}_\infty(\X |\Z ) - q_e \ell - t$ which will be used 
to bound the key distinguishing advantage of the adversary.  The complete proof of the theorem is in Appendix~\ref{pf:thm:ceaconst1}. 
$\qed$

 {\em Comparison.} 
The construction in \cite[Theorem 2]{sharifian2021information} 
provides a  key of length 
\[
\ell \le \frac{\tilde{H}_{\infty}(\X |\Z ) + 2\log(\sigma) + 2}{q_e+1} -t - \log(\frac{q_e}{\sigma})
\] 
that is $2\sigma$-indistinguishable from random ($2\sigma$-IND-$q_e$-CEA). Our Construction  \ref{ikem:cea} results 
in a key of length 
\[
\ell \le \frac{n\tilde{H}_{\infty}(X |Z ) + 2\log(\sigma) + 2 -t}{q_e+1}
\]
that is $2\sigma$-indistinguishable from random ($2\sigma$-IND-$q_e$-CEA), improving the  
result in \cite[Theorem 2]{sharifian2021information}.

\subsection{A CCA secure construction}
\label{constructionikem}
%
 In this section, we modify the 
IND-$q_e$-CEA secure iKEM construction~\ref{ikem:cea} in Section~\ref{constructionikem1} to provide stronger security. More specifically, we extend the construction to 
an IND-$(q_e;q_d)$-CCA secure iKEM which provides security against an adversary with access to $q_e$ encapsulation and $q_d$ decapsulation oracle queries.
Access to decapsulation queries models an adversary who can tamper with the encapsulation ciphertext, and see the result of the decapsulation algorithm on its manufactured fraudulent encapsulation ciphertext.

To provide security against such adversaries we use the notion of   ciphertext  integrity (INT-CTXT) that requires the iKEM to 
satisfy definition~\ref{def: integrity}   and guarantee that any tampering with $c$ will be detected by the decapsulation  algorithm, with a high probability.

%
 The  iKEM construction~\ref{ikem:cca}  provides ciphertext integrity.  According to Theorem~\ref{theo:comp},
the 
IND-$q_e$-CEA security and $(q'_e,q_d)$-ciphertext integrity (INT-$(q'_e,q_d)$-CTXT) of iKEM together lead to  IND- $(q''_e;q_d)$-CCA  security,  where $q''_e=\min\{q_e,q'_e-1\}$, which    is the strongest and widely 
used 
notion of security for   encryption systems. 
 The construction is based on  the construction \ref{ikem:cea}  but modifies its ciphertext, and requires the hash function $h$   to be a {\em universal hash function with additional properties.}

\begin{construction}[CCA secure iKEM.]\label{ikem:cca}
We define an iKEM, $\mathsf{ikem}_{cca}=(\ikemg,\ikeme,\ikemd)$, as follows:

 Let the joint  distribution of the three random variables $\X $, $\Y $ and $\Z $ be described by the distribution  $P_{\X \Y \Z }=\prod_{i=1}^nP_{X_i Y_i Z_i}$ that is obtained as product  of  $n$ independent  copies of the distribution $(X,Y,Z)$, where $\X =(X_1,\cdots,X_n)$, $\Y =(Y_1,\cdots,Y_n)$, $\Z =(Z_1,\cdots,Z_n)$  
 and $P_{X_i Y_i Z_i}=P_{X Y Z}$ for $ 1 \le i \le n$. The joint distribution $P_{\X \Y \Z }$ is used to generate the correlated random samples of $\X , \Y , \Z  \in {\mathcal{X}}^n$.

 Let $ h': \mathcal{X}^n \times \mathcal{S'} \rightarrow \{0,1\}^\ell $ be a  universal  hash family, and  $ h: \mathcal{X}^n \times (\mathcal{S'} \times \mathcal{S})  \rightarrow \{0,1\}^t $ denote a second    universal  hash family with extra properties as constructed  in Section \ref{ikem:robcca}. 
 
Let $\mathcal{C} = \{0,1\}^t \times \mathcal{S'} \times \mathcal{S}$ and $\mathcal{K} = \{0,1\}^\ell $ denote the ciphertext and key domains, respectively.


 The 
$\mathsf{ikem}_{cca}$ 
algorithms $(\ikemg,\ikeme,\ikemd)$ are, 
Algorithm~\ref{alg:iKGen}, Algorithm~\ref{alg:iK.Enc} and Algorithm~\ref{alg:iK.dec}, respectively.

The hash function  parameters $t$ and $\ell$ are functions of the security parameter $\lambda$ and together with the other iKEM parameters 
are derived in 
Theorem \ref{Thm:ikemotsecurity} and Theorem \ref{mac2:ctxt} in section \ref{securityfuzzyext}.
The parameter 
$\nu$ is defined and used the same as in Construction \ref{ikem:cea}.  Note that the initialization phase is only used to generate and securely distribute the private inputs of participants. The seed $s$ will be generated independently for each instance of the protocol and will be protected against the adversary's tampering through the additional properties of $h$.

\end{construction}

{\setlength{\algoheightrule}{0pt}
\setlength{\algotitleheightrule}{0pt}
 	\begin{algorithm}[!ht]
 		\SetAlgoLined
 		\DontPrintSemicolon
            \rule{.4\textwidth}{0.75pt}
            
 		\SetKwInOut{Input}{Input}\SetKwInOut{Output}{Output}
 		\Input{Distribution $P_{\X \Y \Z }$}
 		\Output{$(\x ,\y ,\z )$}
 		\SetKw{KwBy}{by}
 		\SetKwBlock{Beginn}{beginn}{ende}
 		1: 
   Samples  $(\x ,\y ,\z ) \xleftarrow{\$} P_{\X \Y \Z }$; and \\ 
   send privately to Alice, Bob and Eve, respectively. \\
 		\parbox{\linewidth}{\caption{
   $\ikemg(P_{\X \Y \Z })$}}  
 		\label{alg:iKGen}
 	\end{algorithm}

} 
\begin{minipage}{0.43\textwidth}
\vspace{-4.3em}
\setlength{\algoheightrule}{0pt}
\setlength{\algotitleheightrule}{0pt}
 	\begin{algorithm}[H]
 		\SetAlgoLined
 		\DontPrintSemicolon
            \rule{.9\textwidth}{0.75pt}
            
 		\SetKwInOut{Input}{Input}\SetKwInOut{Output}{Output}
 		\Input{$\x $}
 		\Output{extracted key = $k$, ciphertext = $c$}
 		\SetKw{KwBy}{by}
 		\SetKwBlock{Beginn}{beginn}{ende}
 		        1: Generate seed $s' \xleftarrow{\$} \mathcal{S'}$ for $h'(\cdot)$ \\
 		        2:  Generate seed $s \xleftarrow{\$} \mathcal{S}$ for $h(\cdot)$\\
 		        3: $k$ = $h'(\x ,s')$ \\
 		        4: $c$ = $(h(\x , (s',s)),s',s)$ \\
 		        5: Output = $(k,c)$ \\
 		\vspace{-1em}
 		\parbox{\linewidth}{\caption{
   $\ikeme(\x )$}} 
 		\label{alg:iK.Enc}
 	\end{algorithm} 
\end{minipage}
\begin{minipage}{0.54\textwidth}
\setlength{\algoheightrule}{0pt}
\setlength{\algotitleheightrule}{0pt}
 	\begin{algorithm}[H]
 		\SetAlgoNoLine
 		\DontPrintSemicolon
            \rule{.9\textwidth}{0.75pt}
             
 		\SetKwInOut{Input}{Input}\SetKwInOut{Output}{Output}
 		\Input{$\y $  and ciphertext $c$}
 		\Output{An extracted key $k$  or $\perp$}
 		\SetKw{KwBy}{by}
 		\SetKwBlock{Beginn}{beginn}{ende}
 		        1: Parse $c$ as $(v,s',s)$, where $v$ is a $t$-bit string \\
 		        \vspace{-1.6em}
 		        \begin{flalign}\label{reconset}
 		        \text{2: }\mathcal{R} =\{\x :-\log(P_{\X |\Y }(\x |\y )) \le \nu \} &&
 		        \end{flalign}
 		        3: For each $\hat{\x } \in \mathcal{R}$, Bob checks whether $v=h(\hat{\x }, (s',s))$ \\
 		         4: \eIf{there is a unique $\hat{\x } \in \mathcal{R}$ such that  $v=h(\hat{\x }, (s',s))$}{
 		                   Output $k=h'(\hat{\x },s')$ \\
 		                   }
 		                   {
 		                   Output $\perp$ \\
 		                   }
 		\vspace{-1em}
 		\parbox{\linewidth}{\caption{
   $\ikemd(\y ,c)$}} 
 		\label{alg:iK.dec}
 	\end{algorithm} 
\end{minipage}
\vspace{1em}


\subsubsection{Relation with CEA secure iKEM} 
\label{relation} 
 To provide CCA security 
in  Construction~\ref{ikem:cca},  we modify  Construction \ref{ikem:cea} and use 
the seeds of both hash functions  as input to   $h$, which is randomly  selected  from 
a  function family  that in addition to being a universal hash function family, can be interpreted as 
an information theoretic MAC with partially leaked secret key $\x$, that  detects tampering with the seeds $s'$ and seed $s$.
 More specifically, $h(\x , (s',s))$ is a universal hash function family with seed  $(s',s)$ that is  evaluated on the  input $\x$, and a MAC with key $\x$ that is
evaluated on the  message $(s',s)$. 
The construction of $h$ is given in Section \ref{ikem:robcca}, and proof of CCA security of iKEM is given in Section \ref{ikem:robcca}. 


\subsection{Security analysis of iKEM construction~\ref{ikem:cca}}
\label{securityfuzzyext}
  We prove security properties of the construction 
  using two main theorems.  The proofs also determine 
  parameters that must be used 
  to guarantee the required levels of correctness and security.
Theorem \ref{Thm:ikemotsecurity}
proves reliability and IND-$q_e$-CEA security of the iKEM. Theorem \ref{mac2:ctxt} proves ciphertext integrity of the construction, and together with Theorem \ref{Thm:ikemotsecurity} proves  IND-$(0, q_d)$-CCA 
security of the construction.

\begin{theorem}[reliability and IND-$q_e$-CEA]
\label{Thm:ikemotsecurity}
Let $\nu$ and $t$ 
satisfy,
{
\begin{eqnarray*}
\nu &=& nH(X |Y ) + \sqrt{n} \log (|\mathcal{X}|+3) \sqrt{\log (\frac{\sqrt{n}}{(\sqrt{n}-1)\epsilon})}, \\
 t &\ge& nH(X |Y ) + \sqrt{n} \log (|\mathcal{X}|+3) \sqrt{\log (\frac{\sqrt{n}}{(\sqrt{n}-1)\epsilon})} + \log (\frac{\sqrt{n}}{\epsilon}).
\end{eqnarray*}
    }

Then the iKEM $\mathsf{ikem}_{cca}$ 
in construction~\ref{ikem:cca} establishes a secret key of length $\ell \le  \frac{n\tilde{H}_{\infty}(X |Z ) + 2\log(\sigma) + 2}{q_e+1} - t$ that is $\epsilon$-correct and $2\sigma$-indistinguishable from random by an adversary with access to $q_e$ encapsulation queries, where $q_e \ge 0$ (i.e. $2\sigma$-IND-$q_e$-CEA  secure).
\end{theorem}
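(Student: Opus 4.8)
The plan is to prove the two assertions—$\epsilon$-correctness and $2\sigma$-IND-$q_e$-CEA security—separately, since they constrain different quantities ($\nu$ and $t$ govern correctness, while the extracted length $\ell$ is limited by security). Because $\ikemd$ uses exactly the typical-set decoder $\mathcal{R}=\{\x:-\log P_{\X|\Y}(\x|\y)\le\nu\}$ of the underlying OWSKA, for correctness I would reuse the reconciliation analysis of \cite{sharif2020}; for security I would track the residual average min-entropy of $\x$ after the adversary's queries and the challenge, and then invoke the Generalized Leftover Hash Lemma (Lemma~\ref{glhl}), paralleling the argument behind Theorem~\ref{thm:ceaconst1}.

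For correctness I would split the decapsulation failure into two events. The first, $E_1$, is that Alice's genuine sample lies outside the decoder, i.e.\ $-\log P_{\X|\Y}(\x|\y)>\nu$. Writing the information density as the i.i.d.\ sum $-\log P_{\X|\Y}(\x|\y)=\sum_{i=1}^n -\log P_{X|Y}(x_i|y_i)$ with per-letter mean $H(X|Y)$ and per-letter range controlled by $\log(|\mathcal{X}|+3)$, a Hoeffding-type concentration bound yields precisely the stated $\nu=nH(X|Y)+\sqrt{n}\log(|\mathcal{X}|+3)\sqrt{\log(\frac{\sqrt n}{(\sqrt n-1)\epsilon})}$ and bounds $\Pr[E_1]$ by a $(1-1/\sqrt n)$-fraction of $\epsilon$. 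The second event, $E_2$, is that some $\hat{\x}\neq\x$ in $\mathcal{R}$ collides with $\x$ under $h(\cdot,(s',s))$. Since every element of $\mathcal{R}$ has conditional probability at least $2^{-\nu}$, we have $|\mathcal{R}|\le 2^\nu$, and the universal property of $h$ gives a per-candidate collision probability at most $2^{-t}$; a union bound then gives $\Pr[E_2]\le 2^{\nu-t}$, which the lower bound $t\ge\nu+\log(\sqrt n/\epsilon)$ forces below $\epsilon/\sqrt n$. Summing, $\Pr[E_1]+\Pr[E_2]\le\epsilon$.

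For the IND-$q_e$-CEA bound I would quantify what the adversary learns about $\x$. Each encapsulation response is a pair $(K_i,C_i)$ with $K_i=h'(\x,s'_i)\in\{0,1\}^\ell$ and a reconciliation tag $h(\x,(s'_i,s_i))\in\{0,1\}^t$ inside $C_i$; crucially, because the seed $s_i$ is drawn freshly per query (unlike Construction~\ref{ikem:cea}), each of the $q_e$ responses leaks $\ell+t$ bits, and the challenge ciphertext $C^*$ leaks a further $t$ bits through its tag. Using the chain rule for average min-entropy (\cite[Lemma 2.2(b)]{DodisORS08}) together with Lemma~\ref{lemma:minentropy} to write $\tilde{H}_{\infty}(\X|\Z)=n\tilde{H}_{\infty}(X|Z)$, the residual min-entropy of $\x$ conditioned on $\Z$, the $q_e$ responses, and $C^*$ is at least $n\tilde{H}_{\infty}(X|Z)-q_e(\ell+t)-t$. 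Applying Lemma~\ref{glhl} to the extractor $h'(\cdot,{s'}^*)$ bounds the distance of the challenge key from uniform (given all side information) by $\tfrac12\sqrt{2^{\ell-[n\tilde{H}_{\infty}(X|Z)-q_e(\ell+t)-t]}}$; requiring this to be at most $\sigma$ rearranges to $(q_e+1)(\ell+t)\le n\tilde{H}_{\infty}(X|Z)+2\log\sigma+2$, i.e.\ the claimed length $\ell\le\frac{n\tilde{H}_{\infty}(X|Z)+2\log\sigma+2}{q_e+1}-t$, and the $2\sigma$ advantage then follows from the real-versus-random formulation of the distinguishing game.

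I expect two points to carry the technical weight. The delicate estimate is the tail bound for $E_1$: the information density $-\log P_{X|Y}(x_i|y_i)$ is not uniformly bounded when conditional probabilities are small, so recovering the exact deviation term $\sqrt n\log(|\mathcal{X}|+3)\sqrt{\log(\cdots)}$ demands the careful truncation argument of \cite{sharif2020} rather than a naive Hoeffding application. The second subtlety is that the challenge reuses the extraction seed ${s'}^*$ inside the reconciliation tag $h(\x,({s'}^*,s^*))$, so the tag in $C^*$ and the key $h'(\x,{s'}^*)$ are not independent; I must therefore apply the Leftover Hash Lemma with $C^*$ folded into the conditioning variable $B$, verifying that $h'$ retains universality relative to this dependent side information—this is exactly what keeps the extra $(q_e+1)t$ reconciliation leakage (hence the additive $-t$ outside the fraction, in contrast to Construction~\ref{ikem:cea}) from degrading the extracted length further.
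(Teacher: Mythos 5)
Your proposal is correct and follows essentially the same route as the paper's proof: the same decomposition of decapsulation failure into the out-of-$\mathcal{R}$ event (bounded via the Holenstein--Renner-type concentration of the information density, giving the stated $\nu$) and the collision event (bounded by $|\mathcal{R}|2^{-t}\le 2^{\nu-t}$ via universality), with the same $\epsilon_1=(\sqrt n-1)\epsilon/\sqrt n$, $\epsilon_2=\epsilon/\sqrt n$ split; and the same security argument via the min-entropy chain rule giving residual entropy $n\tilde H_\infty(X|Z)-q_e(\ell+t)-t$ followed by the Generalized Leftover Hash Lemma with the challenge tag folded into the conditioning, yielding the identical bound $\frac12\sqrt{2^{(q_e+1)(\ell+t)-n\tilde H_\infty(X|Z)}}\le\sigma$ and hence the claimed key length.
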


\textit{Proof sketch.}

\textit{Correctness (reliability)}. We first determine the values of $\nu$ and $t$   that guarantee 
correctness (reliability) for the given $\epsilon$, and then   prove 
security. 
 Decapsulation algorithm $\ikemd(\cdot)$ searches the set $\mathcal{R}$ that is defined by $P_{\X|\Y}$  and $\nu$, to find  a unique value $\hat{\x }$ that satisfies  $h(\hat{\x }, (s',s))=v$  were $v$ is 
the received hash value. 
 \remove{
 $\hat{\x }$ where $h(\hat{\x }, (s',s))=v$, were $v$ is 
the received hash value, 
and succeeds if a unique 
$\hat{\x }$ can be found.  
}
The algorithm fails if  at least one of the following events occurs:
\begin{align} \nonumber
 & \mathcal{E}_{1} = \{\x : \x  \notin \mathcal{R}\}=\{\x : -\log(P_{\X |\Y }(\x |\y )) > \nu\} \text{ and } \\\nonumber
 &\mathcal{E}_2 = \{\x  \in \mathcal{R}: \exists \text{ } \hat{\x } \in \mathcal{R} \text{ s.t. }  h(\x , (s',s)) = h(\hat{\x }, (s',s)\}.
 \end{align}
We use \cite[Theorem 2]{Holenstein11} and the property of universal hash function $h$ to bound these two probabilities and prove that   with appropriate choice of parameters, the sum of these two probabilities is  bounded by $\epsilon$. \\

 \textit{Security: Key indistinguishability.} The response to and 
 encapsulation query, 
 $(K_i,C_i)$, 
 leaks
information about the secret key $\x$.
%
We use  \cite[Lemma 2.2(b)]{DodisORS08} to
estimate the remaining min-entropy entropy of $\x$ as,  \[
\tilde{H}_\infty(\X |\Z ,\mathbf{W}^{cea}_i)=\tilde{H}_\infty(\X |\Z ,K_i,C_i) \ge \tilde{H}_\infty(\X |\Z ) - \ell - t,\]
 where $K_i$ and $C_i$ are RVs over $\{0,1\}^\ell$ and $\{0,1\}^t$ respectively. By bounding the total leakage of 
$q_e$ queries, we bound the key distinguishing advantage of the adversary.

The complete proof of the theorem is in Appendix~\ref{pf:Thm:ikemotsecurity}. $\qed$



\subsection{Ciphertext integrity of construction~\ref{ikem:cca}}\label{ikem:robcca}
To achieve ciphertext integrity, we use the   construction of a universal  hash function $h:\mathcal{X}^n \times (\mathcal{S'} \times \mathcal{S}) \to \{0,1\}^t$  described below.

{\bf Construction of $h$.}
For 
a  vector of  $n$ components denoted by $\x $, let $\x_{1}=[\x ]_{1\cdots t}$   and $\x_{2}=[\x ]_{t+1 \cdots n}$, where 
$\x =\x_{2} \parallel \x_{1}$ and  $t \le n/2$.  

%
\remove{

 Let $s' \in GF(2^{n})$ denote the seed 
 of the  universal  hash family 
 $ h': \mathcal{X}^n \times \mathcal{S'} \rightarrow \{0,1\}^\ell$  {\mg where $\mathcal{X}^n  = GF(2^n)$.} 

 We write $s'$ as a  vector  of elements $(s'_1,\cdots,s'_r)$ over $GF(2^{n - t})$, where $r$ is an even number. (We use padding with 1's when needed.)
}


\remove{
, and choose ???randomly $s=s_2 \parallel s_1$ where 
 $s_2 \in_R GF(2^{n-t})$ and $s_1 \in_R GF(2^{t})$. 
 }
  We define a universal hash family 
 with seed  space 
 $(\mathcal{S'} \times  \mathcal{S})$ and input space ${\mathcal{X}}^n $, where  $\mathcal{S}=GF(2^{n-t}) \times  GF(2^t)$, $\mathcal{S'}= GF(2^{w})$, for some suitable $w\in \mathbb{N}$,   
 and $\mathcal{X}^n  = GF(2^n)$. Let $s' \in \mathcal{S'}$. 
 We write $s'$  as a  vector  of elements $(s'_1,\cdots,s'_r)$ where each element is from $GF(2^{n - t})$, where $r$ is an even number  satisfying:\\ $(r-2)(n-t) < w \le r(n-t)$. (We use padding  with 1's  for $s'_r$ and $s'_{r-1}$, when needed.). Let $s=(s_2,s_1)\in \mathcal{S}$ with $s_2 \in GF(2^{n-t})$ and $s_1 \in GF(2^t)$.

 The hash function $h\big(\x ,(s', s)\big)$ with seed $(s',s)$ and input $\x \in {\mathcal{X}}^n$ is given by,
\remove{
 \begin{align} \nonumber
 &h\big(\x ,(s', s)\big)=h\big(\x ,(s', s_2, s_1)\big) \\\label{defn:uhash}
 &=\big[(\x_{2})^{r+3} + {\sum}_{i=1}^r s'_i(\x_{2})^{i+1} +s_2 \x_{2}\big]_{1\cdots t}+ (\x_{1})^{3} + s_1 \x_{1}.
 \end{align}} 
 {
 \begin{eqnarray} \label{defn:uhash}
 &&h\big(\x ,(s', s)\big)=h\big(\x ,(s', s_2, s_1)\big) \\
 &&\qquad\qquad\quad\hspace{.3em}=\big[(\x_{2})^{r+3} + {\sum}_{i=1}^r s'_i(\x_{2})^{i+1} +s_2 \x_{2}\big]_{1\cdots t}+ (\x_{1})^{3} + s_1 \x_{1}.  \nonumber 
 \end{eqnarray}
 }

\begin{lemma}\label{lemma:uhf}
$h$ is a universal  hash family. 
\end{lemma}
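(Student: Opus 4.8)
\textit{Proof proposal.} The plan is to verify the universality condition of Definition~\ref{defn:uhf} directly: for every pair of distinct inputs $\x\neq\hat{\x}$ in $GF(2^n)$ I must bound the collision probability over the uniform seed $(s',s)=(s',s_2,s_1)$ by $2^{-t}=1/|\mathcal{Y}|$, where $\mathcal{Y}=\{0,1\}^t$. First I would split each input into its high and low parts, $\x=\x_2\parallel\x_1$ and $\hat{\x}=\hat{\x}_2\parallel\hat{\x}_1$, matching the two summands of \eqref{defn:uhash}, and abbreviate the bracketed term as $P(\x_2)=(\x_2)^{r+3}+\sum_{i=1}^r s'_i(\x_2)^{i+1}+s_2\x_2\in GF(2^{n-t})$, so that $h(\x,(s',s))=[P(\x_2)]_{1\cdots t}+(\x_1)^3+s_1\x_1$.

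The key observation I would exploit is that truncation to the first $t$ coordinates is a $GF(2)$-linear map and that field addition in $GF(2^{n-t})$ is bitwise XOR, hence $[P(\x_2)]_{1\cdots t}+[P(\hat{\x}_2)]_{1\cdots t}=[P(\x_2)+P(\hat{\x}_2)]_{1\cdots t}$. Since we work in characteristic two, the collision event $h(\x,(s',s))=h(\hat{\x},(s',s))$ is then equivalent to
\[
[P(\x_2)+P(\hat{\x}_2)]_{1\cdots t}=(\x_1)^3+(\hat{\x}_1)^3+s_1(\x_1+\hat{\x}_1),
\]
and I would split the analysis according to whether $\x_2=\hat{\x}_2$.

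If $\x_2=\hat{\x}_2$, then $\x_1\neq\hat{\x}_1$ because $\x\neq\hat{\x}$, the left-hand side vanishes, and the displayed identity becomes a single affine equation in $s_1$ with nonzero coefficient $\x_1+\hat{\x}_1\in GF(2^t)$; as this coefficient is invertible, there is exactly one solution, so over uniform $s_1\in GF(2^t)$ the probability is $2^{-t}$. If $\x_2\neq\hat{\x}_2$, I would observe that $P(\x_2)+P(\hat{\x}_2)$, viewed as a function of $s_2$, is affine with leading coefficient $\x_2+\hat{\x}_2\neq0$; since multiplication by a nonzero field element followed by a translation is a bijection of $GF(2^{n-t})$, the quantity $P(\x_2)+P(\hat{\x}_2)$ is uniform on $GF(2^{n-t})$ as $s_2$ ranges uniformly, independently of $s_1$ and of the $s'_i$. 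Because $t\le n/2$ (so $t\le n-t$), truncating a uniform element of $GF(2^{n-t})$ to its first $t$ bits yields a uniform element of $GF(2^t)$; hence the left-hand side is uniform and independent of the right-hand side, giving collision probability exactly $2^{-t}$.

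Both cases yield the bound $2^{-t}=1/|\mathcal{Y}|$, which establishes universality. The points I expect to be the main obstacle — and would therefore spell out carefully rather than assert — are the $GF(2)$-linearity of the truncation map (so that the difference of truncations equals the truncation of the difference) together with the fact that truncating a uniform element of the larger field gives a uniform element of $\{0,1\}^t$, and the independence of the randomness that supplies uniformity in each case ($s_1$ in the first, $s_2$ in the second) from the terms appearing on the opposite side of the equation. Notably, the higher-degree $s'_i$ monomials, which are present for the integrity/MAC properties used later, play no role in universality; the single linear term in $s_2$ (resp.\ $s_1$) already suffices.
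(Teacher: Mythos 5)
Your proposal is correct and follows essentially the same route as the paper's proof: the same split into the case $\x_2=\hat{\x}_2$ (resolved by the unique solution for $s_1$) and the case $\x_2\neq\hat{\x}_2$ (resolved via the linear term $s_2\x_2$ and the fact that $t\le n-t$ makes truncation well-behaved). The only cosmetic difference is that in the second case you phrase the argument as uniformity of $P(\x_2)+P(\hat{\x}_2)$ under the bijection $s_2\mapsto s_2(\x_2+\hat{\x}_2)+\text{const}$ followed by truncation, whereas the paper counts the $2^{n-2t}$ preimages of the required truncated value; these are the same computation.
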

Proof is 
in Appendix~\ref{appn:uhfproof}.

\remove{
{\rd I THINK IN THIS DEFINITION $s$ AND $x$ ARE INPUT - $s'$ IS RANDOM, 
HERE WE ARE INTRODUING A UNIV HASH FUNCTION.

 MESSAGE HAS TWO PARTS (x,s). WE CAN SAY WE EVALUATE PROPERTIES OF THIS CONSTRUCTION WHEN $\x$ IS SECRET AND $s$ IS RANDOM.

 FOR EXAMPLE WE CAN WRITE:

 The mesage space of the unoversal hash function is $(\x, s) \in ...$

}

 \remove{Let us interpret $s'$, suitably padded with 1s, as a sequence $(s'_1,\cdots,s'_r)$ of elements of $GF(2^{n - t})$, where $r$ is an even number. We choose two random elements:  $s_2 \in_R GF(2^{n-t})$ and $s_1 \in_R GF(2^{t})$. 
 Let $s=  s_2 \parallel s_1$. 
 Define
}
 \begin{align} \nonumber
 &h\big(\x ,(s', s)\big)=h\big(\x ,(s', s_2, s_1)\big) \\\label{defn:uhash}
 &=\big[(\x_{2})^{r+3} + {\sum}_{i=1}^r s'_i(\x_{2})^{i+1} +s_2 \x_{2}\big]_{1\cdots t}+ (\x_{1})^{3} + s_1 \x_{1}.
 \end{align}
 {\rd REMOVE  -  WE SHOULD INTRODUCE THE FUNCTON AS A UHF WITH DEFINES SEED AND MESSAHE SPACE --Note that  $h$ is a universal  hash family. }

}

\vspace{1mm}
\noindent
 {\bf Proving ciphertext integrity.}
In Theorem \ref{mac2:ctxt},  we prove that the  construction~\ref{ikem:cca} is an iKEM that satisfies ciphertext integrity as given in Definition~\ref{def: integrity}, for $q_e=1$ and $q_d$. The proof of the theorem relies on 
 Lemmas \ref{lemma:fuzzy}, \ref{maxprob}, and \ref{claim:rootx}. 

 \begin{lemma}\label{lemma:fuzzy}  Consider a joint distribution $P_{\X\Y}$, and let 
  $A$ denote a random variable over a set of size 
  at most $2^{\alpha}$.
 Then, 
 \begin{align} \nonumber
     &\underset{a \leftarrow A}{\mathbb{E}}\max_{\x}\sum_{\y:
      P_{\X|\Y}(\x|\y)) \ge 2^{-\nu}}\pr[\Y=\y|A=a] \\\nonumber
     &\le  2^\alpha  \max_{\x}\sum_{\y:P_{\X|\Y}(\x|\y)) \ge 2^{-\nu}}\pr[\Y=\y ].
 \end{align}
 \end{lemma}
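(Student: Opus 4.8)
The plan is to unfold the outer expectation as a finite sum over the values taken by $A$, convert each conditional probability into a joint probability by absorbing the weight $\pr[A=a]$, and then discard the conditioning through a crude marginal bound that loses exactly the factor $2^{\alpha}$. First I would fix notation: for each $\x$ write $S_{\x}=\{\y : P_{\X|\Y}(\x|\y)\ge 2^{-\nu}\}$ for the inner summation domain, so that the left-hand side is $\underset{a\leftarrow A}{\mathbb{E}}\,\max_{\x}\pr[\Y\in S_{\x}\mid A=a]$ and the right-hand side is $2^{\alpha}\max_{\x}\pr[\Y\in S_{\x}]$. The one structural point to respect is that the maximizing $\x$ inside the expectation depends on $a$, so the maximum cannot simply be pulled outside; the passage to joint probabilities is precisely what neutralizes this $a$-dependence.

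Next, for each value $a$ let $\x_a$ denote a maximizer of $\pr[\Y\in S_{\x}\mid A=a]$. Expanding the expectation and multiplying each term by its weight $\pr[A=a]$ turns the conditional probability into a joint probability:
\[
\underset{a\leftarrow A}{\mathbb{E}}\,\max_{\x}\pr[\Y\in S_{\x}\mid A=a]=\sum_{a}\pr[A=a]\,\pr[\Y\in S_{\x_a}\mid A=a]=\sum_{a}\pr[\Y\in S_{\x_a},\,A=a].
\]
I would then apply the trivial bound $\pr[\Y\in S_{\x_a},A=a]\le \pr[\Y\in S_{\x_a}]\le \max_{\x}\pr[\Y\in S_{\x}]$ to every summand. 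Since $A$ takes at most $2^{\alpha}$ distinct values, summing these at most $2^{\alpha}$ identical upper bounds yields $2^{\alpha}\max_{\x}\pr[\Y\in S_{\x}]$, which is exactly the claimed right-hand side. Restated in terms of the explicit inner sums over $\y$, this is precisely the inequality in the lemma.

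There is no genuine obstacle here beyond bookkeeping the $a$-dependence of the maximizer; the argument is entirely elementary, and the factor $2^{\alpha}$ arises solely from the cardinality bound on the support of $A$ together with the marginalization step $\pr[\Y\in S_{\x_a},A=a]\le\pr[\Y\in S_{\x_a}]$. I would note that this loss is essentially the price of allowing a different maximizer $\x_a$ for each $a$: had the maximizers coincided, the events $\{A=a\}$ would partition the probability of a single set $S_{\x^{*}}$ and no $2^{\alpha}$ factor would appear, so the bound is tight in the worst case where the optimal sets $S_{\x_a}$ are disjoint and each captures an equally weighted slice of $A$.
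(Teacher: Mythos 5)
Your proof is correct and follows essentially the same route as the paper's: expand the expectation over the at most $2^{\alpha}$ values of $A$, absorb $\pr[A=a]$ to convert each conditional probability into a joint probability, bound the joint by the marginal $\pr[\Y=\y]$, and sum the resulting identical bounds. Your explicit handling of the $a$-dependent maximizer $\x_a$ and the closing remark on tightness are fine additions but do not change the argument.
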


\begin{proof}
{
\begin{align}\nonumber
&\underset{a \leftarrow A}{\mathbb{E}} \max_{\x}\sum_{\y: P_{\X|\Y}(\x|\y)) \ge 2^{-\nu}}\pr[\Y=\y|A=a] \\\nonumber
&=\sum_a   \pr[A=a]  \max_{\x}\sum_{\y:P_{\X|\Y}(\x|\y))  \ge 2^{-\nu}}\pr[\Y=\y|A=a] \\\nonumber
&=\sum_a  \max_{\x}\sum_{\y:P_{\X|\Y}(\x|\y))  \ge 2^{-\nu}}\pr[\Y=\y|A=a]\pr[A=a] \\\nonumber
&=\sum_a\max_{\x}\sum_{\y:P_{\X|\Y}(\x|\y)) \ge 2^{-\nu}}\pr[\Y=\y, 
A=a]  \\\nonumber
&\le \sum_a\max_{\x}\sum_{\y:P_{\X|\Y}(\x|\y)) \ge 2^{-\nu}}\pr[\Y=\y ] \\\nonumber
&\le 2^{\alpha} \max_{\x}\sum_{\y:P_{\X|\Y}(\x|\y)) \ge 2^{-\nu}}\pr[\Y=\y ] 
\end{align}
}   
\end{proof}

Let \ksucc~ denote the best success probability of the adversary in guessing a key $\x_f$ for the encapsulation  algorithm \ref{alg:iK.Enc}, such  that it is considered valid by the decapsulation  algorithm \ref{alg:iK.dec}. That is, $\Pr(\x_f|\y)\geq 2^{-\nu}$ for (the unknown) decapsulation key $\y$.


\begin{lemma}\label{maxprob}
\remove{
Assume that the highest success probability of constructing 
a valid ciphertext $c_f$ be by guessing a key  $\x_f$ for $h$  that  satisfies $\Pr(\x_f|\y)\geq  2^{-\nu}$. We then have the following.
}

The  success probability of constructing 
a  ciphertext $c_f$ that is accepted by the decapsulation algorithm is bounded as follows.

1. {
\begin{align} 
&\mbox{\ksucc}  \geq 
\max \{  \max_\x \sum_{\y': \Pr(\x|\y')\geq 2^{-\nu}} \Pr(\x,\y'|\z),
\max_\y \sum_{\x': \Pr(\x'|\y)  \geq
2^{-\nu}} \Pr(\x',\y|\z)\}. \label{PS}
\end{align}}

2.   Assuming equality in the above bound,
\begin{eqnarray*}
&\mbox{\ksucc}  \leq  \max \{\max_\x\sum_{\y':\Pr(\x|\y')\geq 2^{-\nu}} P_\Y(\y'|\z), 
\max_\y\sum_{\x':\Pr(\x'|\y)\geq 2^{-\nu}} P_\X(\x'|\z)
\}
\end{eqnarray*}

\remove{
and taking into account the adversary's information $(k,c)$, the bound can be written as below.
\begin{eqnarray*}
&\max \{\max_\x\sum_{\y':\Pr(\x|\y')\geq 2^{-\nu}} P_\Y(\y'|(k,c)), \\\
&\max_\y\sum_{\x':\Pr(\x'|\y)\geq 2^{-\nu}} P_\X(\x'|(k,c))
\}
\end{eqnarray*}
}
\end{lemma}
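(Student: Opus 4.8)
The plan is to read the best forgery success probability (denoted \ksucc{} in the statement) as the optimal acceptance probability of a \emph{key-guessing} forgery. Conditioned on Eve's information $\z$, the adversary commits to a key $\x_f$, sets $v_f=h(\x_f,(s'_f,s_f))$ for freshly chosen seeds, and submits $c_f=(v_f,s'_f,s_f)$; by construction $\x_f$ is then a match for $\ikemd(\y,c_f)$ precisely when $\x_f\in\mathcal{R}_\y:=\{\x:\Pr(\x\mid\y)\ge 2^{-\nu}\}$, and $c_f$ is accepted when $\x_f$ is moreover the \emph{unique} such match. I would first record this reformulation, so that both inequalities reduce to estimating, over the random draw of $(\X,\Y)$ given $\Z=\z$, the mass of the accepting region $\mathcal{R}_\y$.

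For the lower bound (part 1), I would exhibit two concrete forgeries and argue that each is accepted with at least the stated probability. For the first (Alice-side) term, fix a candidate key $\x$ and consider the sufficient event $\{\X=\x\}\cap\{\x\in\mathcal{R}_\Y\}$: on this event the guessed key coincides with Alice's genuine sample and lies in Bob's accepting set, so by the $\epsilon$-correctness established in Theorem~\ref{Thm:ikemotsecurity} it is (up to the correctness slack) the unique match and $c_f$ is accepted. The probability of this event, conditioned on $\z$, is exactly $\sum_{\y':\Pr(\x\mid\y')\ge 2^{-\nu}}\Pr(\x,\y'\mid\z)$, and maximising over the guess $\x$ gives the first term. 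The second (Bob-side) term is the symmetric statement obtained by conditioning on the decapsulation key: for the most favourable $\y$, Alice's genuine sample lands in $\mathcal{R}_\y$ with probability $\sum_{\x':\Pr(\x'\mid\y)\ge 2^{-\nu}}\Pr(\x',\y\mid\z)$. Since $P_S$ is the \emph{best} acceptance probability, it dominates both, giving the maximum.

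For the upper bound (part 2), I would invoke the hypothesis that the key-guessing attack is optimal, i.e. equality holds in part 1, so that $P_S$ equals the maximum of the two joint-probability sums. Each joint term is then relaxed to a marginal by the trivial inequalities $\Pr(\x,\y'\mid\z)\le \Pr(\y'\mid\z)=P_\Y(\y'\mid\z)$ and $\Pr(\x',\y\mid\z)\le \Pr(\x'\mid\z)=P_\X(\x'\mid\z)$, summing over the same index sets; taking the maximum over $\x$ and over $\y$ respectively yields the claimed bound. This step is essentially bookkeeping once the reformulation is in place.

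The main obstacle is not the algebra but the two modelling points underlying the lower bound: justifying that restricting to the ``guess equals the true sample'' event legitimately lower-bounds acceptance (this is where $\epsilon$-correctness and the uniqueness requirement of $\ikemd$ must be handled, so that a genuine key is not rejected because of an accidental second match), and justifying the equality hypothesis of part 2, namely that no blind choice of $v_f$ beats key-guessing. The latter is exactly where the universal-hash property of $h$ (Lemma~\ref{lemma:uhf}) controls the chance that an arbitrary $v_f$ is hit by some element of $\mathcal{R}_\y$, and I expect this to be the delicate part to make rigorous; the companion Lemma~\ref{lemma:fuzzy} is then what lifts these $\z$-conditional estimates to the setting where the adversary has additionally observed one encapsulation transcript, at the cost of a factor $2^{\alpha}$.
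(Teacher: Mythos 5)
Your proposal follows essentially the same route as the paper's proof: part 1 is obtained by evaluating the acceptance probability of the two key-guessing strategies (guess Alice's sample, respectively guess Bob's sample) as the joint-probability sums $\sum_{\y'}\Pr(\x,\y'\mid\z)$ and $\sum_{\x'}\Pr(\x',\y\mid\z)$ over the accepting sets and taking the maximum, and part 2 relaxes each joint term to the corresponding marginal via $\Pr(\x,\y'\mid\z)\le P_\Y(\y'\mid\z)$ and $\Pr(\x',\y\mid\z)\le P_\X(\x'\mid\z)$ under the stated equality hypothesis. The only differences are cosmetic: you explicitly flag the uniqueness/correctness slack in the lower bound and propose to justify the optimality of key-guessing via the universal-hash property, whereas the paper glosses over the former and argues the latter by observing that an unbounded adversary can extract a key from any well-formed ciphertext (a plaintext-awareness-style argument); neither changes the structure of the proof.
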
 

\begin{proof} 
1.  The encapsulation algorithm uses the key $\x$, and the decapsulation algorithm uses the key $\y$, both unknown to the adversary. To be accepted by the decapsulation algorithm, a guessed value $\x'$ must belong to the set $\cal R$ defined by the decapsulation algorithm \ref{alg:iK.dec}. 
That is $\Pr(\x'|\y)\geq 2^{-\nu}$ for the unknown $\y$.
The adversary may use two types of guessing strategies to find a candidate $\x_f$:      guess Alice's key  from 
${\cal X}^n$ such that it belongs to $\cal R$  
for the unknown $\y$, or  
guess  a Bob's key  $\y$, 
and choose one of the $\x'$  that satisfy $\cal R$ defined 
with respect to $\y$. 
The best success probabilities of these two types of guessing strategies  are denoted by
$P^{(\X)}_{S} $ and $  P^{(\Y)}_{S}$, respectively.


We have
 \begin{eqnarray} 
 P_S & \geq & \max  \{ \Pr(\mbox{Guess $\x$ from ${\cal X}^n$}, \Pr(\mbox{Guess $\y$ from ${\cal Y}^n$}) \} \nonumber\\
 & \geq & \max\{ P^{(\X)}_{S},   P^{(\Y)}_{S} \} \label{PSDef}
\end{eqnarray}
%
The encapsulation and decapsulation algorithms are deterministic and probabilities are 
over the probability space $\Pr(\x,\y,\z)$.
\remove{
and after observing the ciphertext 
$(k,c)$ that is a function of $\x$,  using $\Pr(\x,\y|(k,c))$.
}


$(i)$ To bound $P^{(\X)}_{S}=\Pr(\mbox{Guess} \x \mbox{from } {\cal X}^n)$, we note that each $\x$ will be accepted by all $\y$ that satisfy $\Pr(\x|\y)\geq 2^{-\nu}$. 

\remove{
For a value of $\y$ if the adversary's guess 
must

for each $\z$ satisfies $\Pr(\x|\y\z)\geq 2^{-\nu}$, then 
\[
\sum_{\z\in {\cal Z}^n}{\Pr(\z)\Pr(\x|\y\z)}= Pr(\x|\y)\geq 2^{-\nu},
\]
and the guessed values will be in $\cal R$
}

This means that the 
adversary's success probability that a ciphertext $c=(v, s',s)$ 
that is constructed using a guessed key 
$\X=\x$ be accepted by decapsulation algorithm that uses the 
unknown key $\y$, corresponds to 
the probability of the set of  sample points $(\X=\x,\Y=\y' )$ (key pairs) where \[\sum_{\y': \Pr(\x|\y' )\geq 2^{-\nu}} \Pr(\x,\y'|\z )\] which can be computed by the adversary (conditional distribution $\Pr(\X,\Y| \Z)$).
Therefore, to each $\x \in {\cal X}^n$ we associate a weight  $\sum_{\y': \Pr(\x|\y')\geq 2^{-\nu}} \Pr(\x,\y'|\z)$ that is the acceptance probability of the ciphertext by some $\y \in {\cal Y}^n $. 
The best guess for  $\x$ will be by finding the element of ${\cal X} ^n$  with the highest acceptance probability, 
\[P^{(\X)}_{S}  = \max_\x \sum_{\y':\Pr(\x|\y' )\geq 2{-\nu}} \Pr(\x,\y'|\z ).\]

(This also determines the value ${\x}^*$ 
(i.e. 
$\x_f$) that can be used to construct $c_f$.) (We note that the   acceptance probabilities attached to elements of ${\cal X}^n$ do not form a probability distribution on $\cal X$.)

$(ii)$ To find 
$P^{(\Y)}_S$ using ${\cal Y}^n$, we note that
{\em each $\y$ will   accept   all Alice's key values $\x'$s that satisfy 
$\Pr(\x'|\y)\geq 2^{-\nu}$}. 

This attaches an acceptance probability  to each $\y\in {\cal Y}^n$ that is the total  
 probability of  ciphertexts $c=(v, s',s)$
being accepted by a $\y$ when Alice's key is not known, and is obtained by summing probabilities of
the set  $(\x',\y)$ of sample points (key instances) as follows 
\[\sum_{\x': \Pr(\x'|\y)\geq 2^{-\nu}} \Pr(\x',\y|\z).\] 
Thus, the best guess for Bob's key 
$\y$ for accepting a ciphertext when Alice's key is unknown, is given by
\begin{align}\label{eqn:guessy}
P^{(\Y)}_{S}= \max_\y \sum_{\x': \Pr(\x'|\y)\geq 2^{-\nu}} \Pr(\x',\y|\z).
\end{align}

\remove{
In the following for simplicity of presentation, 
we first consider $\Pr(\x,\y)$ and then use the conditional distribution in the final result. \\

(i) To bound $P^{(\X)}_{S}=\Pr(\mbox{Guess $\x$ from ${\cal X}{\color{blue}^n}$})$, we note that each $\x$ will be accepted by all $\y$ that satisfy $\Pr(\x|\y)\geq 
2^{-\nu}$.

This means that the 
probability that a ciphertext $c=(v, s',s)$ 
that is constructed with a guessed key 
$\X=\x$ be accepted when Bob's key is unknown, corresponds to 
the probability of the set of  sample points $(\X=\x,\Y=\y')$ (key pairs) where \[\sum_{\y': \Pr(\x|\y')\geq 2^{-\nu}} \Pr(\x,\y').\]
Therefore, to each $\x \in {\cal X}{\color{blue}^n}$ we associate a weight that is the acceptance probability of the ciphertext $\sum_{\y': \Pr(\x|\y')\geq 2^{-\nu}} \Pr(\x,\y')$ by some $\y \in {\cal Y}^n $. 
The best guess for  $\x$ is by finding the element of ${\cal X} ^n$  with the highest acceptance probability, 
\[P^{(\X)}_{S}  = \max_\x \sum_{\y':\Pr(\x|\y')\geq 2{-\nu}} \Pr(\x,\y').\]

(This also determines the value ${\x}^*$ 
(i.e. 
$\x_f$) that can be used to construct $c_f$.) (We note that the   acceptance probabilities attached to element of ${\cal X}^n$ do not form a probability distribution on $\cal X$.)

(ii) To find $\Pr(\mbox{Guess $\y$ from ${\cal Y}^n$})$, we note that {\em each $\y$ will   accept   all Alice's key values $\x'$s that satisfy 
$\Pr(\x'|\y)\geq 2^{-\nu}$}. 

This also attaches an acceptance probability  to each $\y\in {\cal Y}^n$ that is the total  
 probability of  ciphertexts $c=(v, s',s)$
being accepted by a $\y$ when Alice's key is not known, and is obtained by summing probabilities of
the set  $(\x',\y)$ of sample points (key instances) where 
\[\sum_{\x': \Pr(\x'|\y)\geq 2^{-\nu}} \Pr(\x',\y).\] 
Thus, the best guess for Bob's key 
$\y$ for accepting a ciphertext when Alice's key is unknown, is given by
\begin{align}\label{eqn:guessy}
P^{(\Y)}_{S}= \max_\y \sum_{\x': \Pr(\x'|\y)\geq 2^{-\nu}} \Pr(\x',\y).
\end{align}
}

Therefore,  
{
\begin{align} \nonumber
&P_S   \geq
\max \{ P^{(\X)}_{S},   P^{(\Y)}_{S} \} \\
&\quad \geq \max \{  \max_\x \sum_{\y': \Pr(\x|\y')\geq 2^{-\nu}} \Pr(\x,\y'|\z),
\max_\y \sum_{\x': \Pr(\x'|\y) \geq 2^{-\nu}} \Pr(\x',\y|\z)\} \label{PSproof}
\end{align}}

{\em 2. Simplifying the bound:} 
Consider the case that the expression \ref{PS} holds with equality.  That is the 
$\x$ value 
that results in the highest  success probability for successful ciphertext forgery 
 can be obtained  by using one of the two key guessing strategies
outlined above to guess a key $\x_f$ and compute $h(\x, (s',s))$. This is true because any ciphertext that is accepted by the decapsulation algorithm must be well formed, and correspond to the evaluation of a polynomial defined by $(s',s)$ using a key that satisfies $P(\x|\y)\geq 2^{-\nu}$.  A computationally unbounded adversary can always find the roots of such a polynomial, 
and so any forged ciphertext can be generated by choosing a key $\x_f$  that satisfies the required condition, and using the encapsulation algorithm.
 This is somewhat similar to the notion of plaintext awareness in computational security~\cite{Bellare1994,BellarePalacio2004}, 
where it is assumed that
 the adversary can create ciphertexts for which it is able to ``extract'' the corresponding plaintext.

%
Thus we have, \[P_S = \max \{ P^{(\X)}_{S},   P^{(\Y)}_{S} \}. \] 
We then 
use the following approximation in terms of marginal distributions of $\Pr_\X(\x)$ and $\Pr_\Y(\y)$. 

Let $\x^*$  and  $\y^*$ be the $\x$ and $\y$ values that maximize the expressions,
$\max_\x \sum_{\y': \Pr(\x|\y')\geq 2^{-\nu}} \Pr(\x,\y'|\z)$  and 
\\$ \max_\y \sum_{\x': \Pr(\x'|\y)\geq 2^{-\nu}} \Pr(\x',\y|\z)$, respectively, and $\Pr_\X(\x)$ and $\Pr_\Y(\y)$ denote 
marginal distributions of $\X$ and $\Y$. 

Since $ \Pr_{\X,\Y}(\x,\y|\z) \leq \Pr_\X(\x|\z)$,  we have 

{
\begin{align}\nonumber
    P_S &\leq \max \{\sum_{\y':\Pr_{\X|\Y}(\x^*|\y')\geq 2^{-\nu}} \Pr_\Y(\y'|\z), 
    \sum_{\x':\Pr_{\X|\Y}(\x'|\y^*)\geq 2^{-\nu}} \Pr_\X(\x'|\z)\} 
\end{align}
}

Note that
{
\begin{eqnarray}
&&\sum_{\y':\Pr(\x^*|\y')\geq 2^{-\nu}} \Pr_\Y(\y'|\z) \leq \max_\x\sum_{\y':\Pr(\x|\y')\geq 2^{-\nu}} \Pr_\Y(\y'|\z) \label{eq1}  \\
&&\sum_{\x':\Pr(\x'|\y^*)\geq 2^{-\nu}} \Pr_\X(\x'|\z) \leq \max_\y\sum_{\x':\Pr(\x'|\y)\geq 2^{-\nu}} \Pr_\X(\x'|\z) \label{eq2prob}
\end{eqnarray}
}
This is true because the RHSs of \ref{eq1} and \ref{eq2prob} are maximizing over all $\x$ values of $\x$ and $\y$, respectively.

Therefore,
\begin{align}\label{eqn:prob}
P_S&\leq \max \{\max_\x\sum_{\y':\Pr(\x|\y')\geq 2^{-\nu}} \Pr_\Y(\y'|\z),
\max_\y\sum_{\x':\Pr(\x'|\y)\geq 2^{-\nu}} \Pr_\X(\x'|\z)
\}.
\end{align}

    
\end{proof}

 {\bf Note.}
We will use  the above calculation for conditional distributions that takes into account all the adversary's   information about $\x$,  in particular after one query, that is  $(k,c)$, 
\begin{align}\label{eqn:guessprob}
P_S&\leq \max \{\max_\x\sum_{\y':\Pr(\x|\y')\geq 2^{-\nu}} P_\Y(\y'|(k,c), \z),
\max_\y\sum_{\x':\Pr(\x'|\y)\geq 2^{-\nu}} P_\X(\x'|(k,c), \z)
\}.
\end{align}

 In the following we will use $h(\x,(s', s))$   and recall the following notations: $(i)$ $\x \in {\cal X}^n $ is written as  
 $\x = (\x_{2}\parallel  \x_{1})$ and $\x_{2}= (x_{n}, x_{n-1}, \cdots x_{ t+1})$, and $\x_{1}= (x_{ t}, x_{t-1}, \cdots x_{1})$
where  
$``\parallel"$  denotes 
concatenation of two vectors; and $(ii)$ 
$s'$, suitably padded, is written as a sequence $(s'_1,\cdots,s'_r)$ where $s'_i\in GF(2^{n-t})$, $\forall i \in \{1,\cdots,r\}$, 
and $s=(s_2,s_1)$ where $s_2\in GF(2^{n-t})$ and $s_1\in GF(2^t)$.
%

\begin{lemma}\label{claim:rootx} 
The lemma has two parts.
\begin{enumerate}
\item[(i)] 
The number of 
 $\x = (\x_{2}  \parallel \x_{1})$ that satisfies  the following  two equations (in  $GF(2^t)$)  for two values of $v$ and $v_f$: 
 {
 \begin{eqnarray}
v&=& h(\x, (s', s))=\big[(\x_{2})^{r+3} + {\sum}_{i=1}^r s'_i(\x_{2})^{i+1} + 
s_2 \x_{2}\big]_{1\cdots t} + (\x_{1})^{3} +s_1 \x_{1} \label{eqn:robxappn1}\\
v_f&=&h(\x, s_f', s_f)=\big[(\x_{2})^{r+3} + {\sum}_{i=1}^r s'_{f,i}(\x_{2})^{i+1} +  s_{f,2} \x_{2}\big]_{1\cdots t} 
+ (\x_{1})^{3} +s_{f,1} \x_{1},  \label{eqn:improbappn21}
 \end{eqnarray}
 }
is at most $3(r+1)2^{n-2t}$.

 In these equations, 
 $\x_{2}, s_2, s_{f,2}, s_{f,1}',..~,s_{f,r}', s'_{1},..~,s'_{r} \in GF(2^{n-t})$,
 $v,v_f,\x_{ 1},s_1,s_{f,1}\in GF(2^t)$, 
and \\
   $((s'_{f,1},\cdots,s'_{f,r}),(s_{f,2},s_{f,1})) \ne ((s'_{1},\cdots,s'_{r}),(s_2, s_1))$.

\item[(ii)] Let $\x = (\x_{ 2}  \parallel \x_{1})$ and
$\x' = (\x'_{ 2}  \parallel \x'_{1})$
satisfy 
$v=h(\x, (s', s))$ and $v_f=h(\x', s'_f, s_f)$, respectively,  where $s'_f$ and $s_f$ are defined as in $(i)$.
Assume  $\x=\x' + \e$ for  some $\e=(\e_2 \parallel \e_1) \in GF(2^n)$, 
$\e_2\in GF(2^{n-t}), \e_1\in GF(2^t)$ and
$\e \ne {\bf 0}$. 
 Then the number of $\x' = (\x'_{ 2}  \parallel \x'_{1})$ that satisfies  the following equations:
{
\begin{eqnarray}
v&=&\big[(\x'_{2} + \e_2)^{r+3} + {\sum}_{i=1}^r s'_i(\x'_{2} + \e_2)^{i+1} 
+ s_2 (\x'_{2} + \e_2)\big]_{1\cdots t}  + (\x'_{ 1} + \e_1)^{3} +s_1 (\x'_{ 1} + \e_1)  \label{eqn:robxappn11}\\
v_{f}&=&\big[(\x'_{2})^{r+3} + {\sum}_{i=1}^r s'_{f,i}(\x'_{ 2})^{i+1} + s_{f, 2} \x'_{2}\big]_{1 \cdots t} + (\x'_{1})^{3}+  s_{f,1} \x'_{ 1}, \label{eqn:robx11} 
\end{eqnarray}
} 
 is at most $(r+3)(r+2)2^{n-2t}$ where,
 $(\e_2 \parallel \e_1)$ is a 
 non-zero vector 
  and \\$(v_{f},(s'_{f, 1},\cdots,s'_{f,r }),(s_{f,2},s_{f,1})) \ne (v,(s'_{1},\cdots,s'_{r}),(s_2,s_1))$. 

\end{enumerate}
\end{lemma}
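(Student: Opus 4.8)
Throughout I write $\pi(\cdot):=[\cdot]_{1\cdots t}\colon GF(2^{n-t})\to GF(2^{t})$ for the truncation to the first $t$ bits. Since $t\le n/2$ this map is a $GF(2)$-linear surjection whose kernel has size $2^{n-2t}$, so every fibre $\pi^{-1}(c)$ has exactly $2^{n-2t}$ elements. The one tool I would use repeatedly is the resulting \emph{truncated root bound}: for a nonzero $g\in GF(2^{n-t})[X]$ of degree $d$ and any $c\in GF(2^{t})$, one has $|\{\x_{2}:\pi(g(\x_{2}))=c\}|\le d\,2^{n-2t}$, because $\pi^{-1}(c)$ contains $2^{n-2t}$ values and each has at most $d$ preimages under $g$. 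It is convenient to abbreviate, for a seed with high part $(s',s_{2})$, the polynomial $P_{s',s_{2}}(\x_{2})=(\x_{2})^{r+3}+\sum_{i=1}^{r}s'_{i}(\x_{2})^{i+1}+s_{2}\x_{2}$, so that by \eqref{defn:uhash} we have $h(\x,(s',s))=\pi\big(P_{s',s_{2}}(\x_{2})\big)+(\x_{1})^{3}+s_{1}\x_{1}$.

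\textbf{Part (i): the clean cases.} The plan is to add \eqref{eqn:robxappn1} and \eqref{eqn:improbappn21}; in characteristic $2$ the shared term $(\x_{1})^{3}$ cancels, and writing $Q:=P_{s',s_{2}}+P_{s'_{f},s_{f,2}}$ (whose two leading terms $(\x_{2})^{r+3}$ also cancel, so $\deg Q\le r+1$) one is left with the single relation $v+v_{f}=\pi(Q(\x_{2}))+(s_{1}+s_{f,1})\x_{1}$. I then split on the seed difference. If $s_{1}=s_{f,1}$ then, because the seeds differ, the high parts must differ and $Q\neq0$; the relation forces $\pi(Q(\x_{2}))=v+v_{f}$, which by the tool admits at most $(r+1)2^{n-2t}$ values of $\x_{2}$, and for each of them the two original equations collapse into one cubic in $\x_{1}$ with at most $3$ roots, giving at most $3(r+1)2^{n-2t}$ pairs. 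If instead the high parts coincide (so $Q=0$) then necessarily $s_{1}\neq s_{f,1}$, the relation pins $\x_{1}$ uniquely, and one original equation becomes $\pi(P_{s',s_{2}}(\x_{2}))=\text{const}$, bounded by $(r+3)2^{n-2t}\le 3(r+1)2^{n-2t}$. Both of these are immediate.

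\textbf{The main obstacle.} The remaining case, $s_{1}\neq s_{f,1}$ together with $Q\neq0$, is where the real work lies. Here the added relation only gives $\x_{1}=\beta\big(\pi(Q(\x_{2}))+v+v_{f}\big)$ with $\beta=(s_{1}+s_{f,1})^{-1}$, so each $\x_{2}$ admits a unique candidate $\x_{1}$; substituting it into \eqref{eqn:robxappn1} collapses everything to the single scalar condition $\pi\big(P_{s',s_{2}}(\x_{2})\big)=\Psi\big(\pi(Q(\x_{2}))\big)$, where $\Psi(u)=\beta^{3}(u+v+v_{f})^{3}+s_{1}\beta(u+v+v_{f})+v$ is a cubic. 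The obstruction is that $\Psi\circ\pi$ is not $\pi$ of a polynomial in $\x_{2}$ — the cube of a truncation is not the truncation of a cube — so the truncated root bound does not apply verbatim. My plan is to realize this condition as the eliminant (resultant in $\x_{1}$ over $GF(2^{t})$) of the two cubics \eqref{eqn:robxappn1} and \eqref{eqn:improbappn21}, and then to bound its $\x_{2}$-solutions by grouping them according to the value $u=\pi(Q(\x_{2}))$ and feeding each group through the tool; the degree that governs the count is $\deg\Psi\cdot\deg Q=3(r+1)$, which is exactly the factor in the claimed bound $3(r+1)2^{n-2t}$. Making this truncation bookkeeping rigorous is the step I expect to be hardest, and it is plausibly where the analysis of \cite{ska2023} needed correction.

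\textbf{Part (ii).} The argument is structurally identical, but the shift $\x=\x'+\e$ with $\e\neq0$ raises the degrees. Adding \eqref{eqn:robxappn11} and \eqref{eqn:robx11}, the leading terms $(\x'_{2}+\e_{2})^{r+3}$ and $(\x'_{2})^{r+3}$ again cancel, leaving $D(\x'_{2}):=P_{s',s_{2}}(\x'_{2}+\e_{2})+P_{s'_{f},s_{f,2}}(\x'_{2})$ of degree at most $r+2$, while the cubic difference $(\x'_{1}+\e_{1})^{3}+(\x'_{1})^{3}=\e_{1}(\x'_{1})^{2}+\e_{1}^{2}\x'_{1}+\e_{1}^{3}$ is now only quadratic in $\x'_{1}$. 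I would rerun the same case analysis, now branching on whether $\e_{1}=0$, whether $\e_{2}=0$, and whether the surviving coefficient of $\x'_{1}$ vanishes; in every branch one eliminates $\x'_{1}$ and bounds the number of $\x'_{2}$ through the truncated root bound. Carrying the degrees through the elimination — the degree-$(r+3)$ equation \eqref{eqn:robx11} against the degree-$(r+2)$ reduced relation — produces the product $(r+3)(r+2)$ and hence the stated bound $(r+3)(r+2)2^{n-2t}$, with the same truncation-versus-polynomial bookkeeping of part (i) as the crux.
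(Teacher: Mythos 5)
Your treatment of part (i) in the cases $s_1=s_{f,1}$ and ``high parts equal'' coincides with the paper's first case (subtract the two equations, bound the number of $\x_2$ landing in a fixed truncation fibre by $(r+1)2^{n-2t}$, then at most three roots of the residual cubic in $\x_1$), and your observation that truncation does not commute with cubing is sharp and correct. But in the remaining case ($s_1\neq s_{f,1}$ and $Q\neq 0$) your proposal stops exactly where the count must be produced: you state a plan (a resultant in $\x_1$, then grouping by $u=\pi(Q(\x_2))$) and concede it is not rigorous. As described, the grouping cannot give the claimed bound: for each of the $2^t$ values of $u$, the fibre $\{\x_2:\pi(Q(\x_2))=u\}$ already has up to $(r+1)2^{n-2t}$ elements, and the second condition $\pi(P(\x_2))=\Psi(u)$ constrains $P(\x_2)$ and $Q(\x_2)$ \emph{jointly} rather than a single polynomial, so your truncated root bound does not apply inside a fibre; summing the per-fibre bounds over $u$ yields $(r+1)2^{n-t}$, larger than the target $3(r+1)2^{n-2t}$ by a factor of order $2^t$. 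The resultant route fails for the very reason you flag: the quantities $\pi(P(\x_2))$ and $\pi(Q(\x_2))$ appearing as coefficients of the two cubics in $\x_1$ are not polynomials in $\x_2$. For comparison, the paper closes this case by solving \eqref{mac:seconddiff21} for $\x_1$, substituting into \eqref{eqn:robxappn1}, and reading off a leading term $\big[-(s_1-s_{f,1})^{-3}(s'_r-s'_{f,r})^3(\x_2)^{3(r+1)}\big]_{1\cdots t}$ plus a remainder of degree at most $3r+2$, hence at most $3(r+1)2^{n-2t}$ values of $\x_2$, each with a unique $\x_1$ --- i.e.\ it performs precisely the commutation of $[\cdot]_{1\cdots t}$ with cubing that you (rightly) decline to take for granted. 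So your attempt does not contain a proof of part (i), and the repair you sketch needs a different idea, not just bookkeeping.

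Part (ii) in your write-up is likewise only a sketch, and the paper's actual argument is different in kind from what you propose: it subtracts \eqref{eqn:robx11} from \eqref{eqn:robxappn11} to obtain a bivariate relation \eqref{eqn:robusttag1111} in $(\x'_2,\x'_1)$ of total degree at most $r+2$ (both the $(\x'_2)^{r+3}$ terms and the $(\x'_1)^3$ terms cancel, the latter leaving the quadratic $\e_1(\x'_1)^2+\e_1^2\x'_1+\e_1^3$ you computed), keeps \eqref{eqn:robx11} as a second bivariate equation of degree at most $r+3$, and invokes B\'ezout's theorem to bound the common solutions by $(r+3)(r+2)$, finally multiplying by $2^{n-t}/2^t=2^{n-2t}$ to absorb the truncation. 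Your alternative plan of rerunning the elimination with branches on $\e_1=0$ and $\e_2=0$ would reintroduce in every branch the same truncation-versus-cubing obstruction left unresolved in part (i). The missing ingredient in both parts is a justified method for counting solutions of a system that mixes polynomial identities over $GF(2^{n-t})$ with their images under $[\cdot]_{1\cdots t}$; your proposal makes that lacuna visible but does not fill it.
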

\begin{proof} 
$(i)$ From equation~\ref{eqn:robxappn1} and equation~\ref{eqn:improbappn21}, we have 
\begin{align} \label{mac:seconddiff21}
v-v_f=&\big[{\sum}_{i=1}^r (s'_{i} - s'_{f,i})(\x_{2})^{i+1} + (s_2-s_{f,2})\x_{2}\big]_{1 \cdots t} +  (s_1 - s_{f,1})\x_{1}
\end{align}
 where arithmetic operations are  
in the corresponding binary extension finite fields. 
    If $(s_1=s_{f,1})$, then \\$((s'_1,\cdots,s'_r),s_2) \ne ((s'_{f,1},\cdots,s'_{f,r}),s_{f,2})$ as $(s'_f,s_f) \ne (s',s)$. Therefore, the degree of the equation~\ref{mac:seconddiff21} in $\x_{2}$ is at most $(r+1)$. The term $\big[{\sum}_{i=1}^r (s'_{i} - s'_{f,i})(\x_{2})^{i+1} +(s_2-s_{f,2})\x_{2}\big]$ takes on each element of the field $GF(2^{n-t})$ at most $(r+1)$ times as $\x_{2}$ varies.  This is because the degree  of the polynomial 
    is $(r+1)$ and so 
    there are at most  $(r+1)(2^{n-t}/2^t)=(r+1)2^{n-2t}$ values of $\x_{2}$ that satisfy  equation~\ref{mac:seconddiff21}. 

Equation~\ref{eqn:robxappn1}, for fixed $v_f$ and $\x_{ 2}$,  is a polynomial of degree three, and hence for each value of $\x_{ 2}$,  will be satisfied by 
at most three values of $\x_{1}$, and
so there are at most $3(r+1)2^{n-2t}$ values of $(\x_{2} \parallel \x_{1})$ that satisfy both  equations~\ref{eqn:robxappn1} and 
~\ref{mac:seconddiff21}.

If $(s_1 \ne s_{f,1})$,  we use equation~\ref{mac:seconddiff21} to express $\x_{ 1}$ as a polynomial in $\x_{2}$, 
and by substituting it in equation~\ref{eqn:robxappn1},  obtain \\$v=[-(s_1 - s_{f,1})^{-3}(s_{r} - s_{f,r})^3 (\x_{2})^{3(r+1)} ]_{_{1 \cdots t}} + g(\x_{2})$ for some polynomial  $g(\x_{ 2})$ of degree at most $3r+2$. Therefore, there are at most $3(r+1)2^{n-2t}$ values of $\x_{2}$ that satisfy this equation. From  equation~\ref{mac:seconddiff21}, 
for each value of $\x_{2}$, there is a unique $\x_{1}$ that satisfies the equation.

Therefore, in both cases, there are at most $3(r+1)2^{n-2t}$ values of $(\x_{2}|| \x_{1})$ that satisfy both the equation~\ref{eqn:robxappn1} and equation~\ref{mac:seconddiff21}.

$(ii)$ 
From equation~\ref{eqn:robxappn11} and equation~\ref{eqn:robx11}, we have 
{
\begin{align}\nonumber 
v-v_{f}=&\Big[\big[(\x_{ 2} + \e_2)^{r+3} + {\sum}_{i=1}^r s'_i(\x_{ 2} + \e_2)^{i+1} +    s_2 (\x_{ 2} + \e_2)\big]_{1\cdots t} + (\x_{ 1} + \e_1)^{3} +   s_1 (\x_{ 1} + \e_1) \Big] \\\label{eqn:robusttag1111} 
& -\Big[\big[(\x_{ 2})^{r+3} +   {\sum}_{i=1}^r s'_{f,i }(\x_{ 2})^{i+1} + s_{f,2 } \x_{2}\big]_{1 \cdots t} 
+ (\x_{1})^{3} + s_{f,1 } \x_{1}\Big].
\end{align}
}

This is an equation in two indeterminates $\x_{ 2}$ and $\x_{ 1}$ of degree at most $(r+2)$. The  equation~\ref{eqn:robx11} is also an equation in two indeterminants $\x_{ 2}$ and $\x_{1}$ of degree at most $(r+3)$. Since 
{\footnotesize $(v_{f},(s'_{f,1 },\cdots,s'_{f,r }),(s_{f,2},s_{f,1 })) \ne (v,(s'_{1},\cdots,s'_{r}),(s_2,s_1))$},
by B\'{e}zout's  theorem~\cite{Coolidge1959,BEZOUTtheorem}, recalled in Section~\ref{appn:bezout},  we have that there are at most $(r+3)(r+2)2^{n-t}/2^t=(r+3)(r+2)2^{n-2t}$ values of $(\x'_{2} \parallel \x'_{1})$ (i.e. $\x'$) that satisfy both  equation~\ref{eqn:robusttag1111} and equation~\ref{eqn:robx11}.

\end{proof}

\begin{theorem}[Ciphertext integrity (INT-$(1;q_d)$-CTXT)]\label{mac2:ctxt}
For an adversary that makes at most one encapsulation query and $q_d$ decapsulation queries, 
the ciphertext integrity defined in Definition~\ref{def: integrity} is 
broken with probability at most 
\begin{align} \nonumber
&q_d(r+3)(r+2)2^{n+\ell -t} \max \{\mathbb{E}_{\z \leftarrow \Z } \Big[ \max_\x\sum_{\y':\Pr(\x|\y')\geq 2^{-\nu}} P(\y'|\Z=\z)\Big], 
\mathbb{E}_{\z \leftarrow \Z } \Big[ \max_\y\sum_{\x':\Pr(\x'|\y)\geq 2^{-\nu}} P(\x'|\Z=\z)\Big]
\}
\end{align}
 For the above number of queries
the  iKEM $\mathsf{ikem}_{cca}$ 
 construction~\ref{ikem:cca} establishes a secret key of length 
 {
\begin{eqnarray*} 
\ell &\le& t + 
\min\{-\log (\mathbb{E}_{\z \leftarrow \Z } \Big[ \max_\x\sum_{\y':\Pr(\x|\y')\geq 2^{-\nu}} P(\y'|\Z=\z)\Big]), 
-\log(\mathbb{E}_{\z \leftarrow \Z } \Big[ \max_\y\sum_{\x':\Pr(\x'|\y)\geq 2^{-\nu}} P(\x'|\Z=\z)\Big])\} \\ 
&& - n - \log\big(\frac{q_d (r+3)(r+2)}{\delta}\big),
\end{eqnarray*}
}
\remove{
\begin{align}\nonumber
\ell &\le t + \\\nonumber
&\quad \min\{-\log (\mathbb{E}_{\z \leftarrow \Z } \Big[ \max_\x\sum_{\y':\Pr(\x|\y')\geq 2^{-\nu}} P(\y'|\Z=\z)\Big]), \\\nonumber
&\quad -\log(\mathbb{E}_{\z \leftarrow \Z } \Big[ \max_\y\sum_{\x':\Pr(\x'|\y)\geq 2^{-\nu}} P(\x'|\Z=\z)\Big])\} \\\nonumber
& \quad - n - \log\big(\frac{q_d (r+3)(r+2)}{\delta}\big),
\end{align}
} 
that is $\delta$-INT-$(1;q_d)$-CTXT  secure.
\end{theorem}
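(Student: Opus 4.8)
The plan is to bound the integrity advantage $Adv^{kint}_{\pk,\ms A}$ in the game $\mathrm{KINT}_{\pk,\ms A}$ of Figure~\ref{fig:int} for an adversary $\ms A$ making one encapsulation and $q_d$ decapsulation queries, and then to read off the admissible key length $\ell$ from that bound; combined with the IND-$q_e$-CEA guarantee of Theorem~\ref{Thm:ikemotsecurity} and the composition Theorem~\ref{theo:comp}, this yields the claimed IND-CCA security. First I would fix the samples $(\x,\y,\z)$ and the single encapsulation response $(k,c)=\big(h'(\x,s'),\,(v,s',s)\big)$ with $v=h(\x,(s',s))$, and observe that the only part of $\ms A$'s view that depends on Alice's secret $\x$ is the pair $(k,v)$, since the seeds $s',s$ are drawn independently of $\x$. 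A forged ciphertext $\hat c=(v_f,s'_f,s_f)$ is accepted exactly when there is a $\hat\x$ in the reconciliation set $\mathcal R=\{\hat\x:\Pr(\hat\x\mid\y)\ge 2^{-\nu}\}$ with $v_f=h(\hat\x,(s'_f,s_f))$. Since $\ms A$ is unbounded, by the plaintext-awareness observation used in the proof of Lemma~\ref{maxprob} it suffices to consider the roots of this polynomial equation, so winning reduces to the event that some such root lands in $\mathcal R$.

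The heart of the argument is a bound for a single forgery attempt. Conditioning on $(\z,k,c)$, I would write
\[
\Pr[\hat c\text{ accepted}\mid (k,c),\z]\le\!\!\sum_{\hat\x:\,v_f=h(\hat\x,(s'_f,s_f))}\!\!\Pr[\hat\x\in\mathcal R\mid (k,c),\z],
\]
and bound each inner term by the conditioned best-guessing probability $P_S$ of Lemma~\ref{maxprob} (which dominates the single-root probability, as $P_S$ is the maximum over both the Alice-key and Bob-key guessing strategies), namely by
\[
\max\Big\{\max_\x\!\!\sum_{\y':\Pr(\x\mid\y')\ge 2^{-\nu}}\!\!\! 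P_\Y(\y'\mid (k,c),\z),\ \max_\y\!\!\sum_{\x':\Pr(\x'\mid\y)\ge 2^{-\nu}}\!\!\! P_\X(\x'\mid (k,c),\z)\Big\}.
\]
The posterior $\Pr[\hat\x\in\mathcal R\mid(k,c),\z]$ is non-negligible only for roots $\hat\x$ that are $\mathcal R$-close to a key consistent with the observed equation $v=h(\x,(s',s))$; Lemma~\ref{claim:rootx} bounds the number of such candidates, treating the case $\hat\x=\x$ via part~(i) (at most $3(r+1)2^{n-2t}$) and the case $\hat\x\ne\x$, written $\x=\hat\x+\e$ with $\e\ne{\bf 0}$, via part~(ii) (at most $(r+3)(r+2)2^{n-2t}$), the latter dominating. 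This intersection with the observation is what reduces the naive count $\sim 2^{n-t}$ of forgery roots to $\sim 2^{n-2t}$. Finally, to strip the conditioning on $(k,v)$ I would apply Lemma~\ref{lemma:fuzzy} with auxiliary variable $A=(k,v)$ of range size at most $2^{\ell+t}$, i.e.\ $\alpha=\ell+t$, which after averaging over $\z$ replaces each conditioned quantity by $2^{\ell+t}$ times its unconditioned counterpart. Combining the root count $(r+3)(r+2)2^{n-2t}$ with the factor $2^{\ell+t}$ gives the per-attempt bound $(r+3)(r+2)2^{\,n+\ell-t}\max\{M_1,M_2\}$, where $M_1$ and $M_2$ denote the two expectations in the statement.

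I would then take a union bound over the $q_d$ decapsulation attempts (the $\perp$ responses to earlier queries leaving the per-attempt bound intact), obtaining $Adv^{kint}_{\pk,\ms A}\le q_d(r+3)(r+2)2^{\,n+\ell-t}\max\{M_1,M_2\}$, which is the stated advantage. Imposing that it be at most $\delta$, taking logarithms, rearranging for $\ell$, and using $-\log\max\{M_1,M_2\}=\min\{-\log M_1,-\log M_2\}$, yields
\[
\ell\le t+\min\{-\log M_1,\,-\log M_2\}-n-\log\!\Big(\tfrac{q_d(r+3)(r+2)}{\delta}\Big),
\]
exactly as claimed. The main obstacle I anticipate is the interplay between counting and conditioning: one must argue that only the roots intersected with the observation equation contribute (so that Lemma~\ref{claim:rootx} applies and the count is $2^{n-2t}$, not $2^{n-t}$), and simultaneously that the correct auxiliary variable in Lemma~\ref{lemma:fuzzy} is $(k,v)$ with $\alpha=\ell+t$, so that the two factors multiply to $2^{\,n+\ell-t}$. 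A secondary point is justifying the adaptive union bound, namely that answering earlier decapsulation queries with $\perp$ does not raise the success probability of later attempts beyond the per-attempt bound; and maintaining the regime $t\le n/2$ from the construction of $h$ in~\eqref{defn:uhash}, so that the root counts remain meaningful.
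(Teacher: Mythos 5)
Your proposal follows essentially the same route as the paper's proof: the same reduction to well-formed (plaintext-aware) forgeries, the same combination of Lemma~\ref{claim:rootx} (root counting, with part~(ii) giving the dominant $(r+3)(r+2)2^{n-2t}$ term), Lemma~\ref{maxprob} (best key-guessing probability), and Lemma~\ref{lemma:fuzzy} with $\alpha=\ell+t$ to remove the conditioning on $(k,v)$, followed by a factor-$q_d$ union bound over decapsulation attempts and the same rearrangement for $\ell$. The only cosmetic difference is that you sum the two root-count cases and note that part~(ii) dominates, whereas the paper takes $\max\{\delta_{\x},\delta_{\x'}\}$ after arguing that the uniqueness check in decapsulation makes the two cases exclusive; both yield the identical final bound.
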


 \begin{proof}
The proof   uses Lemma~\ref{lemma:fuzzy},~\ref{maxprob} and~\ref{claim:rootx}. 
We first 
provide an outline of the main  proof steps, and then  expand each step.

Let
 \psucc~ denote the maximum success probability of the adversary that has access to $(k,c)$ (i.e.  response to an encapsulation query) and constructs a forged ciphertext  $c_f=(v_f, s'_f, s_f)$ where $c_f\neq c$. 
That is,  \psucc~  is the highest success probability of constructing $c_f$ that is accepted by the decapsulation algorithm Algorithm~\ref{alg:iK.dec}. Let \fpsucc ~denote  the expected  value of  \psucc~ 
over all 
query responses $(k,c)$, and \fdpsucc~ denote the expected final success probability with one encapsulation,
 and $q_d$ decapsulation query.

The upper bound on \fdpsucc~ will be obtained in three steps: (1)  bounding \psucc, 
(2)   bounding \fpsucc~ by
finding the expectation  over the random  variables corresponding to the adversary's information, that is the received response $(k,c)$, and finally (3) bounding \fdpsucc ~ that takes into account the decapsulation queries.

\remove{
That is, it will be computed  as follow. 
{\footnotesize
\begin{align}\nonumber
&\mathbb{E}_{(k,c, \z) \leftarrow (K,C, \Z )}\Big[\mathsf{Pr}\big[ v_f=h(\x_f,s'_f,s_f) \text{ $|$ } K=k, C=c, \Z = \z \big]\Big],
\end{align}
}
where $K$, $C$ and $\Z$ are the random variables corresponding to $k$, $c$ and $\z$ respectively, and the expectation is taken over the randomness of $P_{\X|K=k,C=c, \Z=\z}$.    
}

\vspace{1mm}
{\bf Step 1. Bounding  \psucc.}

The adversary has the  key and the ciphertext pair $(k, c) = (k,(v, s', s))$, 
where $v$ is computed using Alice's secret key $\x$ and 
$h(\x,(s',s))$  given 
by the equation~\ref{defn:uhash} (section \ref{ikem:robcca}). The ciphertext  will be accepted by the decapsulation algorithm  ikem.Dec() (Algorithm~\ref{alg:iK.dec}) using Bob's key $\y$ with probability at least $1-\epsilon$.

A forged ciphertext  $c_f=(v_f, s'_f, s_f)$ that is accepted by the decapsulation algorithm 
must  pass the test 
$v\stackrel{?}=h(\x',(s', s ))$ for a unique $\x' \in {\cal R} $  that is found by the decapsulation  algorithm using Bob's key $\y$.
Thus a ciphertext that is accepted by the decapsulation algorithm must be generatable by the generation Algorithm 5 using some {\em (guessed) key}.  We call  ciphertexts that satisfy $v =h(\x',s', s )$  as {\em well-formed}.

{\em 
We assume the adversary can only make a well-formed $c_f$ 
by using the encapsulation 
algorithm (Algorithm~\ref{alg:iK.Enc}) for a guessed key. } 
That is there is no shortcut algorithm can be used by the adversary to  generate a new well-formed ciphertext from other available information.
This assumption holds if the encapsulation algorithm is modelled as a random function (random oracle) for the generation of $c$.

The encapsulation algorithm is deterministic, and so \psucc~ can be 
obtained by,
\begin{enumerate}
    \item  Finding $P_S$, the best guessing probability of a key $\x_f$ that satisfies
    $\x_f \in {\cal R}$
   for Bob's (unknown) $\y$. 
   We use Lemma \ref{maxprob}, part $(ii)$, that assumes the best guessing probability is by using one of the two direct guessing strategies outlined in the lemma.

    \item Take into account the number  of $\x'\neq \x_f$ that results in the  same $c_f=(v_f, s'_f,s_f) $ 
    that is constructed using the key $\x_f$.
      An upper bound on this  number, denoted by $L$,  is obtained 
      in Lemma~\ref{claim:rootx}.

\end{enumerate}


{\bf Step 2. \fpsucc: Expectation over the adversary's view.}
For fixed $(s',s)$, 
let $K$, $C$, $\X $, $\X_{1}$, $\X_{2}$, $V$  be random variables corresponding to $k$, $c$, $\x $, $\x_{1}$, $\x_{2}$ and $v$  respectively, where 
the randomness is over $P_{\X\Y\Z}$. 
The adversary has 
$(k,c)$ derived from  $\x$, 
guesses $\x_f$, and generates the 
ciphertext $c_f=(v_f,s'_f,s_f)$ where 
$v_f =h(\x_f,s'_f,s_f)$.

We consider two cases: $(i)$ the adversary's  guess  $\x_f = \x$ 
where $\x$  is Alice's key, and $(ii)$ 
the adversary's guessed key $ \x_f= \x' \ne \x$ where 
$\x'\in \mathcal{R}$ for the unknown $\y$. 
Let the success probabilities of the adversary in generating a ciphertext $c_f$ corresponding to the above two cases be $\delta_{\x}$ and $\delta_{\x'}$ respectively. 
The decapsulation algorithm $\ikemd(\cdot)$ searches for a {\em unique element} in $\mathcal{R}$ and so  only one of the above two cases will occur,  and 
the success probability of the adversary in generating a $c_f$ will be 
 \begin{eqnarray} \label{eq:forgetag}
     \mbox{\fpsucc} =\max\{\delta_{\x},\delta_{\x'}\}
\end{eqnarray}

where  probability is over $P_{\X\Y\Z}$. 


\vspace{1.5mm}
{\bf  Computing $\delta_{\x}$.} 
The success probability of forging a ciphertext, given a key and ciphertext  pair $(k,c)$, is: 
{
\begin{align}\nonumber
&\mathbb{E}_{(k,c, \z) \leftarrow (K,C, \Z )}\Big[\mathsf{Pr}\big[ v_f=\big[(\x_{2})^{r+3} + {\sum}_{i=1}^r s'_{f,i}(\x_{2})^{i+1} +s_{f,2} \x_{2}\big]_{1\cdots t}  + (\x_{1})^{3} +s_{f,1} \x_{1} \text{ $|$ } K=k, C=c, \Z = \z \big]\Big].
\end{align}
}
 The  known 
 ciphertext $c=(v,s',s)$ and the forged ciphertext $(v_f,s'_f,s_f)$ must satisfy equation~\ref{eqn:robxappn1} and equation~\ref{eqn:improbappn21}, respectively, with $(v_f,s'_f,s_f) \ne (v,s',s)$.
 Note that if $(s'_f,s_f)=(s',s)$, then it must be that $v_f\ne v$ and because $h(\x, (s',s))$ is a single value, 
$v_f \ne h(\x,s_f,s) $ and $\ikemd(\cdot)$ will reject,  which is  a contradiction. Therefore, we only need to consider the case that $(s'_f,s_f) \ne (s',s)$.

From Lemma~\ref{claim:rootx} part $(i)$, the number of 
$\x=(\x_{2} \parallel \x_{1})$  
that satisfy   both   equation~\ref{eqn:robxappn1} and equation~\ref{eqn:improbappn21} is at most \\$3(r+1)2^{n-2t}$. 

Since the adversary is  given a key and ciphertext pair $\big(k,(v,s',s)\big)$,  from Lemma~\ref{maxprob} part (2) and equation~\ref{eqn:guessprob}, we have that the adversary can guess   $(\x_{2}\parallel \x_{1})$ with probability at most 
\begin{align} \label{guessprob}
\max \{&\max_{\x}\sum_{\y':\Pr(\x|\y')\geq 2^{-\nu}} P(\y'|\z,v,k), 
\max_\y\sum_{\x':\Pr(\x'|\y)\geq 2^{-\nu}} P(\x'|\z,v,k)
\},
\end{align}
where $\Z $ 
is the attacker's initial side information. Since $|k|=\ell$ and $|v|=t$, from Lemma~\ref{lemma:fuzzy},
 we have  
\begin{align} \nonumber
&\mathbb{E}_{(k, c, \z) \leftarrow (K,C, \Z )}\Big[\max \{
\max_\x\sum_{\y':\Pr(\x|\y')\geq 2^{-\nu}} P(\y'|\Z=\z,V=v,K=k), 
\max_\y\sum_{\x':\Pr(\x'|\y)\geq 2^{-\nu}} P(\x'|\Z=\z,V=v,K=k)
\}\Big] \\\label{eqn3:condent3}
&\le 2^{ t + \ell} \mathbb{E}_{\z \leftarrow \Z }\Big[\max \{\max_\x\sum_{\y':\Pr(\x|\y')\geq 2^{-\nu}} P(\y'|\Z=\z), 
\max_\y\sum_{\x':\Pr(\x'|\y)\geq 2^{-\nu}} P(\x'|\Z=\z)
\}\Big].
\end{align}
%

Therefore,
{
\begin{align}\nonumber
\delta_\x &= \text{
Success probability of the  adversary with }  (v_f,s'_f,s_f),  
\text{ 
 when verified with $\x $, given the pair $(k,(v,s',s))$  } \\
\nonumber
&=\mathbb{E}_{(k,c, \z) \leftarrow (K,C, \Z )}\Big[\mathsf{Pr}\big[ v_f=\big[(\x_{2})^{r+3} + {\sum}_{i=1}^r s'_{f,i}(\x_{2})^{i+1} + 
s_{f,2} \x_{2}\big]_{1\cdots t}+ (\x_{1})^{3}+s_{f,1} \x_{1} \text{ $|$ } K=k, C=c, \Z = \z \big]\Big] \\\nonumber
&=\mathbb{E}_{(k,c, \z) \leftarrow (K,C, \Z )}\Big[\mathsf{Pr}\big[ v_f=\big[(\x_{2})^{r+3} + {\sum}_{i=1}^r s'_{f,i}(\x_{2})^{i+1} + 
s_{f,2} \x_{2}\big]_{1\cdots t}+ (\x_{1})^{3}+s_{f,1} \x_{1} \\\label{eqn1:xsub1}
&\qquad\qquad\qquad\qquad\qquad \wedge v= \big[(\x_{2})^{r+3} + {\sum}_{i=1}^r s'_i(\x_{2})^{i+1} +s_2 \x_{2}\big]_{1\cdots t} 
+ (\x_{1})^{3} +s_1 \x_{1} \text{ $|$ } K=k, C=c,\Z = \z \big]\Big] \\\nonumber 
&=\mathbb{E}_{(k,c, \z) \leftarrow (K, C,\Z )}\Big[\mathsf{Pr}\Big[ v-v_f=\big[{\sum}_{i=1}^r (s'_{i} - s'_{f,i})(\x_{2})^{i+1} 
+(s_2-s_{f,2})\x_{2}\big]_{1 \cdots t} + (s_1 - s_{f,1})\x_{1} \\\label{eqn2:xsub2}
&\qquad\qquad\qquad\qquad\qquad \wedge v= \big[(\x_{2})^{r+3} + {\sum}_{i=1}^r s'_i(\x_{2})^{i+1} +s_2 \x_{2}\big]_{1\cdots t} 
+ (\x_{1})^{3} +s_1 \x_{1} \text{ $|$ } K=k, C=c,\Z = \z \Big]\Big] \\\label{eqn4:prob1} 
&\le \mathbb{E}_{(k,c, \z) \leftarrow (K,C, \Z )}\big[3(r+1)2^{n-2t}\cdot 
\max \{\max_{\x}\sum_{\y':\Pr(\x|\y')\geq 2^{-\nu}} P(\y'|\Z=\z,v,k), 
\max_\y\sum_{\x':\Pr(\x'|\y)\geq 2^{-\nu}} P(\x'|\Z=\z,v,k)
\} \big] \\\label{eqn5:prob1}
&\le 3(r+1)2^{n-2t}2^{ t + \ell} \mathbb{E}_{\z \leftarrow \Z }\Big[
\max \{\max_\x\sum_{\y':\Pr(\x|\y')\geq 2^{-\nu}} P(\y'|\Z=\z), 
\max_\y\sum_{\x':\Pr(\x'|\y)\geq 2^{-\nu}} P(\x'|\Z=\z)
\}\Big] \\\label{eqn:forgedletax} 
&= 3(r+1)2^{n+\ell - t} \mathbb{E}_{\z \leftarrow \Z }\Big[
\max \{\max_\x\sum_{\y':\Pr(\x|\y')\geq 2^{-\nu}} P(\y'|\Z=\z), 
\max_\y\sum_{\x':\Pr(\x'|\y)\geq 2^{-\nu}} P(\x'|\Z=\z)
\}\Big]
\end{align}
}

where  equation~\ref{eqn2:xsub2} is obtained    from subtracting the two equations within the probability expression in   equation~\ref{eqn1:xsub1}; 
 equation~\ref{eqn4:prob1} follows from  equation~\ref{guessprob};  equation~\ref{eqn5:prob1} follows from  equation~\ref{eqn3:condent3}.  The expectation is taken over the distribution of $P_{\X|K=k, C=c,\Z=\z}$. 

\vspace{0.5mm}
{\bf Computing  $\delta_{\x'}$.}  
\remove{
We bound 
the success probability of 
that a 

The forged ciphertext $(v_f, s'_f,s_f)$ corresponds  to $\x' \neq \x $ where 
}

Let $\x_f= \x' \neq \x $.  The forged ciphertext $(v_f, s'_f,s_f)$ will be, 
{
\begin{eqnarray} 
v_f&=&h\big(\x' , (s'_f,s_f)\big)  \nonumber \\ 
 &=&\big[(\x'_{2})^{r+3} + {\sum}_{i=1}^r s'_{f,i}(\x'_{2})^{i+1} + 
 s_{f, 2} \x'_{2}\big]_{1 \cdots t} 
+ (\x'_{ 1})^{3}+  s_{f,1} \x'_{ 1} \label{eqn:robx1} 
\end{eqnarray}
}

\remove{

\vspace{-.5em}
{\small
\begin{align} \nonumber
v_f&=h\big(\x' , (s'_f,s_f)\big) \\\nonumber 
 &=\big[(\x'_{2})^{r+3} + {\sum}_{i=1}^r s'_{f,i}(\x'_{2})^{i+1} + \\\label{eqn:robx1} 
&\quad s_{f, 2} \x'_{2}\big]_{1 \cdots t} + (\x'_{ 1})^{3}+  s_{f,1} \x'_{ 1}
\end{align}
}
}
, where $(v_f,s'_f,s_f) \ne (v,s',s)$, 
and all other variables are defined as in Lemma~\ref{claim:rootx} and equation~\ref{defn:uhash}.

 Let  $(k,c)  =(k,(v,s',s)$ is constructed using $\x$. 
We can write 
$ \x= \x'+\e$ for some (unknown) vector $\e=(\e_2 \parallel \e_1) \in GF(2^{n})$ and $(\x_{2}\parallel \x_{1})=((\x'_{2} + \e_2) \parallel (\x'_{ 1} + \e_1))$.
\remove{
This is true because the decapsulation algorithm $\ikemd(\cdot)$ succeeds if there is a unique element $\x'$ in the set $\mathcal{R}$ such that the received hash value $v_f$ matches with $h(\x',s'_f,s_f)$, and hence $e$ is a non-zero fixed constant for a given $(k,c)$ (i.e. $(k,(v,s',s))$) and does not vary with $\x$ that satisfies $v=h(\x,(s',s))$. 
}
%
Replacing $\x_{2}$ and $ \x_{1}$ with $(\x'_{2} + \e_2)$ and  $ (\x'_{ 1} + \e_1)$, respectively in equation~\ref{defn:uhash}, we obtain 

{
\begin{align}
v&=\big[(\x'_{2} + \e_2)^{r+3} + {\sum}_{i=1}^r s'_i(\x'_{2} + \e_2)^{i+1} 
+ s_2 (\x'_{2} + \e_2)\big]_{1\cdots t}+ (\x'_{ 1} + \e_1)^{3} +s_1 (\x'_{ 1} + \e_1) \label{eqn:robxappn111}
\end{align}
}

  From Lemma~\ref{claim:rootx} part $(ii)$, the number of $(\x'_{2}  \parallel \x'_{ 1})$ (i.e. $\x'$) that satisfy both the equation~\ref{eqn:robxappn111} and equation~\ref{eqn:robx1} is at most $(r+3)(r+2)2^{n-2t}$.

\remove{
We assumed 
a successful forged ciphertext can only be constructed 
by guessing the secret key $\x'$ for $h$
that satisfies the inequality:  $-\log(P_{\X |\Y }(\x' |\y )) \le \nu$, where $\y$ in the ({\it unknown}) secret key of the receiver.
In order to have the maximum chance of guessing $\x'$, 

that the inequality $-\log(P_{\X |\Y }(\x_1 |\y )) \le \nu$ holds, the adversary would choose the element $\x_1$ that maximizes the total probability mass of $\Y$ within the set $\{\y : -\log(P_{\X |\Y }(\x_1 |\y )) \le \nu\}$. Moreover, the adversary is given the key and ciphertext pair $(k,c)=\big(k,(v,s',s)\big)$ and $\z$. 
using Lemma~\ref{maxprob}(2) is, \remove{ the adversary can guess each value $(\x'_{2} \parallel \x'_{ 1})$ (i.e. $\x'$) in the set $\mathcal{R}$ with probability at most
}\begin{align} \nonumber
\max \{&\max_{\x}\sum_{\y':\Pr(\x|\y')\geq 2^{-\nu}} P(\y'|\Z=\z,v,k), \\\label{eqn6:prob6}
&\max_\y\sum_{\x':\Pr(\x'|\y)\geq 2^{-\nu}} P(\x'|\Z=\z,v,k)
\}.
\end{align}

}

 Let $\X'_{ 1}$, $\X'_{2}$ and $\X'$ denote  the random variables corresponding to $\x'_{ 1}$, $\x'_{2}$ and $\x'$ respectively.

Define $h_1(\x_{2},s',s)=(\x_{2})^{r+3} + {\sum}_{i=1}^r s'_i(\x_{2})^{i+1} + s_2 \x_{2}$. 
{
\begin{eqnarray}\nonumber
\delta_{\x'} &=& \text{ \psucc~ with }  (v_f,s'_f,s_f)
\text{ corresponding to } \x'  
\\\nonumber
&=&\mathbb{E}_{(k,c, \z) \leftarrow (K,C, \Z )}\Big[
\Pr\big[  v_f=\big[(\x'_{2})^{r+3} + 
{\sum}_{i=1}^r s'_{f,i}(\x'_{2})^{i+1}  +s_{f, 2} \x'_{2} \big]_{1 \cdots t}  
+ (\x'_{ 1})^{3} + s_{f,1} \x'_{ 1} | K=k, C=c, \Z = \z \big]\Big] \\\nonumber
&=&\mathbb{E}_{(k,c, \z) \leftarrow (K,C, \Z )}\Big[\mathsf{Pr}\big[ v_f=\big[h_1(\x'_{2},s'_f,s_f)\big]_{1 \cdots t}  
+  (\x'_{ 1})^{3} + s_{f,1} \x'_{ 1} 
\wedge \\\nonumber 
&&\qquad \qquad\qquad\qquad\quad v = \big[h_1(\x_{2},s',s)\big]_{1 \cdots t} + (\x_{1})^{3} +s_1 \x_{1} | K=k, C=c, \Z = \z \big]\Big] \\\nonumber
&\le& \mathbb{E}_{(k,c, \z ) \leftarrow (K,C, \Z )}\big[ (r+3)(r+2)2^{n-2t} 
\max \{\max_{\x}\sum_{\y':\Pr(\x|\y')\geq 2^{-\nu}} P(\y'|\Z=\z,V=v,K=k), \\\label{eqn8:sub8}
&&\qquad \qquad\qquad\qquad\quad\qquad \qquad\qquad\qquad\quad \max_\y\sum_{\x':\Pr(\x'|\y)\geq 2^{-\nu}} P(\x'|\Z=\z,V=v,K=k)
\} \big]  \\\label{eqn7:condprob7}
&\le & (r+3)(r+2)2^{n-2t}2^{ t + \ell} \mathbb{E}_{\z \leftarrow \Z }\Big[
\max \{\max_\x\sum_{\y':\Pr(\x|\y')\geq 2^{-\nu}} P(\y'|\Z=\z), 
\max_\y\sum_{\x':\Pr(\x'|\y)\geq 2^{-\nu}} P(\x'|\Z=\z)
\}\Big] 
\\\label{eqn:robsub2}
&=& (r+3)(r+2)2^{n+\ell -t} \mathbb{E}_{\z \leftarrow \Z }\Big[
\max \{\max_\x\sum_{\y':\Pr(\x|\y')\geq 2^{-\nu}} P(\y'|\Z=\z), 
\max_\y\sum_{\x':\Pr(\x'|\y)\geq 2^{-\nu}} P(\x'|\Z=\z)
\}\Big]
\end{eqnarray}
}

\remove{

\vspace{.5em}
We have,
{\footnotesize
\begin{align}\nonumber
&{\color{blue}\delta_{\x'}} = \text{ Success probability of the  adversary with }  (v_f,s'_f,s_f)
\\\nonumber
&\quad\quad\text{
{\color{blue}corresponding to $\x' $},  given  }  (k,c)=(k,(v,s',s))  \mbox{ corresponding to } \x 
\\\nonumber
&=\mathbb{E}_{(k,c, \z) \leftarrow (K,C, \Z )}\Big[\mathsf{Pr}\big[ {\color{blue} v_f=\big[(\x'_{2})^{r+3} + } \\\nonumber
&\quad\quad {\color{blue} {\sum}_{i=1}^r s'_{f,i}(\x'_{2})^{i+1}  +s_{f, 2} \x'_{2} \big]_{1 \cdots t}}  \\\nonumber
&\quad\quad{ {\color{blue}+ (\x'_{ 1})^{3} + s_{f,1} \x'_{ 1}} \text{ $|$ } K=k, C=c, \Z = \z \big]\Big]} \\\nonumber
&=\mathbb{E}_{(k,c, \z) \leftarrow (K,C, \Z )}\Big[\mathsf{Pr}\big[ {\color{blue}v_f=\big[f(\x'_{2},s'_f,s_f)\big]_{1 \cdots t}}  \\\nonumber
&\qquad {\color{blue}+  (\x'_{ 1})^{3} + s_{f,1} \x'_{ 1}} \\\nonumber 
&\qquad \wedge v = \big[f(\x_{2},s',s)\big]_{1 \cdots t} + (\x_{1})^{3} +s_1 \x_{1} \\\nonumber
&\qquad \text{ $|$ } K=k, C=c, \Z = \z \big]\Big] \\\nonumber
&\le \mathbb{E}_{(k,c, \z ) \leftarrow (K,C, \Z )}\big[{\color{blue}(r+3)(r+2)2^{n-2t}} \\\nonumber 
&\quad \max \{\max_{\x}\sum_{\y':\Pr(\x|\y')\geq 2^{-\nu}} P(\y'|\Z=\z,V=v,K=k), \\\label{eqn8:sub8}
&\quad \max_\y\sum_{\x':\Pr(\x'|\y)\geq 2^{-\nu}} P(\x'|\Z=\z,V=v,K=k)
\} \big]  \\\nonumber
&\le {\color{blue}(r+3)(r+2)2^{n-2t}2^{ t + \ell}} \mathbb{E}_{\z \leftarrow \Z }\Big[\\\nonumber
&\quad \max \{\max_\x\sum_{\y':\Pr(\x|\y')\geq 2^{-\nu}} P(\y'|\Z=\z), \\\label{eqn7:condprob7}
&\quad \max_\y\sum_{\x':\Pr(\x'|\y)\geq 2^{-\nu}} P(\x'|\Z=\z)
\}\Big] \text{(using Lemma~\ref{lemma:fuzzy})}\\\nonumber
&= {\color{blue}(r+3)(r+2)2^{n+\ell -t}} \mathbb{E}_{\z \leftarrow \Z }\Big[\\\nonumber
&\quad \max \{\max_\x\sum_{\y':\Pr(\x|\y')\geq 2^{-\nu}} P(\y'|\Z=\z), \\\label{eqn:robsub2}
&\quad \max_\y\sum_{\x':\Pr(\x'|\y)\geq 2^{-\nu}} P(\x'|\Z=\z)
\}\Big]
\end{align}
}
}

where  equation~\ref{eqn8:sub8} follows from Lemma~\ref{maxprob} part (2) and equation~\ref{eqn:guessprob} since the adversary is  given a key and ciphertext pair $\big(k,(v,s',s)\big)$;  equation~\ref{eqn7:condprob7} follows from Lemma~\ref{lemma:fuzzy}.

Therefore, from equations~\ref{eq:forgetag},~\ref{eqn:forgedletax} and~\ref{eqn:robsub2}, we have that after one encapsulation query, the   probability that an adversary will be able to forge a ciphertext is at most 

\begin{eqnarray*}\nonumber
&&(r+3)(r+2)2^{n+\ell -t} \mathbb{E}_{\z \leftarrow \Z }\Big[ \max
\{\max_\x\sum_{\y':\Pr(\x|\y')\geq 2^{-\nu}} P(\y'|\Z=\z),  
\max_\y\sum_{\x':\Pr(\x'|\y)\geq 2^{-\nu}} P(\x'|\Z=\z)
\}\Big].
\end{eqnarray*}

\remove{
{\small
\begin{align}\nonumber
&(r+3)(r+2)2^{n+\ell -t} \mathbb{E}_{\z \leftarrow \Z }\Big[\\\nonumber
&\quad \max \{\max_\x\sum_{\y':\Pr(\x|\y')\geq 2^{-\nu}} P(\y'|\Z=\z), \\\nonumber
&\quad \max_\y\sum_{\x':\Pr(\x'|\y)\geq 2^{-\nu}} P(\x'|\Z=\z)
\}\Big].
\end{align}
}
}

\vspace{1mm}
{\bf Step 3: \fdpsucc: Including decapsulation queries.}
For each decapsulation query,  the adversary  receives  either a key, %
if the forged ciphertext is accepted by the decapsulation algorithm, and \remove{
corresponds  to 
a unique element of $\mathcal{R}$.
}
or $\perp$, otherwise.
%
The adversary succeeds with the first query that is successful.
After $q_d$ unsuccessful decapsulation queries, 
 the size of the set of  possible guesses will reduce by $\log(q_d)$.
\remove{will gain $\log(q_d)$ bit information about 
elements of $\mathcal{R}$.
and hence the guessing 
entropy of the elements of 
$\mathcal{R}$ 
decreases by $\log(q_d)$.
}
Thus,  
after one encapsulation query and $q_d$ decapsulation queries, \fpsucc~ is bounded by 
\begin{eqnarray}\nonumber
\mbox{\fpsucc} &\leq&2^{\log(q_d)}(r+3)(r+2)2^{n+\ell -t} \mathbb{E}_{\z \leftarrow \Z }\Big[
\max \{\max_\x\sum_{\y':\Pr(\x|\y')\geq 2^{-\nu}} P(\y'|\Z=\z), 
\max_\y\sum_{\x':\Pr(\x'|\y)\geq 2^{-\nu}} P(\x'|\Z=\z)
\}\Big] \\\nonumber
&=&q_d(r+3)(r+2)2^{n+\ell -t} 
\max \{\mathbb{E}_{\z \leftarrow \Z } \Big[ \max_\x\sum_{\y':\Pr(\x|\y')\geq 2^{-\nu}} P(\y'|\Z=\z)\Big], 
\\\nonumber
&&\qquad\qquad\qquad\qquad\qquad\qquad\quad \mathbb{E}_{\z \leftarrow \Z } \Big[ \max_\y\sum_{\x':\Pr(\x'|\y)\geq 2^{-\nu}} P(\x'|\Z=\z)\Big]
\}\le \delta. 
\end{eqnarray}

\remove{
\\
\begin{align}\nonumber
\mbox{\fpsucc} &\leq&2^{\log(q_d)}(r+3)(r+2)2^{n+\ell -t} \mathbb{E}_{\z \leftarrow \Z }\Big[\\\nonumber
&&\quad \max \{\max_\x\sum_{\y':\Pr(\x|\y')\geq 2^{-\nu}} P(\y'|\Z=\z), \\\nonumber
&&\quad \max_\y\sum_{\x':\Pr(\x'|\y)\geq 2^{-\nu}} P(\x'|\Z=\z)
\}\Big] \\\nonumber
&=&q_d(r+3)(r+2)2^{n+\ell -t} \\\nonumber
&&\quad \max \{\mathbb{E}_{\z \leftarrow \Z } \Big[ \max_\x\sum_{\y':\Pr(\x|\y')\geq 2^{-\nu}} P(\y'|\Z=\z)\Big], \\\nonumber
&&\quad \mathbb{E}_{\z \leftarrow \Z } \Big[ \max_\y\sum_{\x':\Pr(\x'|\y)\geq 2^{-\nu}} P(\x'|\Z=\z)\Big]
\}\le \delta. 
\end{align}
}
\remove{ (if {\footnotesize 
\begin{align}\nonumber
&\ell \le t + \\\nonumber
&\quad \min\{-\log (\mathbb{E}_{\z \leftarrow \Z } \Big[ \max_\x\sum_{\y':\Pr(\x|\y')\geq 2^{-\nu}} P(\y'|\Z=\z)\Big]), \\\nonumber
&\quad -\log(\mathbb{E}_{\z \leftarrow \Z } \Big[ \max_\y\sum_{\x':\Pr(\x'|\y)\geq 2^{-\nu}} P(\x'|\Z=\z)\Big])\} \\\nonumber
& \quad - n - \log\big(\frac{q_d (r+3)(r+2)}{\delta}\big)
\end{align}
}).
}
Therefore, if 
\begin{align}\nonumber
&\ell \le t + 
\min\{-\log (\mathbb{E}_{\z \leftarrow \Z } \Big[ \max_\x\sum_{\y':\Pr(\x|\y')\geq 2^{-\nu}} P(\y'|\Z=\z)\Big]), 
-\log(\mathbb{E}_{\z \leftarrow \Z } \Big[ \max_\y\sum_{\x':\Pr(\x'|\y)\geq 2^{-\nu}} P(\x'|\Z=\z)\Big])\} \\\nonumber
& \quad - n - \log\big(\frac{q_d (r+3)(r+2)}{\delta}\big),
\end{align}
the iKEM $\mathsf{ikem}_{cca}$ given in construction~\ref{ikem:cca} is $\delta$-INT-$(1;q_d)$-CTXT  secure.
\end{proof}

\noindent
\begin{coro}[CCA security]\label{thm:ccasec}
{\em The iKEM construction~\ref{ikem:cca} is an 
IND-$(0;q_d)$-CCA secure iKEM.}
\end{coro}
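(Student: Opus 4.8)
The plan is to obtain this corollary as a direct instantiation of the composition result, Theorem~\ref{theo:comp}, by feeding it the two security guarantees already proved for Construction~\ref{ikem:cca}. Recall that Theorem~\ref{theo:comp} part~(2) states that, against a computationally unbounded adversary, a pKEM that is simultaneously IND-$q_e$-CEA secure and INT-$(q'_e;q_d)$-CTXT secure is IND-$(q''_e;q_d)$-CCA secure with $q''_e=\min(q_e,q'_e-1)$, together with the concrete bound $Adv^{pkind\text{-}cca}_{\pk,\msa}(\lambda)\le 2q_d\,Adv^{kint}_{\pk,\ms A}(\lambda)+Adv^{pkind\text{-}cea}_{\pk,\ms B}(\lambda)$. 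So the corollary is really a matter of checking that the hypotheses of this theorem are met by $\mathsf{ikem}_{cca}$ and that the query bookkeeping lands on $q''_e=0$.

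First I would invoke Theorem~\ref{Thm:ikemotsecurity}, which shows $\mathsf{ikem}_{cca}$ is $2\sigma$-IND-$q_e$-CEA secure for the stated relation among $n$, $t$, $\ell$, $\nu$ and $\sigma$; in particular it is IND-CEA secure for any fixed $q_e\ge 0$. Next I would invoke Theorem~\ref{mac2:ctxt}, which establishes that the same construction is $\delta$-INT-$(1;q_d)$-CTXT secure, i.e.\ it provides ciphertext integrity against an adversary making one encapsulation query and $q_d$ decapsulation queries. These are exactly the two premises Theorem~\ref{theo:comp} requires, and they fix the integrity parameter to $q'_e=1$.

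Substituting $q'_e=1$ into the composition theorem yields $q''_e=\min(q_e,q'_e-1)=\min(q_e,0)=0$ for every $q_e\ge 0$, so the resulting notion is precisely IND-$(0;q_d)$-CCA, matching the statement. The final advantage is then controlled by combining the $\delta$ from Theorem~\ref{mac2:ctxt} with the $2\sigma$ from Theorem~\ref{Thm:ikemotsecurity} through the bound $2q_d\delta+2\sigma$, which is a small function of $\lambda$ whenever the parameters of Construction~\ref{ikem:cca} are chosen so that both $\sigma$ and $\delta$ are small.

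The step I expect to demand the most care is not any calculation but the bookkeeping of query budgets under a computationally unbounded adversary. I would need to verify that the single encapsulation query allowed by the integrity game and the $q_d$ decapsulation queries are consistent with the CCA game of Figure~\ref{fig1:thm1}, and that the reduction adversaries $\ms A$ and $\ms B$ produced inside Theorem~\ref{theo:comp} stay within the fixed (predetermined polynomial) query limits that an iKEM tolerates, since an unbounded adversary can only be resisted for a bounded number of queries. Once this alignment is confirmed, the corollary follows immediately with no further argument.
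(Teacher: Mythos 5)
Your proposal is correct and follows essentially the same route as the paper: cite Theorem~\ref{Thm:ikemotsecurity} for IND-$q_e$-CEA security and Theorem~\ref{mac2:ctxt} for INT-$(1;q_d)$-CTXT security, then apply Theorem~\ref{theo:comp} with $q'_e=1$ to get $q''_e=\min(q_e,q'_e-1)=0$ and hence IND-$(0;q_d)$-CCA security. Your explicit tracking of the query budgets and of the resulting advantage bound $2q_d\delta+2\sigma$ is a slightly more detailed account of the same argument the paper gives.
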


\begin{proof}
 According to 
Theorems~\ref{Thm:ikemotsecurity} and~\ref{mac2:ctxt} 
the iKEM construction~\ref{ikem:cca} is both IND-$q_e$-CEA secure with  $q_e$ encapsulation queries, 
 and INT-$(1;q_d)$-CTXT secure with one encapsulation and $q_d$ decapsulation queries. 
 Then according to Theorems~\ref{theo:comp}, the iKEM is also IND-$(0;q_d)$-CCA secure with $q_d$ decapsulation queries 
 and zero encapsulation query. 
 Therefore, if the parameters $\ell$, $t$ and $\nu$ are chosen to satisfy both  Theorems~\ref{Thm:ikemotsecurity} and 
 ~\ref{mac2:ctxt}, then the iKEM construction~\ref{ikem:cca} is also IND-$(0;q_d)$-CCA secure.  
\end{proof}

\remove{
{\color{olive} 
REMOVE THIS REMARK- NOTHING USEFUL OR INTERESTING-  ALSO UNCLEAR  THINGS ABOUT AMD- WHY NEED KEY-SHIFT SECURITY - WHICH MAC...

\begin{remark}

  To provide ciphertext integrity against active attacker, we append  a tag $h(x, (s', s))$ to $(s',s)$.
This tag has two roles:  it is a  universal hash function with seed 
$s$ that is  used for reconciliation, and it  can be seen as a MAC with key $\x$ that protects the message $(s',s)$ from tampering.  
The  hash construction $h$ must provide shift-key security. Our starting point  $h$ is the MAC construction in  \cite[Section  4]{CramerDFPW08} that provides key-shift security\footnote{ The MAC is built using an Algebraic Manipulation Detection (AMD) Code}. However,  when used in the reconciliation phase of the decapsulation algorithm (i.e.  Algorithm~\ref{alg:iK.dec}),  we need to consider a new type of attack for the authentication codes. 
More specifically, we need to consider attacks on the message and tag pair that keeps the message intact but modifies the tag only. Such a tampering is not considered as a forgery in {\color{blue}the} definition of MAC  {\color{blue}in ~\cite[Definition 5]{CramerDFPW08}} 
 but is an attack against ciphertext integrity of  iKEM (Definition~\ref{def: integrity}). This is because the algorithm $\ikemd$ (i.e. Algorithm~\ref{alg:iK.dec}) accepts a key $\hat{\x}$ that satisfies $v=(h(\hat{\x},(s',s)),s',s)$, which is different from the correct key $\x$. 
To protect against this attack, we modify the tag function in \cite{CramerDFPW08} by adding new terms to the polynomial that depends on $\x_{1}$  (recall that $\x=\x_{2} \parallel \x_{1}$).
\end{remark}}
}

\section{KEM combiners for iKEM}\label{sec:combiner}
 Cryptographic combiners provide robustness for cryptographic schemes against possible flaws or security breaks of the component schemes.  
Combiners for KEM were introduced
by Giacon et al. \cite{giacon2018kem} who  defined a framework for combining two or more  public-key KEMs.
Our goal in this section is to extend their framework to allow pKEMs to be combined with public key KEMs. 
This is 
well-motivated because iKEMs are post-quantum secure and  so a much wider set of KEMs with post-quantum security becomes available to the system designers.

In this section, 
we first define combiners for combining pKEMs (i.e. in correlated randomness model) with a 
public-key KEM, and 
then give constructions and prove their security.
We focus on combiners for iKEM and public key KEMs because of subtleties of combining security of two tyoes of schemes: security against computationally unbounded and computationally bounded adversaries. 

\textbf{Combiners.} Using the framework of Giacon et al.  \cite{giacon2018kem}, 
for  security  parameter $\lambda$, we define a {\em core function} for combining an iKEM $\ikem=(\ikemg,\ikeme,\ikemd)$
 with correlation generating distribution $P_{\X\Y\Z}$, output key space 
$\ms{KeySP}_{\ikem}(\lambda)=\mc K_1$, and   ciphertext space ${\cal C}_1$, with
a public-key KEM $\ms{kem}=(\ms{kem}.\ms{Gen},\ms{kem}.\ms{Enc},\ms{kem}.\ms{Dec})$ with public-key space ${\cal PK}$,  output key space 
$\ms{KeySP}_{\ms {kem}}(\lambda)=\mc K_2$, and ciphertext space $\mc C_2$. 
 The combiner \\$\ckem_{\ikem,\kem}=(\gckem,\eckem,\dckem)$ is a KEM with three algorithms $\gckem$; $\eckem$; and $\dckem$ for key generation, encapsulation and decapsulation, respectively, that uses a core function,  $\mathsf{W} : \mc K_1\times\mc K_2\times \mathcal{C}_1\times\mc C_2\to \mathcal{K}^*$, to generate a session key in the key space $\mc K^*$, using the algorithms defined in Figure.~\ref{fig:gkemcomb}.


\begin{figure}[!ht]
   \begin{center}
\begin{tabular}{l  l }
     \underline{$\mathbf{Algo}\ \gckem({1^\lambda},P_{\X \Y \Z })$} &  \underline{$\mathbf{Algo}\ \eckem(r_A,pk)$} \\
     { $(r_A ,r_B, r_E)\stackrel{\$}\gets \ikemg(1^{\lambda},
{P_{\X \Y \Z }})$} &$(c_1,k_1)\stackrel{\$}\gets \ikeme(r_A)$\\
      $(pk,sk)\stackrel{\$}\gets \gkem(1^{\lambda})$& $(c_2,k_2)\stackrel{\$}\gets \ekem({pk})$\\
      {Return $(r_A,r_B,r_E,{pk},{sk})$} &${k}\gets\ms W(k_1,k_2,c_1,c_2)$\\
      &{Return $(k,c_1,c_2)$}
    
\end{tabular}
\begin{tabular}{l}
\\
      \underline{$\mathbf{Algo}\ \dckem(r_B,sk,c_1,c_2)$} \\
     $k_1\gets \ikemd(r_B,c_1)$\\
      $k_2\gets \dkem({sk},c_2)$\\
      $\quad \text{If }k_1=\perp \vee\  k_2=\perp:{\text{Return} \perp}$\\
      ${k}\gets\ms W(k_1,k_2,c_1,c_2)$\\
      {Return $k$}
    
\end{tabular}
\end{center}
\vspace{-1em}
    \caption{
    Combining an iKEM with a public-key KEM}
    \label{fig:gkemcomb}
\end{figure}

 One can also define combiners for other combinations of component KEMs, i.e.
two iKEMs , two cKEMs, 
an iKEM and a cKEM, and a cKEM and public key KEM, 
 with private samples $(r_A, r_B, r_E$) 
that generates   a pair $(c_1, k_1)$  where
key $k_1\in \{0,1\}^{\ikeml(\lambda)}$ and 
ciphertext $c_1\in\mc C_1$,  and let 
 KEM $\ms K$ be a public-key with public and private key pair $(pk, sk)$ that generates

\begin{construction}[XOR combiner.]
 Let  $\ikem$  be   an iKEM  with with private samples $(r_A, r_B, r_E)$, and  $\kem$ be a public-key KEM with public and private key pair $(pk, sk)$ that generate keys
$k_1\in \{0,1\}^{\ikeml(\lambda)}$ and $k_2\in \{0,1\}^{\keml(\lambda)}$, respectively,  and let $\{0,1\}^{\ikeml(\lambda)}=\{0,1\}^{\keml(\lambda)}=\{0,1\}^{\ell(\lambda)}$.
The combiner $\ckem_{\ikem,\kem}^\oplus$ with an XOR core function $\ms W$, outputs $k=\ms W(k_1,k_2)=k_1\oplus k_2$  when
none of $k_i$'s for $i\in\{1,2\},$ is $\perp$, and outputs $\perp$ otherwise.
\end{construction}
The following theorem shows that for a given $q_e\geq 0$, the XOR combiner retains the IND-$q_e$-CEA security of the component iKEM. The proof is given in Appendix~\ref{pf:xorcomb}.
\begin{theorem}\label{theo:xorcomb} 
For security parameter $\lambda$,
let $\ikem=(\ikemg,\ikeme,\ikemd)$ be an IND-$q_e$-CEA secure iKEM
that generates $k_1\in \{0,1\}^{\ell(\lambda)}$, and $\ms {kem}=(\kemg,\keme,\kemd)$ be a public-key KEM with the same security parameter that generates $k_2\in \{0,1\}^{\ell(\lambda)}$   of the same length. 
Consider a combiner KEM $\ckem^\oplus_{\ikem,\ms {kem}}$ using the XOR core function that combines $\ikem$ and $\ms {kem}$, and generates the key $k=k_1\oplus k_2$. For a computationally unbounded adversary $\msa$, there exists a computationally unbounded adversary $\msa'$, such that 
\[Adv^{pkind\text{-}q_e\text{-}cea}_{\ckem_{\ikem,\kem}^\oplus,\msa}(\lambda)\leq Adv^{pkind\text{-}q_e\text{-}cea}_{\ikem,\msa'}(\lambda).\]

\end{theorem}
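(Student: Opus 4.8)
The plan is to give a tight reduction from the IND-$q_e$-CEA security of the XOR combiner to that of the underlying iKEM, exploiting the one-time-pad property of the core function: masking the iKEM key $k_1$ with the independently generated key $k_2$ neither helps nor hurts a distinguisher, so $k=k_1\oplus k_2$ is uniform exactly when $k_1$ is. Concretely, from any computationally unbounded distinguisher $\msa=(\msa_1,\msa_2)$ playing $\pind_{\ckem^\oplus_{\ikem,\kem},\msa}^{cea\text{-}b}(\lambda)$ I would build an unbounded distinguisher $\msa'=(\msa'_1,\msa'_2)$ for $\pind_{\ikem,\msa'}^{cea\text{-}b}(\lambda)$ that runs $\msa$ as a subroutine while internally simulating the public-key component itself.

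For the simulation, $\msa'_1$ receives the iKEM side information $r_E$ from its own challenger, runs $(pk,sk)\stackrel{\$}\gets\kemg(1^\lambda)$ on its own, and invokes $\msa_1$ on $(r_E,pk)$, which is exactly Eve's view in the combiner game since $pk$ is public and the combiner's generation algorithm exposes it. Each encapsulation query of $\msa$ is answered by forwarding one query to $\msa'$'s own iKEM encapsulation oracle to get $(c_1,k_1)$, computing a fresh $(c_2,k_2)\stackrel{\$}\gets\keme(pk)$ internally, and returning the triple $(c_1,c_2,k_1\oplus k_2)$; this reproduces the output distribution of $\eckem$ precisely. Note that under CEA no decapsulation oracle is ever invoked, so $sk$ and the iKEM decapsulation oracle are never needed, and each combiner query costs $\msa'$ exactly one iKEM encapsulation query, so the $q_e$-bound is preserved.

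When $\msa'$ receives its challenge $(c_1^*,k_b)$ together with $\msa_1$'s state, it computes a fresh $(c_2^*,k_2^*)\stackrel{\$}\gets\keme(pk)$, hands $\msa_2$ the challenge ciphertext $(c_1^*,c_2^*)$ with key $k_b\oplus k_2^*$, and outputs whatever $\msa_2$ outputs. This is a \emph{perfect} simulation of the combiner game carrying the same hidden bit: if $b=0$ then $k_b=k_1^*$ and $k_b\oplus k_2^*=k_1^*\oplus k_2^*$ is the genuine combiner key, while if $b=1$ then $k_b$ is uniform and independent of $k_2^*$, so $k_b\oplus k_2^*$ is again uniform on $\{0,1\}^{\ell(\lambda)}$, matching the random-key case. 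Hence for each $b\in\{0,1\}$,
\[
\Pr[\pind_{\ikem,\msa'}^{cea\text{-}b}(\lambda)=1]=\Pr[\pind_{\ckem^\oplus_{\ikem,\kem},\msa}^{cea\text{-}b}(\lambda)=1],
\]
and subtracting the $b=0$ and $b=1$ probabilities gives
\[
Adv^{pkind\text{-}q_e\text{-}cea}_{\ckem^\oplus_{\ikem,\kem},\msa}(\lambda)=Adv^{pkind\text{-}q_e\text{-}cea}_{\ikem,\msa'}(\lambda),
\]
which implies the claimed inequality.

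Because the simulation is perfect there is no statistical slack to track, so the proof is essentially bookkeeping, and the only things I would verify carefully are that (i) $\msa'$ can reconstruct the combiner's entire view without ever touching $r_A$, $r_B$, or $sk$ — which holds since public-key encapsulation needs only $pk$ and CEA forbids decapsulation queries — and (ii) the uniformity argument in the $b=1$ case relies on $k_b$ being sampled by the iKEM challenger independently of the $k_2^*$ that $\msa'$ generates afterwards. The one genuinely delicate point I expect is confirming that the combiner's published side information is exactly $(r_E,pk)$ so that $\msa_1$'s input distribution is reproduced faithfully; once that identification is fixed, the remainder follows directly.
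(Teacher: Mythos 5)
Your reduction is correct and is essentially the paper's own proof: the paper likewise has $\msa'$ play the iKEM's CEA game, generate $(pk,sk)$ itself, simulate the combiner's encapsulation oracle by pairing each iKEM oracle response with a fresh $\keme(pk)$ output, XOR a fresh $k_2'^*$ into the challenge key, and relay $\msa$'s bit. Your write-up is more explicit than the paper's (in particular the observation that the $b=1$ case is simulated perfectly because $k_b\oplus k_2^*$ is uniform when $k_b$ is), but the argument is the same.
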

 In the above theorem, computational security of $\ckem^\oplus_{\ikem,\ms {kem}}$   follows \cite[Lemma 1]{giacon2018kem} as an iKEM can be seen as an insecure KEM for polynomial number of queries. 

\vspace{1.5 mm}
{\bf CCA security.}
The XOR combiner cannot retain the IND-$(q_e;q_d)$-CCA security of the component iKEM (with similar reasoning as \cite[Lemma 2]{giacon2018kem}).

 We show the PRF-then-XOR core function in  \cite{giacon2018kem} can   be used to combine an iKEM with a public-key KEM such that,
in addition to resulting in a secure public-key KEM,  if the PRF output is indistinguishable from uniform by a computationally unbounded and  query-bounded  adversary,
  the resulting KEM will be an IND-$(q_e;q_d)$-CCA secure iKEM.


\begin{definition}[PRF and its security]\label{defn:prf}
Let $\lambda$ be a security parameter.  We use $\lambda$ as 
an argument for values  to make dependence on  $\lambda$ as a parameter, explicit. A family of functions 
 $\ms F : \mc K \times \mc X \to \mc Y$,
where $\mc K$,  $\mc X$ and $\mc Y$, respectively, are finite sets corresponding to   
key,   input  
and outputs,  
is a 
 secure PRF, if the  advantage of an adversary in the distinguishing game of PRF, defined in Figure.~\ref{fig4:thm3}, satisfies the following: 
%
\begin{enumerate}[label=(\roman*)]
    \item   {\em Computationally secure PRF}:  For any computationally bounded adversary $\ms B$ with access to $q(\lambda)$ queries,  where $q$ is a   polynomial, the advantage of the adversary defined as, 
%
$Adv^{PRF}_{\ms F,\ms B}(\lambda)\triangleq |\Pr[\mathrm{PRI}^0_{\ms F,\ms B}(\lambda)=1] -\Pr[\mathrm{PRI}^1_{\ms F,\ms B}(\lambda)=1]|$, 
is a negligible function of $\lambda$.

\item   {\em Information theoretic  PRF}:  For any computationally unbounded adversary $\ms U$ with access to $q(\lambda)$ queries, where $q$ is a pre-defined polynomial in $\lambda$, the advantage of the adversary defined as, \\
$Adv^{q\text{-}PRF}_{\ms F,\ms U}(\lambda)\triangleq |\Pr[\mathrm{PRI}^{q\text{-}IND\text{-}0}_{\ms F,\ms U}(\lambda)=1] -\Pr[\mathrm{PRI}^{q\text{-}IND\text{-}1}_{\ms F,\ms U}(\lambda)=1]|$, is a  \textit{small} function $\sigma(\cdot)$  of $\lambda$ i.e. $\sigma(\cdot) \in SMALL $.

\end{enumerate}

The games  $\mathrm{PRI}_{\ms F,\ms B}^b$ (or the games $\mathrm{PRI}^{q\text{-}IND\text{-}b}_{\ms F,\ms U}$)  are chosen using a uniformly random bit $b\in\{0,1\}$.\\

\textbf{{PRF distinguishing game.}}  The PRF distinguishing game for a function family $\ms F : \mc K \times \mc X \to \mc Y$ where $\mc K$,  $\mc X$ and $\mc Y$, respectively, are finite sets corresponding to key,  input and
 output, 
 \remove{
family of functions {
$\ms F_\lambda: \mc K_\lambda\times \mc X_\lambda \to \mc Y_\lambda$, for an indexed families of finite key spaces $\mc K_\lambda$,  input spaces $\mc X_\lambda$ and  finite output spaces $\mc Y_\lambda$,
}}
is defined  
in Figure~\ref{fig4:thm3}.
\begin{figure}[!ht]
\begin{center}
 \parbox{0.5\textwidth}{
\underline{\textbf{Game} $\mathrm{PRI}^{b}_{\ms F,\msa}(\lambda)$} \ \ \ \ \ \ \ \ \ \ \ \ \  \ \ \ \  
\underline{\textbf{Oracle} $\ms{Eval}(x)$}
\vspace{-0.8em}
\begin{multicols}{2}
\centering
\begin{algorithmic}[1]
\State $\mc X\gets \emptyset$
\State $k\stackrel{\$}\gets \mc K$
\State $b'\stackrel{\$}\gets \msa_2^{\ms{Eval}}$
\State Return $b'$
\end{algorithmic}
       \columnbreak
 {\centering
\begin{algorithmic}[1]
\State If $x\in\mc X$: Abort
\State $\mc X=\mc X \cup \{x\}$
\State $y \gets \ms F(k,x)$
\State $y_0\gets y; y_1\stackrel{\$}\gets\mc Y$
\State Return $y_{b}$
\end{algorithmic}
}
     \end{multicols} 
}%
  \caption{
    PRF distinguishing game}
    \label{fig4:thm3}
\end{center}    
\end{figure}

\end{definition}

\begin{construction}[PRF-then-XOR combiner.]\label{const}
 Let  $\lambda$ be the   security parameter.
Consider an iKEM $\ikem$  with private samples $(r_A, r_B, r_E)$   
that generates   a pair $(c_1, k_1)$  where
key $k_1\in \{0,1\}^{\ikeml(\lambda)}$ and 
ciphertext $c_1\in\mc C_1$,  and let 
 KEM $\ms K$  be a public-key with public and private key pair $(pk, sk)$ that generates  a pair $(c_2, k_2)$    
where  key $k_2\in \{0,1\}^{\keml(\lambda)}$ 
and 
ciphertext $c_2\in\mc C_2$. Further, let $\ms F_1: \mc \{0,1\}^{\ikeml(\lambda)}\times \mc C_2\to \mc K$ and $\ms F_2: \{0,1\}^{\keml(\lambda)}\times \mc C_1\to \mc K$ be two PRFs  with information theoretic and computational security, respectively. 

The combiner $\ms{Comb}_{\ikem,\kem}^{\mathrm{PtX}}$ with the core function PRF-then-XOR outputs $\ms W(k_1,k_2,c_1,c_2)=\ms F_1(k_1,c_2)\oplus\ms F_2(k_2,c_1)$  when neither $k_1$ nor $k_2$ is $\perp$, and outputs $\perp$ otherwise.
\end{construction}

\begin{theorem}
\label{thmprfxorcomb}
In Construction~\ref{const}, let $\ikem$ and $\kem$ be an IND-$(q_e; q_d)$-CCA secure
iKEM and an IND-CCA secure KEM, respectively, and let $\ms F_1(\cdot)$ and $\ms F_2(\cdot)$ be   two
PRFs,  with security against a computationally unbounded adversary with $(q_d+1)$ queries,
and a computationally bounded adversary with polynomial number of queries, respectively. Then for any

(a)    computationally   bounded distinguisher $\ms B$, there exists computationally bounded adversaries
$\ms B_1$ and $\ms B_2$ for games $\pind^{cca}_{\kem}$ and $\mathrm{PRI}^{b}_{\ms F_2}$, respectively, such that,
\begin{equation}\label{eq:second}
Adv^{kind\text{-}cca}_{\ckem^{\mathrm{PtX}}_{\ikem,\kem},\msa}(\lambda)\leq 2\Big( Adv^{kind\text{-}cca}_{\kem,\ms B_1}(\lambda)+Adv^{PRF}_{\ms F_2,\ms B_2}(\lambda)\Big),
\end{equation}

(b)   computationally unbounded distinguisher $\msa'$, there exists   a computationally unbounded adversaries $\ms U_1$ and $\ms U_2$ for games $\pind^{(q_e;q_d)\text{-}cca}_{\ikem}$ and $\mathrm{PRI}^{(q_d+1)\text{-}IND\text{-}b}_{\ms F_1}$, respectively, such that
\begin{eqnarray*}\label{eq:third}
&&Adv^{pkind\text{-}(q_e;q_d)\text{-}cca}_{\ckem^{\mathrm{PtX}}_{\ikem,\kem},\msa'}(\lambda)\leq 
2\Big( Adv^{pkind\text{-}(q_e;q_d)\text{-}cca}_{\ikem,\ms U_1}(\lambda)+Adv^{(q_d+1)\text{-}PRF}_{\ms F_1,\ms U_2}(\lambda)\Big).
\end{eqnarray*}
\end{theorem}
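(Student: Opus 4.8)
The plan is to prove both parts by a single game-hopping template, exploiting the symmetry between the two components of $\ms{Comb}^{\mathrm{PtX}}_{\ikem,\kem}$: part (a) relies on the public-key KEM $\kem$ together with the computational PRF $\ms F_2$, while part (b) relies on the iKEM $\ikem$ together with the information-theoretic PRF $\ms F_1$. In both cases I would start from the real challenge of the combiner, whose key is $k^*=\ms F_1(k_1^*,c_2^*)\oplus\ms F_2(k_2^*,c_1^*)$, and transform it through two hops into a key that is uniformly random and independent of the hidden challenge bit, so that the $b=0$ game becomes statistically identical to the $b=1$ game.

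For part (a), the first hop replaces the public-key KEM challenge key $k_2^*$ by a uniformly random $\tilde k_2$. The reduction $\ms B_1$ to the KEM CCA game of Definition~\ref{def:kem} generates the correlated samples $(r_A,r_B,r_E)$ and runs $\ikem$ in full, embeds its own KEM challenge ciphertext as $c_2^*$, and answers a combiner decapsulation query $(c_1,c_2)$ by calling its KEM decapsulation oracle on $c_2$ when $c_2\neq c_2^*$ and by reusing the embedded key when $c_2=c_2^*$; its two worlds then coincide with the two hybrids, giving a gap of $Adv^{kind\text{-}cca}_{\kem,\ms B_1}(\lambda)$. In the second hop, $\ms F_2$ is now keyed by the uniform $\tilde k_2$, and I would replace the challenge term $\ms F_2(\tilde k_2,c_1^*)$ by a uniform string $\rho$. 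The reduction $\ms B_2$ to the game $\mathrm{PRI}^{b}_{\ms F_2}$ generates the KEM key pair itself, so it can decapsulate every $c_2\neq c_2^*$ with $sk$, and routes every $\ms F_2$-evaluation under the key $\tilde k_2$ — the challenge point $c_1^*$ and the points $c_1$ from queries of the form $(c_1,c_2^*)$ — through its PRF oracle, giving a gap of $Adv^{PRF}_{\ms F_2,\ms B_2}(\lambda)$. After the two hops the challenge key is $\ms F_1(k_1^*,c_2^*)\oplus\rho$, which is uniform and independent of $b$; summing the two gaps and accounting for the standard loss incurred when the reductions are phrased as guessing the combiner's hidden bit yields the factor-$2$ bound.

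Part (b) is the mirror image against a computationally unbounded, query-bounded distinguisher $\msa'$. The first hop replaces the iKEM challenge key $k_1^*$ by a uniform $\tilde k_1$ using the $(q_e;q_d)$-CCA security of $\ikem$ (reduction $\ms U_1$, respecting the budgets $q_e,q_d$), and the second hop replaces $\ms F_1(\tilde k_1,c_2^*)$ by a uniform string using the information-theoretic PRF security of $\ms F_1$ (reduction $\ms U_2$ against $\mathrm{PRI}^{(q_d+1)\text{-}IND\text{-}b}_{\ms F_1}$). Here $\ms F_1$ is evaluated under $\tilde k_1$ only at the challenge input $c_2^*$ and at the inputs $c_2$ arising from the at most $q_d$ decapsulation queries that reuse $c_1^*$ in the first coordinate, for a total of at most $q_d+1$ distinct evaluations, which is exactly why the statement requires $\ms F_1$ to resist a $(q_d+1)$-query unbounded adversary; since $\msa'$ is unbounded, only these oracle-call counts, not the running time, need to be controlled.

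The main obstacle in both parts is the consistent handling of decapsulation queries that reuse one component of the challenge ciphertext, namely $(c_1,c_2^*)$ with $c_1\neq c_1^*$ in part (a) and $(c_1^*,c_2)$ with $c_2\neq c_2^*$ in part (b). These queries force the first-hop key replacement to be applied globally to every occurrence of the reused component, and they are precisely the evaluations that must be answered through the PRF oracle in the second hop. The delicate bookkeeping step is to verify that the no-repeated-input restriction of the PRF game (Definition~\ref{defn:prf}) is never violated: an admissible decapsulation query must differ from the challenge in the other coordinate, so its PRF input differs from the challenge input, which both legitimizes the reductions and pins the number of distinct $\ms F_1$-queries at $q_d+1$.
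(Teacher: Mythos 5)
Your proposal is correct and follows essentially the same route as the paper's proof: a game-hopping argument that first replaces the component KEM's challenge key by a uniform one, then replaces the corresponding PRF output at the challenge by a uniform string, with decapsulation queries that reuse one challenge component routed through the PRF oracle (giving the $q_d+1$ query count for $\ms F_1$). The only presentational difference is that the paper realizes the factor $2$ by explicitly undoing the two modifications in games $\mathrm{G}^{3}$ and $\mathrm{G}^{4}$ so that the final game coincides with the $b=1$ world, whereas you invoke the same symmetry as a standard loss; the content is identical.
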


\textit{Proof sketch.} 
The proof  for a computationally bounded adversary will follow the approach of 
Theorem 3 in \cite{giacon2018kem},  noting that the iKEM 
will loose its security  when the number of queries exceed the design parameter of iKEM
after repeated queries.   The proof for a computationally unbounded adversary  is given in Appendix~\ref{pf:thmprfxorcomb}. $\qed$

\vspace{1.5mm}
{\em Instantiating PRF for construction~\ref{const}.} 
To construct a PRF with security against a computationally unbounded adversary with access to  
$(q_d+1)$-queries,
we can use a $(q_d+2)$-independent  hash function.  An example construction using polynomials over finite fields is given in \cite[Section 4.1]{TRNG}. \\

The drawback of this PRF however is its large key size.  
We leave  more efficient constructions of 
information-theoretic PRF $\ms F_1(\cdot)$ for the required number of queries for
future  work. 

 Note that 
 security of PRF $\ms F_1(\cdot)$ in the combiner construction~\ref{const} against computationally unbounded adversaries,  does not depend on the number of encapsulation queries to the combiner. Intuitively, this is because in each encapsulation query to combiner, the component 
 iKEM generates a fresh uniform and independent key which is  used as the secret key in PRF $\ms F_1(\cdot)$, and so the output of PRF $\ms F_1(\cdot)$ is 
 independent of previous encapsulation and decapsulation queries.  
\remove{
HMAC {\rd has been proved?? to be a PRF~\cite{BellareHMAC2006} ??? FIND REFERENCE } can be used as the computational PRF $F_2(\cdot)$.
}

\subsection{ Composing a {\em ``combined"} KEM with a DEM}
Security requirements of DEM in Theorem~\ref{theo:composition} is identical to 
Cramer et. al's  \cite[Theorem 7.2]{Cramer2003DesignAA} and so the same DEM can be used for secure hybrid encryption for information theoretically secure KEM and public key KEM. 
The KEM combiner's output will be used with a secure DEM (example construction is given in  \cite{Cramer2003DesignAA}), 
and depending on the security of the component KEM, will result in a  secure hybrid encryption with one of the following security properties: 
\begin{enumerate}[label=(\roman*)]
\item   If the component KEM is a 
secure  iKEM with  IND-$q_e$-CEA (IND-$(q_e;q_d)$-CCA)  security, the combiner's  output key will be secure against computationally unbounded attackers, and the resulting hybrid encryption  provides security according  to Theorem~\ref{theo:composition}, cases (3) and (4).
\item If an IND-CEA (IND-CCA)  cKEM is used as a component in the preprocessing model, the combiner's output will be secure against computationally bounded adversary, and the resulting hybrid encryption will be secure according to Theorem ~\ref{theo:composition}, cases (1) and (2).  
\item If the 
 public-key  KEM is secure,  the resulting hybrid encryption 
 provides security  according  to \cite[Theorem 7.2]{Cramer2003DesignAA}.
\end{enumerate}

\section{Concluding remarks.}\label{sec:conclusion}
 KEM/DEM in the preprocessing model is a natural and useful extension of KEM/DEM paradigm that 
does not require public keys and so computational assumptions.  The paradigm is defined for information theoretic and computational security.
That is each of the two components KEM and DEM, and the final HE, may be defined against a computationally  unbounded or bounded adversaries.
We prove a general composition theorem for KEM and DEM when security of KEM is  against a computationally unbounded or bounded adversary, and  security of DEM is against a computationally bounded adversary. We focused on these combinations of adversaries to obtain a key efficient HE.  Defining DEM with information theoretic security  will lead to HE constructions with high secret key requirement (i.e. essentially similar to one-time-pad).

We also defined and constructed  combiners with provable security that combine KEMs  in preprocessing model 
with public-key KEMs. Efficient and secure construction of core functions for combining iKEM and public key KEMs that retain CCA security of component KEMs is an interesting direction for future work.

An HE in preprocessing model is a private key encryption where the private keys are correlated random strings (not symmetric), and so security notions are  defined similar to symmetric key encryption systems.
Combiners for iKEM and public-key KEM widens the  range of possible KEMs, and allow fuzzy data to be used for communication with  provable security.

Construction of KEMs with computational security in preprocessing model for specific $P_{\X\Y\Z}$ is an interesting direction for future work.

\remove{
We initiated the study of KEM and hybrid encryption in preprocessing model, introduced  information theoretic KEM, and proved composition theorems for secure hybrid encryption. We defined
and constructed secure cryptographic combiners 
for iKEMs and public-key KEMs that ensure 
secure hybrid encryption in the corresponding setups.  
Using iKEM will guarantee post-quantum security, with the unique security property of being secure against offline  attacks, that is particularly important in long-term security.

Our work raises many interesting research   
 questions 
 including construction of 
 fuzzy extractor based  iKEM  with security against higher number of queries, 
 iKEM construction in other correlated randomness setups 
such as  wiretap setting, source model  and satellite model of Maurer \cite{Maurer1993},
and 
construction of computational KEMs in preprocessing model.
}

\bibliographystyle{IEEEtran}
\bibliography{main}

\normalsize
\appendices


\section{{Proof of Theorem~\ref{theo:comp}}}\label{pf:theoremcca}
$1.$ To prove the first part of the theorem, we define two consecutive games: the first game $\mathrm{G}^{0\text{-}b}_{\pk,\msa}$ 
is the CCA distinguishing game $\pind_{\pk,\msa}^{cca\text{-}b}(\lambda)$ in Fig \ref{fig1:thm1} and $\mathrm{G}^{1\text{-}b}_{\pk,\msa}$ is the same game except for its decapsulation oracle that always outputs $\perp$. We have:

{
%
\begin{eqnarray}\nonumber
Adv^{pkind\text{-}cca}_{\pk,\msa}(\lambda) 
&&= |\pr[\pind_{\pk,\msa}^{cca\text{-}0}(\lambda)= 1]-\pr[\pind_{\pk,\msa}^{cca\text{-}1}(\lambda)=1]|\\\label{eq1:thm1}
    &&= |\pr[\mathrm{G}^{0\textbf{-}0}_{\pk,\msa}=1]-\pr[\mathrm{G}^{0\textbf{-}1}_{\pk,\msa}=1]|\\\nonumber
    &&=  |\pr[\mathrm{G}^{0\textbf{-}0}_{\pk,\msa}=1] - \pr[\mathrm{G}^{1\textbf{-}0}_{\pk,\msa}=1] + \pr[\mathrm{G}^{1\textbf{-}0}_{\pk,\msa}=1] \\\nonumber
    &&\quad
    - \pr[\mathrm{G}^{1\textbf{-}1}_{\pk,\msa}=1] + \pr[\mathrm{G}^{1\textbf{-}1}_{\pk,\msa}=1] -  \pr[\mathrm{G}^{0\textbf{-}1}_{\pk,\msa}=1]| \\\nonumber
&&\leq |\pr[\mathrm{G}^{0\textbf{-}0}_{\pk,\msa} = 1] - \pr[\mathrm{G}^{1\textbf{-}0}_{\pk,\msa} = 1]|  
+ |\pr[\mathrm{G}^{1\textbf{-}0}_{\pk,\msa} = 1] - \pr[\mathrm{G}^{1\textbf{-}1}_{\pk,\msa} = 1]|  \\\label{eq3:thm1}
&&\quad + |\pr[\mathrm{G}^{1\textbf{-}1}_{\pk,\msa} = 1] - \pr[\mathrm{G}^{0\textbf{-}1}_{\pk,\msa} = 1]|
\end{eqnarray}
}


\remove{
{\footnotesize
%
\begin{align}\nonumber
Adv&^{pkind\text{-}cca}_{\pk,\msa}(\lambda) \\\nonumber
&= |\pr[\pind_{\pk,\msa}^{cca\text{-}0}(\lambda)= 1]-\pr[\pind_{\pk,\msa}^{cca\text{-}1}(\lambda)=1]|\\\label{eq1:thm1}
    &= |\pr[\mathrm{G}^{0\textbf{-}0}_{\pk,\msa}=1]-\pr[\mathrm{G}^{0\textbf{-}1}_{\pk,\msa}=1]|\\\nonumber
    &=  |\pr[\mathrm{G}^{0\textbf{-}0}_{\pk,\msa}=1] - \pr[\mathrm{G}^{1\textbf{-}0}_{\pk,\msa}=1] + \pr[\mathrm{G}^{1\textbf{-}0}_{\pk,\msa}=1] \\\label{eq:comp2}
    &\qquad - \pr[\mathrm{G}^{1\textbf{-}1}_{\pk,\msa}=1] + \pr[\mathrm{G}^{1\textbf{-}1}_{\pk,\msa}=1] -  \pr[\mathrm{G}^{0\textbf{-}1}_{\pk,\msa}=1]| \\\nonumber
\end{align}
\begin{equation}
\begin{aligned}
\label{eq3:thm1}
\leq |\pr[\mathrm{G}^{0\textbf{-}0}_{\pk,\msa} = 1] - \pr[\mathrm{G}^{1\textbf{-}0}_{\pk,\msa} = 1]|\\
+ |\pr[\mathrm{G}^{1\textbf{-}0}_{\pk,\msa} = 1] - \pr[\mathrm{G}^{1\textbf{-}1}_{\pk,\msa} = 1]|  \\ 
+ |\pr[\mathrm{G}^{1\textbf{-}1}_{\pk,\msa} = 1] - \pr[\mathrm{G}^{0\textbf{-}1}_{\pk,\msa} = 1]|
\end{aligned}
\end{equation}
}
}
where equation~\ref{eq1:thm1} is simply using $\mathrm{G}^{0\textbf{-}0}_{\pk,\msa}$ and $\mathrm{G}^{0\textbf{-}1}_{\pk,\msa}$ in lieu of $\pind_{\pk,\msa}^{cca\text{-}0}$ and $\pind_{\pk,\msa}^{cca\text{-}1}$ respectively, and inequality~\ref{eq3:thm1} is by  triangle  inequality.

To bound the first and the last terms of inequality~\ref{eq3:thm1}, let $U_1$ be the event that $\msa$ outputs 1 in game $\mathrm{G}^{0\text{-}b}_{\pk,\msa}$ and $U_2$ be the event that $\msa$ outputs 1 in game $\mathrm{G}^{1\text{-}b}_{\pk,D}$ for $b \in \{0,1\}$. These two games are identical except when the decapsulation oracle output is not $\perp$ in $\mathrm{G}^{0\text{-}b}_{\pk,D}$, lets call this event F. The event F is the union of $q_d$ events $\cup^{q_d}_{i=1}F_i$, where $F_i$ is the event that the output of the decapsulation oracle in the $i$-th call is not $\perp$. Let A be the adversary in game $\mathrm{KINT}_{\pk,A}$ that
makes $q_d$ queries to the decapsulation oracle. We have $\pr[F_i] \leq Adv^{kint}_{\pk,A}(\lambda)$ and therefore, from the union bound $\pr[F] \leq \sum^{q_d}_{i=1} \pr[F_i] \leq q_{d}\pr[\mathrm{KINT}_{\pk,A} = 1] =
q_{d}Adv^{kint}_{\pk,A}(\lambda)$, and from Lemma $6.2$ of \cite{Cramer2003DesignAA}:
{ 
\begin{align}
\label{eq4:thm1}
    &|\pr[\mathrm{G}^{0\text{-}b}_{\pk,D} = 1] - \pr[\mathrm{G}^{1\text{-}b}_{\pk,D} = 1]| = |\pr[U_1] - \pr[U_2]| 
    \leq \pr[F] 
    \leq q_{d}Adv^{kint}_{\pk,A}(\lambda)
\end{align}
}
To bound the second term in inequality~\ref{eq3:thm1}, we note that in $\mathrm{G}^{1\text{-}b}_{\pk,D}$ for $b \in \{0,1\}$, the decapsulation oracle always output $\perp$ and simulates the IND-CEA game $\pind^{cea\text{-}b}_{\pk,B}(\lambda)$. Therefore,
\begin{equation}
\label{eq5:thm1}
|\pr[\mathrm{G}^{1\text{-}0}_{\pk,D} = 1] - \pr[\mathrm{G}^{1\text{-}1}_{\pk,D} = 1]| \leq Adv^{pkind\text{-}cea}_{\pk,B} (\lambda)
\end{equation}
Finally, from inequalities~\ref{eq4:thm1} and~\ref{eq5:thm1} we have
\[Adv^{pkind\text{-}cca}_{\pk,D} (\lambda) \leq 2q_{d}Adv^{kint}_{\pk,A}(\lambda) + Adv^{pkind\text{-}cea}_{\pk,B} (\lambda)\].

$2.$ The proof of the second part of the theorem uses the same sequence of games, but against a computationally unbounded adversary. We can similarly bound the CCA advantage of the adversary by bounding the advantage of these games.
$\qed$

\section{{Proof of Theorem~\ref{theo:composition}}}\label{pf:hecomp}
We first show the claim of the theorem for the second case that is, an IND-CCA secure KEM in preprocessing model and an IND-OTCCA secure DEM construct an IND-CCA secure hybrid encryption scheme in preprocessing model. The proof of the first case will follow from the proof of the second case. The proof of the third and forth cases are identical to the proof of first and second cases respectively and noting that the adversary for the iKEM is query-bounded and computationally unbounded.

We define a sequence of three games $\mathrm{G}^{0\text{-}b}$, $\mathrm{G}^{1\text{-}b}$, and $\mathrm{G}^{2\text{-}b}$ that simulate adversary's actual or modified interaction with the encryption system during the attack procedure. Each game operates on the same underlying probability space. In particular, private inputs of parties, randomness of the adversary's algorithm, and the hidden bit b take on identical values across all games. At the end of each game, the adversary outputs a bit $\hat{b}$. For a game $\mathrm{G}^{i\text{-}b}$, where $i \in \{0,1,2\}$
with output $\hat{b}$, $T_i$ denotes the event that $\hat{b} = b$. All games are played by a computationally bounded distinguisher D. $\mathrm{G}^{0\text{-}b}_D$
 is identical to the distinguishing game of hybrid encryption in preprocessing model explained above. $\mathrm{G}^{1\text{-}b}_D$  only differs from $\mathrm{G}^{0\text{-}b}_D$ in its decapsulation oracle. Suppose the challenge HE ciphertext
$c^* = (c_1^*, c_2^*)$, where $c_{1}^*$ is generated by $\cke$ and $c_2^*$ is generated by SE.Enc. Then for any decryption query $c = (c_1, c_2) \ne (c_1^*, c_2^*)$, the decryption oracle of $\mathrm{G}^{1\text{-}b}_D$ uses $\ckd$ to decrypt the  ciphertext unless $c_1 = c_1^*$ (and $c_2 \ne c_2^*)$.
In this case, the key $k_1$ corresponding to $c_1^*$ that is generated by $\cke$  will be used for the decryption of $c_2^*$. Finally, $\mathrm{G}^{2\text{-}b}_D$ only differs from $\mathrm{G}^{1\text{-}b}_D$ in using a uniformly sampled key instead of the key generated by cKEM for encryption and answering encryption and decryption queries. We bound $Adv^{ind\text{-}cca}_{\ike_{\ck,\dem},D} (\lambda)$
using the defined games: For a given sample sam = $(r_A, r_B, r_E)$ generated by $\ckg$, we define $BK_{sam}$, a set of bad keys k, generated by $\cke$, as $BK_{sam} = \{k: \ckd( r_B, c) \ne k\}$. According to the correctness of $\ck$, for $k \leftarrow^\$ \{0,1\}^{l(\lambda)}$ we have $\pr[k \in  BK_{sam}] \leq \epsilon$. The two events $T_0$ and $T_1$ are
only different when the event $[\ckd(r_B, c_1^*) \in BK_{sam}]$ happens. Using Lemma $6.2$ of \cite{Cramer2003DesignAA}, we have

\begin{equation}
\label{eq11:thm2}
    |\pr[T_0] - \pr[T_1]| \leq \pr[k \in BK_{sam}] \leq \epsilon
\end{equation}
We now consider the game $\mathrm{G}^{2\text{-}b}_D$ and $\mathrm{G}^{1\text{-}b}_D$. The game $\mathrm{G}^{2\text{-}b}_D$ is same as $\mathrm{G}^{1\text{-}b}_D$ except that $\mathrm{G}^{2\text{-}b}_D$ uses a uniformly sampled key instead of the key generated by cKEM for encryption and decryption queries. Since the KEM's key is $\sigma$-IND-CCA secure, there exists an adversary $D'$ such that
\begin{equation}
\label{eq12:thm2}
|\pr[T_1] - \pr[T_2]| = Adv^{kind\text{-}cca}_{\ck,D'} (\lambda) \leq \sigma
\end{equation}
In the above case, the adversary $D'$ just runs the adversary $D$. Specifically, $D'$ is playing an attack game against KEM in which $k_b$ is equal to $k^*$ in game $\mathrm{G}^{1\text{-}b}_D$, whereas $k_b$ is a uniformly sampled random value in the game $\mathrm{G}^{2\text{-}b}_D$.

Lastly, note that in game $\mathrm{G}^{2\text{-}b}_D$, a new random key is sampled for each encryption/decryption query. Thus in this game, the adversary $D$ is just executing a chosen ciphertext attack against SE. Therefore, there exists an adversary $D''$ such that    
\begin{equation}
\label{eq16:thm2}
    |\pr[T_2] - 1/2 | = \frac{1}{2}Adv^{ind\text{-}otcca}_{SE,D''} (\lambda) \leq \frac{\sigma'}{2}
\end{equation}

Since $Adv^{ind\text{-}cca}_{\ike_{\ck,\dem},D} (\lambda) = 2|\pr[T_0] - 1/2 |$, using inequalities \ref{eq11:thm2}, \ref{eq12:thm2}, \ref{eq16:thm2} we have 
{
\begin{align} \nonumber
Adv^{ind\text{-}cca}_{\ike_{\ck,\dem},D} (\lambda) &= 2|\pr[T_0] - 1/2 | \\\nonumber
&= 2|\pr[T_0] - \pr[T_1] + \pr[T_1] - \pr[T_2] + \pr[T_2] - 1/2 | \\\nonumber
&\leq 2|\pr[T_0] - \pr[T_1]| + 2|\pr[T_1] - \pr[T_2]| + 2|\pr[T_2] - 1/2| \\\nonumber
&\leq 2\epsilon + 2\sigma + \sigma'.
\end{align}
}
For the proof of the first part, we note that $\mathrm{G}^{0\text{-}b}_D$ and $\mathrm{G}^{1\text{-}b}_D$ are identical because no decryption query is issued. Therefore, $|\pr[T_1] - \pr[T_0]| = 0$. Also since there is no decryption query and the KEM is $\sigma$-IND-CEA secure we have,
\[
|\pr[T_1] - \pr[T_2]| \leq Adv^{kind\text{-}q_e\text{-}cea}_{\ck,D} (\lambda) \leq \sigma;\]
and since the DEM is $\sigma'$-IND-OT secure, we have,
\[
|\pr[T_2] - 1/2| = (Adv^{ind\text{-}ot}_{SE,D} (\lambda))/2 \leq \sigma'/2\]
and finally,
\[
Adv^{ind\text{-}cea}_{\ike_{\ck,\dem},D} (\lambda) = 2|\pr[T_0] - 1/2| \leq 2\sigma + \sigma'. 
\]
$\qed$

\section{{Proof of Theorem~\ref{theo:xorcomb}}}\label{pf:xorcomb}
In the  $q_e$-CEA distinguishing game of  $\ikem$,  the distinguisher $\msa'$ receives $r_{E}$,  $\mathbf{v}^{q_e\text{-}cea} =( {v_1}^{cea}, \cdots, {v_{q_e}}^{cea})$, where ${v_i}^{cea}$ is the encapsulation oracle's output to the $i$th encapsulation query, and the
pair of challenge ciphertext and key $(c_1^*,k^*_{1_b})$, and is supposed to distinguish if $k^*_{1_b}$ is generated by $\ikem$ or is sampled uniformly. 
 ${\msa}'$ uses the KEM $\kem$  to generate $(pk,sk)$ and produces  $({c'}_{2}^*,{k'}_{2}^*)\stackrel{\$}\gets\ekem(pk)$. Then sends $\mb c^*$ and $k^*$ to $\msa$, where $\mathbf{c}^*=(c_1^*,{c'}_2^*)$ and ${k}^*=k_{1_b}^*\oplus{k'}_2^*$ to $\msa$.
Finally, $\msa'$ outputs $b'$ equal to $\msa$'s output. The advantages of $\msa$ and is upper bounded by $\msa'$  because $k^*$ is a sample from the uniform distribution only if $k_{1_b}$ is a sample from the uniform distribution. Since we assumed that $\msa$ breaks the IND-$q_e$-CEA security of the combined key, then $\msa'$ can break the IND-$q_e$-CEA security of the iKEM which is a contradiction. $\ \qed$

\section{{Proof of Theorem~\ref{thmprfxorcomb}}}\label{pf:thmprfxorcomb}
    The proof for a computationally bounded adversary will be based on the proof of Theorem 3 in \cite{giacon2018kem}, and noting that the iKEM will loose its security after a fixed number of repeated queries. We shall prove part $(b)$ of the theorem \ref{thmprfxorcomb}.

To prove the part $(b)$ of the Theorem \ref{thmprfxorcomb}, let $D'=(D_1,D_2)$ denote a computationally unbounded adversary attacking the CCA security of the combiner $Comb^{PtX}_{\ikem,\kem}$ by making at most $q_e$ encapsulation and $q_d$ ciphertext (decapsulation) queries in the CCA distinguishing game $\pind^{(q_e;q_d)\text{-}cca\text{-}b}_{Comb,D'}$, and $b$ be uniform over \{0,1\}.

\begin{figure}[!ht]
\centering
\begin{minipage}{0.46\textwidth}
\vspace{-8.8em}
$\pind^{(q_e;q_d)\text{-}cca\text{-}b}_{Comb,D'} := \mathrm{G}^{0\text{-}b}_{Comb,D'}$
\vspace{.4em}
\hrule 
\vspace{.4em}
1: $Ret[.] \leftarrow \perp$ \\
2: $(r_A, r_B, r_E) \xleftarrow{\$} \ikemg(1^{\lambda},P_{\X \Y \Z })$ \\ 
3: $(s_k, p_k) \xleftarrow{\$} \gkem(1^{\lambda})$ \\
4: $st_1 \xleftarrow{\$} D^{O_1}_{1}(r_E, p_k)$ \\
5: $(k^*_1, c^*_1) \xleftarrow{\$} \ikeme(r_A)$ \\
   \%  $\mathrm{G}^{1\text{-}b}_{D'}-\mathrm{G}^{3\text{-}b}_{D'}:$ $k^*_1 \xleftarrow{\$} \{0,1\}^{\ikeml(\lambda)}$ \\
6: $(k^*_2, c^*_2) \xleftarrow{\$} \ekem( p_k)$ \\
7: $c^*  \leftarrow (c^*_1, c^*_2)$ \\
8: $y^*_1  \leftarrow  F_1(k^*_1, c^*_2)$  \\
\%  $\mathrm{G}^{2\text{-}b}_{D'}-\mathrm{G}^{4\text{-}b}_{D'}$ : $y^*_1 \xleftarrow{\$} \mathcal{K}$ \\
9: $k^*  \leftarrow  y^*_1 \oplus F_2(k^*_2, c^*_1)$ \\
10: $k'_0 \leftarrow k^*; k'_1 \xleftarrow{\$} \mathcal{K}$ \\
11: $b' \xleftarrow{\$} D^{O_2}_{2}(st_1,c^*,k'_b)$ \\ 
12: Return $b'$ \\
\end{minipage}
\hfill
\begin{minipage}{0.46\textwidth}
{\bf Oracle} $\eckem(r_A,pk)$
\vspace{.4em}
\hrule
\vspace{.4em}
1: $(k_{11}, c_{11}) \xleftarrow{\$} \ikeme(r_A)$ \\
2: $(k_{21}, c_{21}) \xleftarrow{\$} \ekem(p_k)$ \\
3: $k  \leftarrow  F_1(k_{11}, c_{21}) \oplus F_2(k_{21}, c_{11})$ \\
4: Return $(k,c_{11}, c_{21})$ \\


{\bf Oracle} $\dckem(r_B,sk,c)$
\vspace{.4em}
\hrule
\vspace{.4em}
1: If $c = c^*$: Abort \\
2: If $Ret[c] \ne \perp:$ Return $Ret[c]$ \\
3: $c_1, c_2  \leftarrow c$ \\
4: If $c_1 = c^*_1$: \\
5: \hspace{.6cm} $k_1  \leftarrow k^*_1$ \\
6: \hspace{.6cm}   $y_1  \leftarrow F_1(k_1, c_2)$ \\
   \% \hspace{.6cm} $\mathrm{G}^{2\text{-}b}_{D'}:$ $ y_1  \xleftarrow{\$} \mathcal{K}$ \\
   \% \hspace{.6cm} $\mathrm{G}^{3\text{-}b}_{D'}:$  $y_1 \xleftarrow{\$}  F_1(k_1, c_2)$ \\ 
7: else \\
8: \hspace{.6cm} $k_1 \xleftarrow{\$}  \ikemd( r_B, c_1)$ \\
9: \hspace{.6cm} If $k_1 =\perp$: Return $\epsilon$ \\
10: \hspace{.6cm} $y_1  \leftarrow F_1(k_1, c_2)$ \\
11: $k_2 \leftarrow  \dkem(s_k, c_2)$ \\
12: If $k_2 =\perp$: Return $\epsilon$ \\
13: $Ret[c]  \leftarrow y_1 \oplus F_2(k_2, c_1)$ \\
14: Return $Ret[c]$ \\


\end{minipage}
\caption{Games $\mathrm{G}^{0\text{-}b}_{Comb,D'}$ to $\mathrm{G}^{4\text{-}b}_{Comb,D'}$ to prove security of the PRF-then-XOR combiner}
\label{figikemcombcca2}
\end{figure}

The proof uses a sequence of five games. We define five games $\mathrm{G}^{0\text{-}b}_{Comb,D'}$ to $\mathrm{G}^{4\text{-}b}_{Comb,D'}$ for a uniform $b$ over \{0,1\}, played by the adversary $D'=(D_1,D_2)$. Figure \ref{figikemcombcca2} depicts these games. In
each game, $D'$ outputs $b' \in \{0,1\}$. Note that, if the adversary has already queried the oracle for the same input, the oracle returns the same output.

Adversary $D'=(D_1,D_2)$ can call two oracles,  $\eckem(r_A,  pk 
)$ and $\dckem(r_B,sk,\cdot)$ that correspond to the encapsulation and decapsulation algorithms of the combiner, and have access to the associated keys of the component KEMs. 
We use $O_1$ and $O_2$ to refer to oracle calls of $D'$ before and after seeing the challenge ciphertext.  

$\mathrm{G}^{0\text{-}b}_{Comb,D'}$ is the CCA distinguishing game of the combiner $Comb^{PtX}_{\ikem,\kem}$ with the distinguisher $D'$ making at most $q_e$ encapsulation and $q_d$ decapsulation queries. That is, $\pind^{(q_e;q_d)\text{-}cca\text{-}b}_{Comb,D'}=\mathrm{G}^{0\text{-}b}_{Comb,D'}$.    Note that according to the PRF-then-XOR construction of the combiner (figure \ref{figikemcombcca2}),  the decapsulation oracle outputs ``$\perp$'' when the ciphertext of at least one of the components decapsulates to ``$\perp$''. 

\begin{equation}
\label{eqnthmcombptx1}
    \pr[\pind^{(q_e;q_d)\text{-}cca\text{-}0}_{Comb,D'}(\lambda)=1]=\pr[\mathrm{G}^{0\text{-}0}_{Comb,D'}(\lambda)=1]
\end{equation}

In $\mathrm{G}^{1\text{-}b}_{Comb,D'}$, the iKEM key $k^*_1$ is replaced by a uniform random key (this replacement is also reflected in the decapsulation oracle Line 5 using $k_1 \leftarrow  k^*_1$).

\begin{clam}
\label{claimptx1}
There exists a computationally unbounded adversary $U_1$ whose advantage in the CCA distinguishing game of iKEM $\ikem$ with at most $q_e$ encapsulation and $q_d$ decapsulation queries is $Adv^{pkind\text{-}(q_e;q_d)\text{-}cca}_{\ikem,U_1}$ such that 
\remove{
{\footnotesize
\begin{eqnarray*}
\label{eqnthmcombptx2}
   &&Adv^{pkind\text{-}(q_e;q_d)\text{-}cca}_{\ikem,U_1} \geq \\
   && \hspace{1.5cm} |\pr[\mathrm{G}^{0\text{-}0}_{Comb,D'}(\lambda)=1] - \pr[\mathrm{G}^{1\text{-}0}_{Comb,D'}(\lambda)=1]|
\end{eqnarray*}
}}
{
{
\begin{eqnarray}\label{eqnthmcombptx2}
    &&|\pr[\mathrm{G}^{0\text{-}0}_{Comb,D'}(\lambda)=1] - \pr[\mathrm{G}^{1\text{-}0}_{Comb,D'}(\lambda)=1]|  
    \leq Adv^{pkind\text{-}(q_e;q_d)\text{-}cca}_{\ikem,U_1}
\end{eqnarray}
}}
\end{clam}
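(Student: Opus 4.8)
The plan is to prove this claim by a direct reduction: I would construct a computationally unbounded adversary $U_1=(U_{1,1},U_{1,2})$ that plays the CCA distinguishing game $\pind^{(q_e;q_d)\text{-}cca\text{-}b}_{\ikem,U_1}$ against the component iKEM and internally runs $D'=(D_1,D_2)$, arranging that the two possible worlds of $U_1$'s own challenge coincide \emph{exactly} with the two games in question: the world where $U_1$'s challenge key is the genuine iKEM key matches $\mathrm{G}^{0\text{-}0}_{Comb,D'}$, and the world where it is uniform matches $\mathrm{G}^{1\text{-}0}_{Comb,D'}$.

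First I would set up the simulation. Upon receiving $r_E$ together with oracle access to the iKEM encapsulation and decapsulation oracles from its own challenger, $U_1$ generates the public-key KEM pair $(pk,sk)\stackrel{\$}\gets\gkem(1^\lambda)$ itself and hands $(r_E,pk)$ to $D_1$. Since $U_1$ is unbounded and holds $sk$, it can evaluate both PRFs $F_1,F_2$ and run $\ekem,\dkem$ unaided. To answer an encapsulation query it forwards a call to its iKEM encapsulation oracle to obtain $(k_{11},c_{11})$, samples $(k_{21},c_{21})\stackrel{\$}\gets\ekem(pk)$, and returns $(F_1(k_{11},c_{21})\oplus F_2(k_{21},c_{11}),\,c_{11},c_{21})$, exactly as the combiner's $\eckem$ oracle does. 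To answer a decapsulation query on $c=(c_1,c_2)\neq c^*$: if $c_1\neq c_1^*$ it queries its own iKEM decapsulation oracle on $c_1$ to recover $k_1$; if $c_1=c_1^*$ (so necessarily $c_2\neq c_2^*$) it instead substitutes its challenge key $k_{1,b}^*$ for $k_1$, matching line 5 of the decapsulation oracle in Figure~\ref{figikemcombcca2}; the public-key part is decapsulated with $sk$ and the combined key is assembled as in $\dckem$.

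For the challenge, $U_1$ receives $(c_1^*,k_{1,b}^*)$, samples $(k_2^*,c_2^*)\stackrel{\$}\gets\ekem(pk)$, sets $c^*=(c_1^*,c_2^*)$, computes $y_1^*=F_1(k_{1,b}^*,c_2^*)$ and $k^*=y_1^*\oplus F_2(k_2^*,c_1^*)$, and returns $k^*$ (the $b=0$ case, i.e. $k'_0=k^*$); finally $U_1$ outputs whatever $D_2$ outputs. I would then verify the simulation is perfect in both worlds: when $U_1$'s hidden bit is $0$ the challenge key is genuine and every oracle answer and the challenge are distributed as in $\mathrm{G}^{0\text{-}0}_{Comb,D'}$, while when it is $1$ the key is uniform and the view is that of $\mathrm{G}^{1\text{-}0}_{Comb,D'}$. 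Hence $\pr[\pind^{(q_e;q_d)\text{-}cca\text{-}0}_{\ikem,U_1}=1]=\pr[\mathrm{G}^{0\text{-}0}_{Comb,D'}=1]$ and $\pr[\pind^{(q_e;q_d)\text{-}cca\text{-}1}_{\ikem,U_1}=1]=\pr[\mathrm{G}^{1\text{-}0}_{Comb,D'}=1]$, and the bound follows from the definition of $Adv^{pkind\text{-}(q_e;q_d)\text{-}cca}_{\ikem,U_1}$.

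The hard part will be checking faithfulness of the decapsulation simulation precisely on queries with $c_1=c_1^*,\,c_2\neq c_2^*$, where $U_1$ is forbidden to query its own challenge ciphertext $c_1^*$ and must fall back on the challenge key $k_{1,b}^*$: I would argue that because the games themselves reuse the challenge key in exactly this case, the substitution reproduces both worlds' behaviour without any discrepancy. The remaining care is bookkeeping on the query budget, noting that each of $D'$'s encapsulation queries induces exactly one iKEM encapsulation query and each decapsulation query at most one iKEM decapsulation query, so $U_1$ respects the $(q_e;q_d)$ bound.
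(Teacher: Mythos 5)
Your reduction is correct and is essentially the paper's own proof: the paper constructs exactly this adversary $U_1=(U_{11},U_{12})$ (Figure~\ref{figikemcombcca22}), which generates $(pk,sk)$ itself, forwards encapsulation/decapsulation queries to its own iKEM oracles, substitutes the challenge key $k_1^*$ when a decapsulation query has $c_1=c_1^*$, and assembles the challenge so that the real-key and uniform-key worlds coincide with $\mathrm{G}^{0\text{-}0}_{Comb,D'}$ and $\mathrm{G}^{1\text{-}0}_{Comb,D'}$ respectively. Your added care about the $c_1=c_1^*$ case and the query-count bookkeeping only makes explicit what the paper's figure encodes implicitly.
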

\begin{proof}(claim \ref{claimptx1}) We construct the adversary $U_1=(U_{11},U_{12})$ for the CCA distinguishing game of iKEM as given in figure \ref{figikemcombcca22}. 

Adversary $U_{11}$ takes 
$r_E$ as input. The adversary $U_{12}$ runs on the challenge $(c^*_1,k^*_1)$. At the end, $U_{12}$ relays whatever $D_2$ outputs.       

\begin{figure}[!ht]
\centering
\begin{minipage}{0.46\textwidth}
\vspace{-7.6em}
{\bf Adversary} $U^{O_1}_{11}(r_E)$
 \vspace{0.4em}
\hrule
\vspace{.4em}
1: $(s_k, p_k) \xleftarrow{\$} \gkem()$ \\
2: $st_1 \xleftarrow{\$}  D^{O_1}_{1}(r_E, p_k) $ \\
3: Return $st_1$ \\
\\
 {\bf Adversary} $U^{O_2}_{12}(st_1,c^*_1,k^*_1)$
 \vspace{0.4em}
\hrule
\vspace{.4em}
1. $(k^*_2, c^*_2) \xleftarrow{\$} \ekem( p_k)$ \\
2: $c^*  \leftarrow (c^*_1, c^*_2)$ \\ 
3. $y^*_1  \leftarrow  F_1(k^*_1, c^*_2)$ \\
4: $k^*  \leftarrow  y^*_1 \oplus F_2(k^*_2, c^*_1)$ \\
5: $b' \xleftarrow{\$} D^{O_2}_{2}(st_1,c^*,k^*)$ \\
6: Return $b'$ \\

\end{minipage}
\hfill
\begin{minipage}{0.46\textwidth}
{\bf Oracle} $\eckem(r_A,pk)$
\vspace{0.4em}
\hrule
\vspace{.4em}
1: $(k_{11}, c_{11}) \xleftarrow{\$} \ikeme(r_A)$ \\
2: $(k_{21}, c_{21}) \xleftarrow{\$} \ekem(p_k)$ \\
3: $k  \leftarrow  F_1(k_{11}, c_{21}) \oplus F_2(k_{21}, c_{11})$ \\
4: Return $(k,c_{11}, c_{21})$ \\


{\bf Oracle} $\dckem(r_B,sk,c)$
\vspace{.4em}
\hrule
\vspace{.4em}
1: If $c = c^*$: Abort \\
2: $c_1, c_2  \leftarrow c$ \\
3: If $c_1 = c^*_1$: \\
4: \hspace{.6cm} $k_1  \leftarrow k^*_1$ \\
5: else \\
6: \hspace{.6cm} $k_1 \xleftarrow{\$}  \ikemd( r_B, c_1)$ \\
7: \hspace{.6cm} If $k_1 =\perp$: Return $\epsilon$ \\
8: $y_1  \leftarrow F_1(k_1, c_2)$ \\
9: $k_2 \leftarrow  \dkem(s_k, c_2)$ \\
10: If $k_2 =\perp$: Return $\epsilon$ \\
11: $k  \leftarrow y_1 \oplus F_2(k_2, c_1)$ \\
12: Return $k$ \\


\end{minipage}
\caption{Adversary   $U_1=(U_{11},U_{12})$ is in  
CCA key indistinguishing game of iKEM $\ikem$, and 
$D'=(D_1,D_2)$ is 
the adversary in CCA key indistinguishing game of the combiner}. 
\label{figikemcombcca22}
\end{figure}

In this construction, $U_1$ issues at most as many queries as $D'$.
Now if $U_1$ is run by the game $\pind^{(q_e;q_d)\text{-}cca\text{-}0}_{\ikem,U_1}$, and thus, $k^*_1$ is the actual key output of $\ikeme()$, then $U_1$ simulates the game $\mathrm{G}^{0\text{-}0}_{Comb,D'}$. On the other hand, if $U_1$ is run by the game $\pind^{(q_e;q_d)\text{-}cca\text{-}1}_{\ikem,U_1}$, that is, $k^*_1$ is uniformly sampled, then $U_1$ perfectly simulates the game $\mathrm{G}^{1\text{-}0}_{Comb,D'}$. Therefore, $\pr[\mathrm{G}^{0\text{-}0}_{Comb,D'}(\lambda)=1]=\pr[\pind^{(q_e;q_d)\text{-}cca\text{-}0}_{\ikem,U_1}(\lambda)=1]$ and  $\pr[\mathrm{G}^{1\text{-}0}_{Comb,D'}(\lambda)=1]=\pr[\pind^{(q_e;q_d)\text{-}cca\text{-}1}_{\ikem,U_1}(\lambda)=1]$.

Hence, 
\begin{align}\nonumber
&|\pr[\mathrm{G}^{0\text{-}0}_{Comb,D'}(\lambda)=1] - \pr[\mathrm{G}^{1\text{-}0}_{Comb,D'}(\lambda)=1]| 
\\\nonumber
&= |\pr[\pind^{(q_e;q_d)\text{-}cca\text{-}0}_{\ikem,U_1}(\lambda)=1] 
- \pr[\pind^{(q_e;q_d)\text{-}cca\text{-}1}_{\ikem,U_1}(\lambda)=1]| \\\nonumber
&\leq Adv^{pkind\text{-}(q_e;q_d)\text{-}cca}_{\ikem,U_1}.
\end{align}
\end{proof}

In $\mathrm{G}^{2\text{-}b}_{Comb,D'}$, the output of PRF $F_1$ is replaced by a uniform sample from the output set of the PRF (line 8). This change is also applied to the decapsulation oracle (line 6). 

\begin{clam}
\label{claimptx2}
There exists a computationally unbounded adversary $U_2$ whose advantage, after making at most $q_d+1$ $Eval$ queries, in distinguishing the output of PRF $F_1$ from a uniform sample is $Adv^{(q_d+1)\text{-}PRF}_{F_1,U_2}$ such that 
\begin{equation}
\label{eqnthmcombptx3}
    |\pr[\mathrm{G}^{1\text{-}0}_{Comb,D'}(\lambda)=1] - \pr[\mathrm{G}^{2\text{-}0}_{Comb,D'}(\lambda)=1]| \leq Adv^{(q_d + 1)\text{-}PRF}_{F_1,U_2}.
\end{equation}
\end{clam}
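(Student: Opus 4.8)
The plan is to isolate the single difference between $\mathrm{G}^{1\text{-}b}_{Comb,D'}$ and $\mathrm{G}^{2\text{-}b}_{Comb,D'}$ and charge it to the PRF security of $F_1$. In $\mathrm{G}^{1\text{-}b}$ the key $k^*_1$ is already uniform, and every occurrence of $F_1(k^*_1,\cdot)$ --- the challenge value $y^*_1=F_1(k^*_1,c^*_2)$ together with each decapsulation response $y_1=F_1(k^*_1,c_2)$ arising when $c_1=c^*_1$ --- is a genuine evaluation of $F_1$ under a uniform key, whereas in $\mathrm{G}^{2\text{-}b}$ all of these are replaced by independent uniform samples from $\mc K$. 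This is precisely the gap between the real and random branches of the $F_1$ distinguishing game, so I would build a computationally unbounded distinguisher $U_2$ playing $\mathrm{PRI}^{(q_d+1)\text{-}IND\text{-}b}_{F_1,U_2}$ whose hidden oracle key plays the role of $k^*_1$.

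Concretely, $U_2$ would simulate the combiner game for $D'=(D_1,D_2)$ while generating everything except the $F_1$-under-$k^*_1$ values itself. First it samples $(r_A,r_B,r_E)\stackrel{\$}\gets\ikemg(1^\lambda,P_{\X\Y\Z})$ and $(pk,sk)\stackrel{\$}\gets\gkem(1^\lambda)$, which lets it answer encapsulation queries exactly as $\eckem$ (these use fresh independent iKEM keys and never touch $k^*_1$) and run $\ikemd(r_B,\cdot)$ and $\dkem(sk,\cdot)$ inside the decapsulation oracle. For the challenge it computes $(\cdot,c^*_1)\stackrel{\$}\gets\ikeme(r_A)$ and $(k^*_2,c^*_2)\stackrel{\$}\gets\ekem(pk)$, sets $y^*_1\gets\ms{Eval}(c^*_2)$, forms $k^*\gets y^*_1\oplus F_2(k^*_2,c^*_1)$, and gives $(c^*,k'_b)$ to $D_2$ with $k'_0=k^*$ and $k'_1$ uniform. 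In the decapsulation oracle, whenever $c_1=c^*_1$ (hence $c_2\ne c^*_2$) it replaces $F_1(k^*_1,c_2)$ by $\ms{Eval}(c_2)$, and for $c_1\ne c^*_1$ it uses the real $F_1(k_1,\cdot)$ with the key recovered by $\ikemd(r_B,c_1)$; at the end $U_2$ relays $D_2$'s output. When $\ms{Eval}$ returns true evaluations of a uniform key, identifying that key with $k^*_1$ shows $U_2$ reproduces $\mathrm{G}^{1\text{-}0}_{Comb,D'}$, and when $\ms{Eval}$ returns uniform samples it reproduces $\mathrm{G}^{2\text{-}0}_{Comb,D'}$.

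The step that will need the most care is respecting the no-repeat (abort) condition of the $\ms{Eval}$ oracle together with the query budget. I would argue that $\ms{Eval}$ is invoked once on the challenge input $c^*_2$ and at most once per decapsulation query with $c_1=c^*_1$, hence at most $q_d+1$ times, matching the $(q_d+1)$-query bound; and that all these inputs are distinct, since every decapsulation input satisfies $c_2\ne c^*_2$, while the combiner's cache $Ret[\cdot]$ prevents a repeated full ciphertext $(c^*_1,c_2)$ from triggering a second $\ms{Eval}$ call, and distinct ciphertexts sharing $c_1=c^*_1$ must differ in $c_2$. With no abort and a faithful simulation, the two branches of $U_2$'s game coincide with the two combiner games, yielding
\[
|\pr[\mathrm{G}^{1\text{-}0}_{Comb,D'}(\lambda)=1]-\pr[\mathrm{G}^{2\text{-}0}_{Comb,D'}(\lambda)=1]|=Adv^{(q_d+1)\text{-}PRF}_{F_1,U_2},
\]
which is the claimed inequality.
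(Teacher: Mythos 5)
Your proposal is correct and follows essentially the same reduction as the paper: the paper's adversary $U_2$ likewise simulates the whole combiner game itself, substitutes $\ms{Eval}$ for $F_1(k^*_1,\cdot)$ once on the challenge input $c^*_2$ and once per decapsulation query with $c_1=c^*_1$ (hence at most $q_d+1$ calls), and uses the abort on $c=c^*$ together with the $Ret[\cdot]$ cache to guarantee distinct $\ms{Eval}$ inputs. Your observation that the uniformity of $k^*_1$ in game $\mathrm{G}^{1\text{-}0}$ is what licenses identifying it with the PRF game's hidden key is exactly the (implicit) justification in the paper's argument.
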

\begin{proof} (claim \ref{claimptx2}) We construct the adversary $U_2$ as given in figure \ref{figikemcombcca23}. From line 1 and 2 of the decapsulation oracle, we ensure that the input to $Eval$ is always different. 

\begin{figure}[!ht]
\centering
\begin{minipage}{0.46\textwidth}
\vspace{-9.5em}
{\bf Adversary} $U_2^{Eval}$
\vspace{.4em}
\hrule
\vspace{.4em}
1: $Ret[.] \leftarrow \perp$ \\
2: $(r_A, r_B, r_E) \xleftarrow{\$} \ikemg(P_{\X \Y \Z })$ \\
3: $(s_k, p_k) \xleftarrow{\$} \gkem()$ \\
4: $st_1 \xleftarrow{\$} D^{O_1}_{1}(r_E, p_k)$ \\
5: $(k^*_1, c^*_1) \xleftarrow{\$} \ikeme( r_A)$ \\
6: $(k^*_2, c^*_2) \xleftarrow{\$} \ekem(p_k)$ \\
7: $c^*  \leftarrow (c^*_1, c^*_2)$ \\
8: $y^*_1  \leftarrow  Eval(c^*_2)$  \\
9: $k^*  \leftarrow  y^*_1 \oplus F_2(k^*_2, c^*_1)$ \\
11: $b' \xleftarrow{\$} D^{O_2}_{2}(st_1,c^*,k^*)$ \\ 
12: Return $b'$ \\
\end{minipage}
\hfill
\begin{minipage}{0.46\textwidth}
{\bf Oracle} $\eckem(r_A,pk)$
\vspace{.4em}
\hrule
\vspace{.4em}
1: $(k_{11}, c_{11}) \xleftarrow{\$} \ikeme(r_A)$ \\
2: $(k_{21}, c_{21}) \xleftarrow{\$} \ekem( p_k)$ \\
3: $k  \leftarrow  F_1(k_{11}, c_{21}) \oplus F_2(k_{21}, c_{11})$ \\
4: Return $(k,c_{11}, c_{21})$ \\


{\bf Oracle} $\dckem(r_B,sk,c)$
\vspace{.4em}
\hrule
\vspace{.4em}
1: If $c = c^*$: Abort \\
2: If $Ret[c] \ne \perp:$ Return $Ret[c]$ \\
3: $c_1, c_2  \leftarrow c$ \\
4: If $c_1 = c^*_1$: \\
5: \hspace{.6cm}   $y_1  \leftarrow Eval(c_2)$ \\
::     else \\
6: \hspace{.6cm} $k_1 \xleftarrow{\$}  \ikemd(r_B, c_1)$ \\
7: \hspace{.6cm} If $k_1 =\perp$: Return $\epsilon$ \\
8: \hspace{.6cm} $y_1  \leftarrow F_1(k_1, c_2)$ \\
9: $k_2 \leftarrow  \dkem(s_k, c_2)$ \\
10: If $k_2 =\perp$: Return $\epsilon$ \\
11: $Ret[c]  \leftarrow y_1 \oplus F_2(k_2, c_1)$ \\
12: Return $Ret[c]$ \\


\end{minipage}
\caption{Adversary $U_2$ against distinguishing output of the PRF $F_1$ from a uniform sample.   Adversary $D'=(D_1,D_2)$ is 
the adversary in CCA key indistinguishing game of the combiner. }
\label{figikemcombcca23}
\end{figure}

From the construction of the adversary $U_2$, we observe that $Eval$ is called only once by $U_2$ during generation of the challenge. In addition, for each query to the decapsulation oracle by $D'$, $Eval$ is called at most once by $U_2$. Hence, $U_2$ queries $Eval$ at most $(q_d+1)$ times. Now when $U_2$ is run by the game $\mathrm{PRI}^{(q_d+1)\text{-}IND\text{-}0}_{F_1,U_2}$, $k^*_1$ is the key  generated by the game $\mathrm{PRI}^{(q_d+1)\text{-}IND\text{-}0}_{F_1,U_2}$.
Thus, $U_2$  emulates the game $\mathrm{G}^{1\text{-}0}_{Comb,D'}$. On the other hand, when $U_2$ is run by the game $\mathrm{PRI}^{(q_d+1)\text{-}IND\text{-}1}_{F_1,U_2}$, $Eval$ outputs uniformly sampled value, that is, $y^*_1$ in line 8 of $U_2^{Eval}$ algorithm (and line 5 of decapsulation Oracle queries) is uniformly generated. Hence $U_2$ perfectly simulates the game $\mathrm{G}^{2\text{-}0}_{Comb,D'}$. Therefore, 
\[
\pr[\mathrm{G}^{1\text{-}0}_{Comb,D'}(\lambda)=1]=\pr[\mathrm{PRI}^{(q_d+1)\text{-}IND\text{-}0}_{F_1,U_2}(\lambda)=1]\]
and \[\pr[\mathrm{G}^{2\text{-}0}_{Comb,D'}(\lambda)=1]=\pr[\mathrm{PRI}^{(q_d+1)\text{-}IND\text{-}1}_{F_1,U_2}(\lambda)=1].
\]

Thus, 

{
\begin{eqnarray*}
&&|\pr[\mathrm{G}^{1\text{-}0}_{Comb,D'}(\lambda)=1] - \pr[\mathrm{G}^{2\text{-}0}_{Comb,D'}(\lambda)=1]| \\\nonumber
&& =|\pr[\mathrm{PRI}^{(q_d+1)\text{-}IND\text{-}0}_{F_1,U_2}(\lambda)=1] 
- \pr[\mathrm{PRI}^{(q_d+1)\text{-}IND\text{-}1}_{F_1,U_2}(\lambda)=1]| \\\nonumber
&& 
\leq Adv^{(q_d + 1)\text{-}PRF}_{F_1,U_2}. 
\end{eqnarray*}}

\remove{\begin{align}\nonumber
&|\pr[\mathrm{G}^{1\text{-}0}_{Comb,D'}(\lambda)=1] - \pr[\mathrm{G}^{2\text{-}0}_{Comb,D'}(\lambda)=1]| \\\nonumber
&= |\pr[\mathrm{PRI}^{(q_d+1)\text{-}IND\text{-}0}_{F_1,U_2}(\lambda)=1] \\\nonumber
&\qquad - \pr[\mathrm{PRI}^{(q_d+1)\text{-}IND\text{-}1}_{F_1,U_2}(\lambda)=1]| \\\nonumber
&\leq Adv^{(q_d + 1)\text{-}PRF}_{F_1,U_2}. 
\end{align}}

\end{proof}

In $\mathrm{G}^{3\text{-}b}_{Comb,D'}$, we reverse the modifications of the decapsulation oracle that we introduced in game $\mathrm{G}^{2\text{-}0}_{Comb,D'}$.   Consequently, if an adversary queries its decapsulation oracle on a ciphertext $c$ whose first component is $c_1$, the oracle computes $y_1$ by invoking the function $F_1$ instead of returning a uniformly random value. Then, there exists an adversary $U'_2$ whose advantage in distinguishing the output of PRF $F_1$ from a uniform sample is $Adv^{q_d\text{-}PRF}_{F_1,U'_2}$ such that,
\begin{eqnarray}
\label{eqnthmcombptx4}
    &&|\pr[\mathrm{G}^{2\text{-}0}_{Comb,D'}(\lambda)=1] - \pr[\mathrm{G}^{3\text{-}0}_{Comb,D'}(\lambda)=1]| 
    \leq Adv^{q_d\text{-}PRF}_{F_1,U'_2},
\end{eqnarray}
and $U'_2$ issues at most $q_d$ $Eval$ queries. We can construct such adversary $U'_2$ by replacing line 8 of the adversary $U_2$ in figure \ref{figikemcombcca23} with uniform value ($y^*_1  \leftarrow  \mathcal{K}$). The proof is same as claim \ref{claimptx2}. In this case, as $y^*_1$ is uniform, $U'_2$ calls $Eval$ at most $q_d$ times.

In $\mathrm{G}^{4\text{-}b}_{Comb,D'}$, we reverse the modifications added in the game $\mathrm{G}^{1\text{-}b}_{Comb,D'}$ by replacing the uniform key $k^*_1$ in line 5 of the game in figure \ref{figikemcombcca2} with an actual key output of $\ikeme()$. Then, there exists a computationally unbounded adversary $U'_1$ whose advantage in the CCA distinguishing game of iKEM $\ikem$ with $q_e$ encapsulation and $q_d$ decapsulation queries is $Adv^{pkind\text{-}(q_e;q_d)\text{-}cca}_{\ikem,U'_1}$ such that  
\begin{eqnarray} 
\label{eqnthmcombptx5}
   && |\pr[\mathrm{G}^{3\text{-}0}_{Comb,D'}(\lambda)=1] - \pr[\mathrm{G}^{4\text{-}0}_{Comb,D'}(\lambda)=1]|  
   \leq Adv^{pkind\text{-}(q_e;q_d)\text{-}cca}_{\ikem,U'_1}
\end{eqnarray}

To construct such adversary $U'_1$, we replace line 3 of $U^{O_2}_{12}(st_1,c^*_1,k^*_1)$ algorithm of the adversary $U_1$ in figure~\ref{figikemcombcca22} with $(y^*_1 \xleftarrow{\$} \mathcal{K})$. The proof is similar to claim \ref{claimptx1}.

Finally, we note that since in $\mathrm{G}^{4\text{-}b}_{Comb,D'}$, $y^*_1$ is sampled from uniform distribution, then $k^*$ is uniformly distributed and 
\begin{equation} 
\label{eqnthmcombptx6}
    \pr[\pind^{(q_e;q_d)\text{-}cca\text{-}1}_{Comb,D'}(\lambda)=1]=\pr[\mathrm{G}^{4\text{-}0}_{Comb,D'}(\lambda)=1]
\end{equation}

Now using triangular inequality on inequalities \ref{eqnthmcombptx1} to \ref{eqnthmcombptx6}, we have

{
\begin{eqnarray*}
 Adv^{pkind\text{-}(q_e;q_d)\text{-}cca}_{Comb^{PtX}_{\ikem,\kem},D'} (\lambda)  
 &&=|\pr[\pind^{(q_e;q_d)\text{-}cca\text{-}0}_{Comb,D'}(\lambda)=1] - \pr[\pind^{(q_e;q_d)\text{-}cca\text{-}1}_{Comb,D'}(\lambda)=1]| 
\\
&&\leq 2(Adv^{pkind\text{-}(q_e;q_d)\text{-}cca}_{
\ikem,U_1}(\lambda) + Adv^{(q_d+1)\text{-}PRF}_{F_1,U_2}(\lambda)). 
\end{eqnarray*}} $\qed$


\section{Proof of Theorem~\ref{thm:ceaconst1}}\label{pf:thm:ceaconst1}
We need to prove that the construction~\ref{ikem:cea} satisfies definition~\ref{def:gkem} for chosen encapsulation attack (CEA) security. In response to an encapsulation query, the oracle returns a key $k$ and a ciphertext $c$ to the adversary. Let, after $q_e$ queries, the adversary's received responses be the vector ${\mathbf{w}}^{q_e\text{-}cea}=(w_1^{cea},\cdots,w_{q_e}^{cea})$, where $w_i^{cea}=(k_i,c_i), \forall i \in \{1,\cdots,q_e\}$. The remaining entropy about $\X $ that can be used to extract the secret key is $\tilde{H}_\infty(\X |\Z , {\mathbf{W}}^{q_e\text{-}cea}={\mathbf{w}}^{q_e\text{-}cea})$, where $\Z $ corresponds to $\z $, the attacker's initial information. Now the $i$-th query's response to the adversary is $w_i^{cea}=(k_i,c_i)$, where $c_i=\big(h(\x ,s),s'_i\big)$ and $k_i=h'(\x ,s'_i)$. For the $i$-th response, the RVs $K_i$ and $C_i$ are distributed over $\{0,1\}^\ell$ and $\{0,1\}^t$ respectively. Now using \cite[Lemma 2.2(b)]{DodisORS08}, for RVs $K_i$ and $C_i$, we have $\tilde{H}_\infty(\X |\Z ,\mathbf{W}^{cea}_i)=\tilde{H}_\infty(\X |\Z ,K_i,C_i) \ge \tilde{H}_\infty(\X |\Z ) - \ell - t$. Since $h(\x ,s)$ remains the same in all $q_e$ responses and the challenge, after $q_e$ encapsulation queries, from \cite[Lemma 2.2(b)]{DodisORS08}, we have
{
\begin{align}\label{equn:ceaconst1}\nonumber
    \tilde{H}_\infty(\X |\Z ,\mathbf{W}^{q_e\text{-}cea}) &= \tilde{H}_\infty\left(\X |\Z ,(\mathbf{W}^{cea}_1,\cdots,\mathbf{W}^{cea}_{q_e})\right) \\
    &\ge \tilde{H}_\infty(\X |\Z ) - t - q_e \cdot \ell
\end{align}     
}     

Now since $\tilde{H}_{\infty}(\X |Z^*,h\left(\X ,( S', S)\right)) \ge \tilde{H}_{\infty}(\X |Z^*) - t$, from Lemma~\ref{glhl}, we have 
{
\begin{align}\nonumber
&\Delta\Big(h'(\X , S'), h\left(\X ,( S', S)\right), S', S, Z^*; U_\ell, h\left(\X ,(S', S)\right), S', S, Z^*\Big) \\
&\le \frac{1}{2}\sqrt{2^{-\tilde{H}_{\infty}(\X |Z^{*},h\left(\X ,( S', S)\right))}\cdot 2^
{\ell}} \le \frac{1}{2}\sqrt{2^{-\tilde{H}_{\infty}(\X |Z^*)}\cdot 2^
{\ell + t}} \label{equn4:ceaconst14}. 
\end{align}     
}

Therefore, from equation~\ref{equn:ceaconst1} and noting that $h(\x ,s)$ remains the same in all $q_e$ responses and the challenge, putting $Z^*=(\Z ,\mathbf{W}^{q_e\text{-}cea})$ in inequality~\ref{equn4:ceaconst14}, we have 
{
\begin{align}\nonumber
&\Delta\Big(h'(\X , S'), h(\X , S), S', S, \Z , \mathbf{W}^{q_e\text{-}cea} ;
U_\ell, h(\X , S), S', S, \Z , \mathbf{W}^{q_e\text{-}cea}\Big) \\\nonumber 
&\le \frac{1}{2}\sqrt{2^{(q_e + 1) \ell + t - \tilde{H}_{\infty}(\X |\Z )}} \\\label{eqn:ceaproof} 
&=\frac{1}{2}\sqrt{2^{(q_e + 1) \ell + t - n\tilde{H}_{\infty}(X |Z )}} \\\label{eqn:ceaproofsecurity} 
&\le \sigma
\end{align}     
}

In above, step~\ref{eqn:ceaproof} uses $\tilde{H}_\infty({\X }|{\Z })=n\tilde{H}_\infty(X|Z)$ in Lemma~\ref{lemma:minentropy},  
and the last step~\ref{eqn:ceaproofsecurity}  holds 
if $\ell \le \frac{n\tilde{H}_{\infty}(X |Z ) + 2\log(\sigma) + 2 - t}{q_e+1}$. 
\remove{Therefore, for $\ell \le \frac{n\tilde{H}_{\infty}(X |Z ) + 2\log(\sigma) + 2 - t}{q_e+1}$,
}
 To complete the proof, we use 
\cite[Lemma 1]{sharifian2021information} and ~\cite[Proposition 5.9]{Bellare2005}
,  that relates the statistical distance to
$ Adv^{pkind\text{-}cea}_{\pk,\msa}(\lambda)   $,  
 concluding that the extracted key is $2\sigma$-indistinguishable from random, and we have $2\sigma$-IND-$q_e$-CEA security. $\qed$

\section{Proof of Theorem~\ref{Thm:ikemotsecurity}}\label{pf:Thm:ikemotsecurity}
{\it Correctness (reliability).} We first determine the value of $\nu$ and $t$, and then  compute the extracted secret key length $\ell$. In the decapsulation algorithm $\ikemd(\cdot)$, Bob searches the set $\mathcal{R}$ for $\hat{\x }$ whose hash value matches with the received hash value $v$ and checks whether a unique such $\hat{\x }$ is found. It declares success if a unique $\hat{\x }$ is found in the set $\mathcal{R}$ with such required property. Therefore, the algorithm fails if one of these two events occurs: $(i)$ there is no element $\x $ in the set  $\mathcal{R}$ such that its hash value matches with the received hash value i.e. $\x $ is not in the set $R$, $(ii)$ there are more than one element in the set $R$, whose hash values are equal to the received hash value $v$. Hence, the probability that Bob fails to recover the correct key is upper bounded by the sum of the probabilities of these two events. These two cases corresponds to the events: 
\begin{align} \nonumber
 & \mathcal{E}_{1} = \{\x : \x  \notin \mathcal{R}\}=\{\x : -\log(P_{\X |\Y }(\x |\y )) > \nu\} \text{ and } \\\nonumber
 &\mathcal{E}_2 = \{\x  \in \mathcal{R}: \exists \text{ } \hat{\x } \in \mathcal{R} \text{ s.t. }  h(\x , (s',s)) = h(\hat{\x }, (s',s)\}.
 \end{align}
 
 For any $\epsilon > 0$, choose $\epsilon_1 >0$ and $\epsilon_2 >0 $ such that $\epsilon_1 + \epsilon_2 \le \epsilon$. Let $\epsilon_1 = 2^{\frac{-n{\delta_1}^2}{2\log^2(|\mathcal{X}|+3)}}$
 and $\nu = H(\X |\Y ) + n\delta_1$.
 Now, $\mathsf{Pr}(\mathcal{E}_{1})={\mathsf{Pr}}\Big(-\log(P_{\X |\Y }(\x |\y )) > H(\X |\Y ) + n\delta_1\Big) \le \epsilon_1$ (from~\cite{Holenstein11}, Theorem 2). 
 To bound $\mathsf{Pr}(\mathcal{E}_{2})$, note that since $h$ is a universal  hash family   with input space $\mathcal{X}^n$ and seed space $(\mathcal{S'} \times \mathcal{S})$, 
 for any $\x ,\hat{\x } \in \mathcal{R}$, $\x  \ne \hat{\x }$, $s' \in \mathcal{S'}$ and randomly chosen $s \in \mathcal{S}$, we have $\mathsf{Pr}\left(h(\x ,( s',s))=h(\hat{\x },( s',s))\right) \le 2^{-t}$, where probability is over the random choices $(s',s)$ from $(\mathcal{S'}\times \mathcal{S})$. Thus, $\mathsf{Pr}(\mathcal{E}_2) \le 
 |\mathcal{R}|\cdot 2^{-t}$. 
Equation~\ref{reconset} implies that the probability of each element of $\mathcal{R}$ is lower bounded by $2^{-\nu}$.  Therefore, using equation ~\ref{reconset} and noting that the sum of probability of elements of $\mathcal{R}$ is less than or equal to 1, we have $\frac{|\mathcal{R}|}{2^\nu} \le \mathsf{Pr}(\mathcal{R}) \le 1 \Rightarrow |\mathcal{R}| \le 2^\nu$. Thus, $\mathsf{Pr}(\mathcal{E}_2) \le 
|\mathcal{R}|\cdot 2^{-t} \le 2^{\nu-t}.$ 
Let $t=\nu -\log(\epsilon_2)$, then we have  $\mathsf{Pr}(\mathcal{E}_2) \le \epsilon_2$. Therefore, for $t=H(\X |\Y ) + n\delta_1 -\log(\epsilon_2)$, the probability that Bob fails to recover the correct key is less than or equal to $\mathsf{Pr}(\mathcal{E}_1) + \mathsf{Pr}(\mathcal{E}_2) \le \epsilon_1 + \epsilon_2 = \epsilon$. 
 Moreover, since $\X , \Y $ are generated due to $n$ independent and identical experiments $P_{X_i Y_i Z_i}=P_{X Y Z}$ for all $ i \in \{1, \cdots,n \}$, we have $H(\X |\Y )=nH(X |Y )$.
 Finally, by choosing $\epsilon_1=(\sqrt{n}-1)\epsilon/\sqrt{n}$ and $\epsilon_2=\epsilon/\sqrt{n}$, we conclude that if $\nu = nH(X |Y ) + \sqrt{n} \log (|\mathcal{X}|+3) \sqrt{\log (\frac{\sqrt{n}}{(\sqrt{n}-1)\epsilon})}$ and \\$t \ge nH(X |Y ) + \sqrt{n} \log (|\mathcal{X}|+3) \sqrt{\log (\frac{\sqrt{n}}{(\sqrt{n}-1)\epsilon})} + \log (\frac{\sqrt{n}}{\epsilon})$, then $\mathsf{Pr}(\mathcal{E}_1) + \mathsf{Pr}(\mathcal{E}_2) \le \epsilon$. Thus, the construction~\ref{ikem:cca} is $\epsilon$-correct, and the reliability condition is satisfied.

{\it Security.} To prove chosen encapsulation attack (CEA) security, we need to prove that the construction~\ref{ikem:cca} satisfies definition~\ref{def:gkem}. In response to an encapsulation query, the encapsulation oracle returns a pair of key and ciphertext $(k,c)$ to the adversary. Let the adversary's received responses to its $q_e$ encapsulation queries be the vector ${\mathbf{w}}^{q_e\text{-}cea}=(w_1^{cea},\cdots,w_{q_e}^{cea})$, where $w_i^{cea}=(k_i,c_i), \forall i \in \{1,\cdots,q_e\}$. The remaining entropy about $\X $ is $\tilde{H}_\infty(\X |\Z , {\mathbf{W}}^{q_e\text{-}cea}={\mathbf{w}}^{q_e\text{-}cea})$, where $\Z $ corresponds to $\z $, the attacker's initial information. This remaining entropy about $\X $ is used to extract the key. Now consider the $i$-th query's response $w_i^{cea}=(k_i,c_i)$, where $c_i=\Big(h\big(\x , ( s'_i,s_i)\big),s'_i,s_i\Big)$ and $k_i=h'(\x ,s'_i)$. For the $i$-th response, the RVs $K_i$ and $C_i$ are distributed over $\{0,1\}^\ell$ and $\{0,1\}^t$ respectively. Using \cite[Lemma 2.2(b)]{DodisORS08}, for RVs $K_i$ and $C_i$ and noting that $s'_i,s_i$ are randomly chosen and independent of RV $\X $, we have $\tilde{H}_\infty(\X |\Z ,\mathbf{W}^{cea}_i)=\tilde{H}_\infty(\X |\Z ,K_i,C_i) =\tilde{H}_\infty\big(\X |\Z ,K_i,h\big(\X , ( S'_i,S_i)\big)\big)\ge \tilde{H}_\infty(\X |\Z ) - \ell - t$. Therefore, after $q_e$ encapsulation queries, from \cite[Lemma 2.2(b)]{DodisORS08}, we have
{
\begin{align}\label{equn:ceaconst2}\nonumber
    \tilde{H}_\infty(\X |\Z ,\mathbf{W}^{q_e\text{-}cea}) &= \tilde{H}_\infty\left(\X |\Z ,(\mathbf{W}^{cea}_1,\cdots,\mathbf{W}^{cea}_{q_e})\right) \\
    &\ge \tilde{H}_\infty(\X |\Z ) - q_e (t + \ell)
\end{align}     
}

Now since $\tilde{H}_{\infty}(\X |Z^*,h\left(\X ,( S', S)\right)) \ge \tilde{H}_{\infty}(\X |Z^*) - t$, from Lemma~\ref{glhl}, we have 
{
\begin{align}\nonumber
&\Delta\Big(h'(\X , S'), h\left(\X ,( S', S)\right), S', S, Z^*; U_\ell, h\left(\X ,(S', S)\right), S', S, Z^*\Big) \\
&\le \frac{1}{2}\sqrt{2^{-\tilde{H}_{\infty}(\X |Z^{*},h\left(\X ,( S', S)\right))}\cdot 2^
{\ell}} \le \frac{1}{2}\sqrt{2^{-\tilde{H}_{\infty}(\X |Z^*)}\cdot 2^
{\ell + t}} \label{equn3:ceaconst23}. 
\end{align}     
}
Therefore, from inequality~\ref{equn:ceaconst2} and putting $Z^*=(\Z ,\mathbf{W}^{q_e\text{-}cea})$ in inequality~\ref{equn3:ceaconst23}, we have 
{
\begin{align}\nonumber
&\Delta\Big(h'(\X , S'), h\left(\X , (S', S)\right), S', S, \Z , \mathbf{W}^{q_e\text{-}cea} ; 
U_\ell, h\left(\X ,(S', S)\right), S', S, \Z , \mathbf{W}^{q_e\text{-}cea}\Big) \\\nonumber 
&\le \frac{1}{2}\sqrt{2^{-\left(\tilde{H}_{\infty}(\X |\Z )- q_e (t + \ell)\right)}\cdot 2^{\ell + t}} \\\nonumber
&=  \frac{1}{2}\sqrt{2^{(q_e + 1)(t + \ell)-\tilde{H}_{\infty}(\X |\Z )}} \\\label{eqn:ccaproof}
&=\frac{1}{2}\sqrt{2^{(q_e + 1)(t + \ell)-n\tilde{H}_{\infty}(X |Z )}} \\\label{eqn:ccaproofsec}
& \le \sigma
\end{align}     
}

The equality~\ref{eqn:ccaproof} follows from Lemma~\ref{lemma:minentropy} that proves $\tilde{H}_\infty({\X }|{\Z })=n\tilde{H}_\infty(X|Z)$. The  inequality~\ref{eqn:ccaproofsec} holds if \\ $\ell \le \frac{n\tilde{H}_{\infty}(X |Z ) + 2\log(\sigma) + 2}{q_e+1} - t$. To complete the proof, we use 
\cite[Lemma 1]{sharifian2021information} and ~\cite[Proposition 5.9]{Bellare2005}
,  that relates the statistical distance to
$ Adv^{pkind\text{-}cea}_{\pk,\msa}(\lambda)   $, concluding that the extracted key is $2\sigma$-indistinguishable from random, and  we have $2\sigma$-IND-$q_e$-CEA security. $\qed$

\section{Proof of Lemma~\ref{lemma:uhf}.}\label{appn:uhfproof}
\begin{proof}
 We show that $h$ satisfies Definition~\ref{defn:uhf}. 
    Let $\x$ and $\y$ be  
    such 
    that $\x \ne \y$.  We need to show that \\$\Pr[h(\x,(S',S))=h(\y,(S',S))] \le \frac{1}{2^t}$, where the probability is over the uniformly random choices of $(\mathcal{S'} \times \mathcal{S})$, $\mathcal{S'}=GF(2^w)$ and $\mathcal{S}=GF(2^{n-t})\times GF(2^t)$. Note that $s=(s_2,s_1)$ with $s_2 \in GF(2^{n-t})$ and $s_1 \in GF(2^t)$.

    Since $\x \ne \y$, we have $ (\x_2 \parallel \x_1) \ne (\y_2 \parallel \y_1)$.

    {\bf Case 1.} Let $\x_1 \ne \y_1$. 
     For fixed values of $s'=(s'_1,\cdots,s'_r) \in (GF(2^{n-t}))^r$ and $s_2 \in GF(2^{n-t})$,  there is a unique value of $s_1 $ for which we have,
\begin{align} \nonumber
 &\big[(\x_{2})^{r+3} + {\sum}_{i=1}^r s'_i(\x_{2})^{i+1} +s_2 \x_{2}\big]_{1\cdots t}+ (\x_{1})^{3} + s_1 \x_{1} 
 =\big[(\y_{2})^{r+3} + {\sum}_{i=1}^r s'_i(\y_{2})^{i+1} +s_2 \y_{2}\big]_{1\cdots t}+ (\y_{1})^{3} + s_1 \y_{1} \\\label{eqn:uhfp}
 &\Leftrightarrow s_1 (\x_{1} - \y_1) = \big[(\y_{2})^{r+3} + {\sum}_{i=1}^r s'_i(\y_{2})^{i+1} +s_2 \y_{2}\big]_{1\cdots t}
+ (\y_{1})^{3} - 
\big[(\x_{2})^{r+3} + {\sum}_{i=1}^r s'_i(\x_{2})^{i+1} +s_2 \x_{2}\big]_{1\cdots t} 
- (\x_{1})^{3} 
 \end{align} 
 \remove{This implies that there is exactly one value of $s_1$ for which the above equation~\ref{eqn:uhfp} holds true, namely 
\begin{align}\nonumber
    &s_1  = \Bigg[\big[(\y_{2})^{r+3} + {\sum}_{i=1}^r s'_i(\y_{2})^{i+1} +s_2 \y_{2}\big]_{1\cdots t} + (\y_{1})^{3} - \\\nonumber
&\qquad\quad \big[(\x_{2})^{r+3} + {\sum}_{i=1}^r s'_i(\x_{2})^{i+1} +s_2 \x_{2}\big]_{1\cdots t}  \\\label{eqn:uhfp1}
&\qquad\quad- (\x_{1})^{3}\Bigg](\x_{1} - \y_1)^{-1}
\end{align}
 
}


Therefore, for a random choice of $(s',s)$, we have that  $\Pr[h(\x,(S',S))=h(\y,(S',S))]$   is given by $\frac{1}{2^t}$.\\

 {\bf Case 2.} Let $\x_2 \ne \y_2$. 
 
 For fixed values of $s'=(s'_1,\cdots,s'_r) \in (GF(2^{n-t}))^r$ and $s_1 \in GF(2^{t})$,  there is a unique value of $\big[s_2 (\x_{2} - \y_2)\big]_{1\cdots t} $ for which we have,
 %
\begin{align} \nonumber
 &\big[(\x_{2})^{r+3} + {\sum}_{i=1}^r s'_i(\x_{2})^{i+1} +s_2 \x_{2}\big]_{1\cdots t}+ (\x_{1})^{3} + s_1 \x_{1} 
 =\big[(\y_{2})^{r+3} + {\sum}_{i=1}^r s'_i(\y_{2})^{i+1} +s_2 \y_{2}\big]_{1\cdots t}+ (\y_{1})^{3} + s_1 \y_{1} \\\nonumber
 &\Leftrightarrow \big[s_2 (\x_{2} - \y_2)\big]_{1\cdots t} = \big[(\y_{2})^{r+3} + {\sum}_{i=1}^r s'_i(\y_{2})^{i+1} \big]_{1\cdots t}
+ (\y_{1})^{3} + s_1 \y_{1} -  
\\\label{eqn:uhfp2}
&\qquad\qquad\qquad\qquad\qquad   \big[(\x_{2})^{r+3} + {\sum}_{i=1}^r s'_i(\x_{2})^{i+1} \big]_{1\cdots t} 
- (\x_{1})^{3} -  s_1 \x_{1}
 \end{align}

 For every $\big[s_2 (\x_{2} - \y_2)\big]_{1\cdots t} $, there are $2^{n-2t}$ values of $s_2 (\x_{2} - \y_2)$, where each,  for  fixed $(\x_{2} - \y_2)$, determines a single value for $s_2$. 
Thus there are  exactly $2^{n-2t}$ values of $s_2$ for which the above equation~\ref{eqn:uhfp2} holds true. 

 Thus, for a random choice of $(s',s)$ the probability of collision in this case is, 
 exactly $\frac{2^{n-2t}}{2^{n-t}}=\frac{1}{2^t}$. 


 Therefore, $h$ is a universal hash family. 
\end{proof}


\section{B\'{e}zout's  theorem~\cite{Coolidge1959,BEZOUTtheorem}}
\label{appn:bezout}
{\bf B\'{e}zout's  Theorem~\cite{Coolidge1959,BEZOUTtheorem}.} In general, two algebraic curves of degree $m$ and $n$ can intersect in $m\cdot n$ points and cannot meet in more than $m \cdot n$ points unless they have a common factor (i.e. the two equations have a common factor).

Moreover, $N$ polynomial equations of degrees $n_1,n_2,\cdots,n_N$ in $N$ variables have in general  $n_1n_2\cdots n_N$ common solutions.

\section{CEA secure iKEM protocol of Sharifian et al.~\cite{sharifian2021information}}\label{app:setalgotheo}

\begin{definition}[strongly universal  hash family]\label{defn:suhash}
A family of hash functions $h: \mathcal{X} \times \mathcal{S} \rightarrow \mathcal{Y}$ is called a strongly universal  hash family if for all $x \ne y$, and any $a,b \in \mathcal{Y}$, $\pr[h(x,S)=a \wedge h(y,S)=b]=\frac{1}{|\mathcal{Y}|^2}$, where the probability is  over the uniform choices over $\mathcal{S}$.  
\end{definition}

We briefly recall the construction of CEA secure iKEM protocol due to Sharifian et al.~\cite{sharifian2021information}. .  \begin{construction}
 The iKEM $iK_{OWSWA}$'s three algorithms $(Gen,Encap,Decap)$ are as follows:    
 The protocol is designed for preprocessing model in which Alice, Bob and Eve have $n$ components of the source $(\X ,\Y , \Z )$ respectively according to a distribution $P_{\X \Y \Z}$. 
 The protocol uses two strongly universal  hash families: $h: \mathcal{X}^n \times \mathcal{S} \to \{0,1\}^t$ and $h': \mathcal{X}^n \times \mathcal{S'} \to \{0,1\}^\ell$. $\mathcal{C}=\{0,1\}^t \times \mathcal{S'} \times{S}$ and $\mathcal{K}=\{0,1\}^{\ell}$ denote the ciphertext space and key space respectively. 
 \begin{enumerate}
     \item  $Gen(P_{\X \Y \Z})$. A trusted sample samples the distribution $P_{\X \Y \Z}$  independently $n$ times and gives $\x$, $\y$ and $\z$ privately to Alice, Bob and Eve respectively.
     \item $Encap(\x)$. The encapsulation algorithm takes Alice's private input $\x$, randomly sample the seeds $s' \xleftarrow{\$}\mathcal{S'}$ and $s \xleftarrow{\$}\mathcal{S}$ for two strongly universal  hash families $h'$ and $h$ respectively. It generates the key $k=h'(\x,s')$ and the ciphertext $c=(h(\x,s),s',s)$.
     \item $Decap(\y,c)$. The decapsulation algorithm takes Bob's private key $\y$ and the ciphertext $c$. It parses $c$ as $(g,s',s)$, where $g$ is a $t$-bit string. It defines a set $\mathcal{T(\X|\y)}=\{\x :-\log(P_{\X |\Y }(\x |\y )) \le \nu \} $, and for each vector $\hat{\x} \in \mathcal{T(\X|\y)}$ checks whether $g=h(\hat{\x},s)$. The decapsulation algorithm outputs the key $h'(\hat{\x},s')$ if there is a unique $\hat{\x}$  that satisfies $g=h(\hat{\x},s)$; otherwise,  it outputs $\perp$.
 \end{enumerate}
\end{construction}

\end{document}